\documentclass[journal]{IEEEtran}
\usepackage{lineno}
\usepackage{float}
\usepackage{cite}
\usepackage{hyperref}
\usepackage{amsmath,amssymb,amsfonts}
\usepackage{amsthm}
\DeclareMathOperator*{\argmax}{argmax}
\usepackage{graphicx}
\usepackage{textcomp}
\usepackage{xcolor}
\usepackage{graphicx}
\usepackage{subfigure}
\usepackage{amsmath}
\usepackage{amsfonts,amssymb}
\usepackage{mathrsfs}
\usepackage{mathtools}
\usepackage{algorithm}
\usepackage{algorithmicx}
\usepackage{algpseudocode}
\usepackage{bm}
\usepackage{multirow}
\usepackage{array}
\usepackage{url}
\usepackage{fancyhdr}
\usepackage{mdwmath}
\usepackage{mdwtab}
\usepackage{caption}
\usepackage{setspace}
\usepackage{bm}
\usepackage{algpseudocode}
\usepackage{mathtools}
\usepackage{dsfont}
\usepackage{bbm}
\usepackage{cases}
\newtheorem{remark}{Remark}
\theoremstyle{definition}
\newtheorem{theorem}{Theorem}

\newtheorem{lemma}{Lemma}

\newtheorem{corollary}{Corollary}

\makeatletter
\newcommand{\biggg}{\bBigg@{3}}
\newcommand{\Biggg}{\bBigg@{3.5}}
\makeatother
\expandafter\def\expandafter\normalsize\expandafter{%
    \normalsize%
    \setlength\abovedisplayskip{4pt}%
    \setlength\belowdisplayskip{4pt}%
    \setlength\abovedisplayshortskip{2pt}%
    \setlength\belowdisplayshortskip{2pt}%
}
\begin{document}

\title{Continuous-Aperture Array-Based ISAC \\Over Fading Channels}

\author{Boqun Zhao, \IEEEmembership{Graduate Student Member, IEEE}, Chongjun Ouyang, \IEEEmembership{Member, IEEE},\\Xingqi Zhang, \IEEEmembership{Senior Member, IEEE}, and Yuanwei Liu, \IEEEmembership{Fellow, IEEE}\vspace{-10pt}
	\thanks{B. Zhao and X. Zhang are with the Department of Electrical and Computer Engineering, University of Alberta, Edmonton AB, T6G 2R3, Canada (email: \{boqun1, xingqi.zhang\}@ualberta.ca).}
	\thanks{C. Ouyang is with the School of Electronic Engineering and Computer Science, Queen Mary University of London, London, E1 4NS, U.K. (e-mail: c.ouyang@qmul.ac.uk).}
	\thanks{Y. Liu is with the Department of Electrical and Electronic Engineering, The University of Hong Kong, Hong Kong, and also with the Department of Electronic Engineering, Kyung Hee University, Yongin-si, Gyeonggi-do 17104, South Korea (e-mail: yuanwei@hku.hk).}
}

\maketitle

\begin{abstract}
A framework of continuous-aperture array (CAPA)-based integrated sensing and communications (ISAC) under a fading communication channel is proposed. A continuous operator-based signal model is developed, and the statistics of the communication channel gain are characterized via Landau's eigenvalue theorem. On this basis, the performance of the CAPA-based ISAC system is analyzed by considering three continuous beamforming designs: i) the sensing-centric (S-C) design that optimizes sensing performance, ii) the communication-centric (C-C) design that optimizes communication performance, and iii) the Pareto-optimal design that balances the sensing-communication trade-off. For the S-C and C-C design, closed-form expressions for the sensing rate (SR), ergodic communication rate (CR), and outage probability are derived, and high-signal-to-noise ratio asymptotic analysis is conducted to obtain the multiplexing and diversity gains. For the Pareto-optimal design, the Pareto-optimal beamformer achieving the Pareto boundary is derived, and the achievable SR-CR region is characterized. Numerical results demonstrate that the proposed CAPA-ISAC scheme outperforms both conventional spatially discrete arrays-based ISAC and CAPA-based frequency-division sensing and communications.
\end{abstract} 
\begin{IEEEkeywords}
Continuous-aperture array (CAPA), fading channels, integrated sensing and communications (ISAC), Pareto optimality, performance analysis.
\end{IEEEkeywords}
\section{Introduction}

Integrated sensing and communications (ISAC) has emerged as a key enabler for future wireless networks, driven by the need to reuse spectrum and hardware while supporting perception-centric applications such as localization, tracking, and environment mapping \cite{liu2022integrated}. By allowing sensing and communications to share waveform, spectrum, and infrastructure, ISAC promises improved resource efficiency and new system-level synergies, but also introduces intrinsic conflicts between the two functionalities in waveform/beamforming design and performance evaluation. A growing body of work has therefore focused on establishing performance metrics, fundamental trade-offs, and design principles for dual-functional systems, ranging from information-theoretic formulations to signal processing–oriented transceiver optimization \cite{LiuAn,liu2022integrated}. In particular, mutual-information (MI)-based metrics provide a unified way to quantify both sensing and communication capabilities and to characterize their trade-off under a common information measure \cite{ouyang2023integrated,bcrb}.

{Among the key research directions in ISAC, multiple-antenna technologies have received significant attention because they provide spatial degrees of freedom (DoFs) and array gains for both communication and sensing, thereby improving throughput and sensing resolution \cite{liu2022integrated}. Recent studies have further explored flexible array architectures \cite{shao2025sixd,li2026reconfig}, which exploit spatial reconfigurability to enhance ISAC performance. Despite their different implementations, these architectures follow a common evolutionary trend toward larger apertures, denser antenna deployments, higher operating frequencies, and more flexible structures. This trend leads naturally to an (approximately) continuous electromagnetic (EM) aperture, referred to as a continuous-aperture array (CAPA), whose radiating structure is modeled by a continuous current distribution over a spatial aperture \cite{liu2024capa}.}

{While the CAPA relies on an idealized continuous-current model, this abstraction is grounded in realizable hardware. Functionally, a CAPA is an analog EM-beamforming architecture that synthesizes a continuous source current at the radio-frequency front end. It also represents the continuous limit of a densely-packed array as the inter-element spacing vanishes. The pursuit of such continuous apertures traces back to the infinite-current-sheet concept of the 1960s \cite{wheeler1965}, while modern implementations rely on two key capabilities. First, per-point control of the radiated amplitude and phase has been enabled by reconfigurable metasurfaces, including holographic \cite{huang2020holographic}, dynamic \cite{shlezinger2021dynamic}, and waveguide-fed \cite{smith2017analysis} designs. Second, near-spatial continuity has been approached by leaky-wave \cite{araghi2021holographic}, interdigital-transducer \cite{yuan2024interdigital}, and tightly-coupled \cite{prather2017optically} structures that densely approximate a continuous current sheet. These advances have led to functional and even commercially deployed prototypes \cite{sazegar2022full,pivotal2019holographic}, while emerging tri-hybrid architectures combine digital, analog, and EM beamforming to approach an ideal CAPA \cite{heath2025trihybrid}; see \cite{liu2024capa} for a comprehensive overview. The idealized CAPA therefore characterizes the fundamental performance limit, i.e., an upper bound, for these dense and (semi-)continuous apertures, and the optimized current distributions can be spatially sampled to guide the design of practical ultra-dense arrays. }

The continuous viewpoint introduced by CAPA reshapes both the modeling and the optimization: instead of finite-dimensional beamforming vectors used in spatially discrete arrays (SPDAs), CAPA transmit/receive processing is more naturally described as functional beamforming defined over a continuous spatial domain. Consequently, the fundamental spatial dimensionality is governed by signal-space arguments and the spectral properties of aperture-domain operators, which connect CAPA performance limits to spatial DoFs characterized by operator spectra and spatial bandwidth notions \cite{poon2005degrees,pizzo2022nyquist,dardari}. This operator-based perspective also opens the door to analyzing CAPA systems under physically grounded stochastic channel models, where the channel can be represented as an EM random field \cite{pizzo2022spatial}. These features make CAPA a natural candidate for ISAC, where fine spatial focusing and controllable beampatterns are highly valuable for both communication coverage and sensing resolution.

ISAC has been extensively studied in terms of waveform/precoder design, performance trade-offs, and system architectures, as summarized in major surveys and tutorials \cite{liu2022integrated,LiuAn}. In particular, MI-based sensing has been used to establish information-theoretic limits and to connect sensing quality directly to waveform/beamformer design under a unified information measure \cite{ouyang2023integrated,bcrb}. These studies provide important foundations for joint design, but many of them adopt finite-dimensional discrete-array models, which cannot directly carry over to continuous-aperture architectures.

Recent studies on CAPAs establish EM-consistent channel models and continuous-domain transmission frameworks in which beamforming becomes a functional design and performance is governed by the spectra of spatial correlation operators. In particular, fundamental-limit work on continuous-aperture surfaces provides useful physical insights and DoF-type interpretations that motivate continuous-domain formulations \cite{dardari,per3}. Further, EM-consistent random channel characterizations provide a principled statistical basis for continuous-aperture links and naturally lead to operator-based correlation descriptions \cite{pizzo2022spatial}. Building on these foundations, capacities for CAPA communications have been characterized using operator-based formulations \cite{zhao2024continuous}, and multipath-induced spatial fading models and associated performance metrics (e.g., diversity and multiplexing gains) have also been investigated \cite{ouyang_diversity}. Moreover, a growing body of work studies CAPA beamforming optimization. However, these works provide useful tools solely for CAPA-based communications \cite{opt2,opt3,opt4,opt5,qian,chen2025implicit} or sensing \cite{chen2023cramer,chen2024near,jiang2024cram}, but do not directly resolve the coupled sensing–communication design.

Initial CAPA-ISAC works have started to characterize the system performance and investigate sensing-communications trade-offs in the continuous-aperture setting \cite{zhao2025downlink,zhangyue}. However, existing CAPA-ISAC studies predominantly consider deterministic line-of-sight (LoS) channels, which may be overly idealised and not reflect practical conditions. In contrast, CAPA-ISAC over fading channels remains largely unexplored. In realistic environments, multipath scattering and small-scale fading are ubiquitous and can fundamentally affect reliability (outage) and average performance (ergodic rate). Therefore, there remains a clear need for a unified and tractable analytical framework for CAPA-based ISAC over fading channels.

However, the extension from LoS to non-LoS models is nontrivial because it requires handling operator-based random channels. Moreover, EM randomness induced by scattering makes performance evaluation analytically challenging. In particular, under multipath scattering, the effective channel gain becomes a functional of a correlated continuous random field over the aperture, for which a closed-form distribution is generally unavailable. This prevents rigorous derivation of performance metrics, e.g., ergodic communication rate (ECR) and outage probability (OP), and tractable sensing–communication trade-off characterizations for CAPA-ISAC.

Motivated by the above gaps and challenges, this paper develops a unified modeling and analysis framework for CAPA-based ISAC. Specifically, the sensing link is modeled via spherical-wave propagation and a Swerling-I radar cross-section (RCS) model, while the communication link is modeled as an EM multipath channel induced by scatterers, leading to a continuous random-field representation over the transmit aperture. Our main contributions are summarized as follows:

\begin{itemize}
	\item We establish a continuous-aperture signal model for dual-functional sensing and
	communications (DFSAC) transmission where the beamformer is a normalized source-current field over the transmit aperture. MI-based sensing and communication performance metrics, i.e., sensing rate (SR) and communication rate (CR), are explicitly expressed as functionals of the continuous beamformer.

	\item By characterizing the autocorrelation structure of the aperture-domain fading channel, we connect the eigenvalue behaviour of the induced $\mathrm{sinc}$ kernel operator to the effective spatial DoFs. Using Landau’s eigenvalue theorem, we show an eigenvalue “polarization” phenomenon governed by DoFs, which enables accurate finite-sum approximations for the continuous channel gain distribution. This yields an analytically tractable pathway to evaluate ergodic and outage metrics in CAPA fading channels.
	
	\item We investigate two continuous beamforming strategies: the sensing-centric (S-C) design that maximizes SR, and the communication-centric (C-C) design that maximizes CR. For each design, we derive closed-form expressions for SR, ECR, and OP, together with high-SNR asymptotic results that explicitly reveal the high-SNR slopes and diversity orders. Further, to quantify the sensing–communication trade-off, we formulate the Pareto boundary characterization problem for the achievable SR–CR region. By using a subspace method, we convert the intractable functional optimization into a low-dimensional vector optimization, from which a closed-form Pareto-optimal beamformer is obtained.
	
	\item Numerical results validate the derived expressions and show that CAPA-ISAC can strictly enlarge the achievable SR–CR region compared with conventional SPDA-based ISAC and CAPA-based frequency-division sensing and communications (FDSAC). The results further highlight the role of effective DoFs in explaining the observed diversity advantages, and it is shown that the benefits of aperture enlargement depend critically on beamformer alignment with the target metric, thereby motivating the proposed Pareto trade-off design. 
\end{itemize}

The remainder of this paper is organized as follows. Section \ref{section_system} introduces the CAPA-ISAC system model, including the sensing and communication signal models and the corresponding metrics. Section \ref{section_pre} characterizes the statistical properties of the fading communication channel and derives the probability density function (PDF) of the channel gain. Section \ref{section_I_CSI} presents the performance analysis under the S-C, C-C, and Pareto-optimal beamforming designs, including closed-form SR/ECR/OP expressions and multiplexing/diversity characterizations. Section \ref{section_numerical} provides numerical results and comparisons with SPDA-ISAC and CAPA-FDSAC baselines. Section \ref{section_conclusion} concludes the paper. 
\subsubsection*{Notations}
Throughout this paper, scalars, vectors, and matrices are denoted by non-bold, bold lower-case, and bold upper-case letters, respectively. For the matrix $\mathbf{A}$, ${\mathbf{A}}^{\mathsf{T}}$, ${\mathbf{A}}^{*}$ and ${\mathbf{A}}^{\mathsf{H}}$ denote its transpose, conjugate, and conjugate transpose, respectively. The notations $\lvert a\rvert$ and $\lVert \mathbf{a} \rVert$ denote the magnitude and norm of scalar $a$ and vector $\mathbf{a}$, respectively. The identity matrix with dimensions $N\times N$ is represented by $\mathbf{I}_N$. The set $\mathbb{R}$ and $\mathbb{C}$ stand for the real and complex spaces, respectively, and notation $\mathbb{E}\{\cdot\}$ represents mathematical expectation. The MI between random variables $X$ and $Y$ conditioned on $Z$ is shown by $I\left(X;Y|Z\right)$. Finally, ${\mathcal{CN}}({\bm\mu},\mathbf{X})$ is used to denote the circularly-symmetric complex Gaussian distribution with mean $\bm\mu$ and covariance matrix $\mathbf{X}$, and $\mathsf{Exp}\left(\lambda \right) $ represents the exponential
distribution with rate parameter $\lambda$.

\begin{figure}[!t]
 \centering
\setlength{\abovecaptionskip}{0pt}
\includegraphics[height=0.32\textwidth]{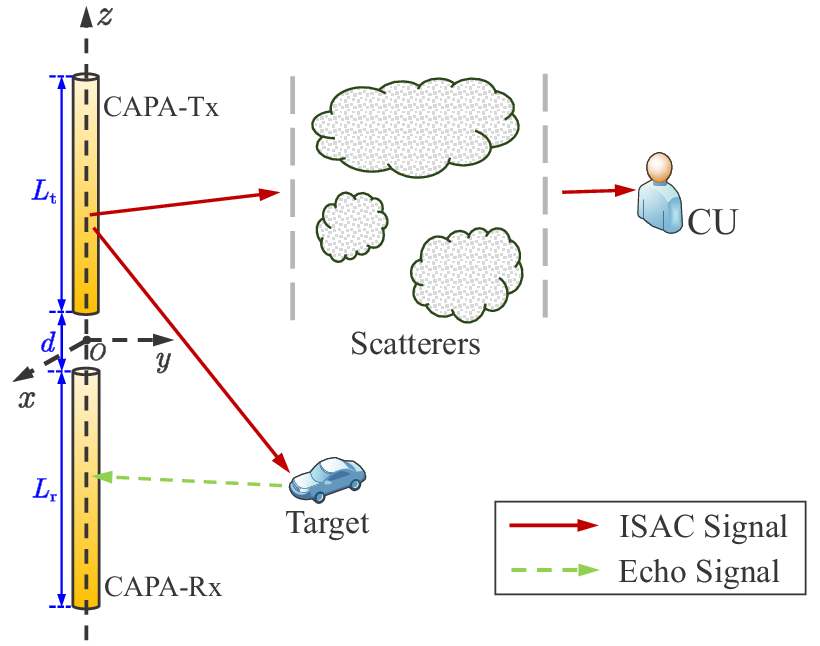}
\caption{Illustration of CAPA-based ISAC.}
\label{model_a}
\vspace{-3pt}
\end{figure}

\section{System Model}\label{section_system}
\subsection{System Description}
We consider a downlink CAPA-based ISAC system, where a DFSAC base station (BS) sends data to a single-antenna {communication user (CU)} while simultaneously sensing a point-like target. As illustrated in {\figurename} {\ref{model_a}}, the BS is equipped with two uni-polarized linear CAPAs\footnote{In this paper, we focus on one-dimensional linear CAPAs for notational simplicity and clear physical insight, while the framework and analysis naturally extend to two-dimensional planar apertures. Note that a linear CAPA is conceptually different from a conventional wire/line antenna, which is typically a resonant radiator whose current distribution is determined by its geometry and feed, rather than by independent point-wise control along the aperture.}: a transmit CAPA with a length of $L_{\rm{t}}$ and a receive CAPA with a physical length of $L_{\rm{r}}$. The CAPAs are positioned along the $z$-axis, separated end-to-end by a distance $d$, and the midpoint of this distance is defined as the origin. Therefore, the apertures of the transmit and receive CAPAs are, respectively, given by
\begin{subequations}
	\begin{align}
		&\mathcal{A} _{\mathrm{t}}=\{[0,0,z]^{\mathsf{T}}|z\in [{d}/{2},{d}/{2}+L_{\mathrm{t}}]\},\\
		&\mathcal{A} _{\mathrm{r}}=\{[0,0,z]^{\mathsf{T}}|z\in [-{d}/{2}-L_{\mathrm{r}},-{d}/{2}]\}.
	\end{align}
\end{subequations}

Let ${\mathbf{x}}({\mathbf{t}})=[x_1({\mathbf{t}}),\ldots,x_L({\mathbf{t}})]\in{\mathbb{C}}^{1\times L}$ represent the DFSAC signal emitted by the transmit CAPA at ${\mathbf{t}}\in{\mathcal{A}}_{\rm{t}}$, where $L$ denotes the number of symbols in a communication frame or the count of sensing pulses. Within the communication context, each $x_{\ell}({\mathbf{t}})$ corresponds to the $\ell$th transmitted symbol, for $\ell = 1, \ldots, L$. In the sensing scenario, $x_\ell({\mathbf{t}})$ indicates the sensing snapshot transmitted at the $\ell$th time interval. {Throughout this paper, we adopt a frame-based quasi-static model. Each DFSAC frame contains $L$ symbols/pulses and is assumed to be shorter than the effective coherence intervals of both the sensing and communication links. Accordingly, the target RCS-related coefficient and the communication channel remain fixed within each frame, but may vary independently across frames.}

In the CAPA-ISAC framework, the continuous DFSAC signal is constructed as follows:
\begin{align}
{\mathbf{x}}({\mathbf{t}})=\sqrt{P}w({\mathbf{t}}){{\mathbf{s}}^{\mathsf{T}}},
\end{align}
where $P$ is the transmit power, ${\mathbf{s}}\in{\mathbb{C}}^{L\times1}$ is a unit-power data stream intended for the CU, and $w({\mathbf{t}})$ denotes the normalized source current or transmit beamformer, which maps each spatial point $\mathbf{t}$ on the transmitting array to a complex scalar. Here, $w(\cdot)$ is a scalar field defined over the transmit aperture, and the beamformer satisfies the energy constraint:
\begin{align}
\int_{{\mathcal{A}}_{\rm{t}}}\lvert{w}({\mathbf{t}})\rvert^2{\rm{d}}{\mathbf{t}}=1. 
\end{align}
The data vector $\mathbf{s} = [s_1, \ldots, s_L]^{\mathsf{T}}$ fulfills the condition $\frac{1}{L} | \mathbf{s} |^2 = 1$, where each $s_\ell$ represents a modulated symbol sent at the $\ell$th interval. Consequently, the transmitted signal at time index $\ell$ becomes 
\begin{align}\label{x_l}
{{x}}_\ell({\mathbf{t}})=\sqrt{P}s_\ell{{w}}({\mathbf{t}}),\quad \ell=1,\ldots,L.
\end{align}

\subsection{Sensing Model}
At the $\ell$th time interval, the received echo signal at the BS at point $\mathbf{r}\in{\mathcal{A}_{\rm{r}}}$ for target sensing can be written as follows:
\begin{align}\label{sensing_Model_basic_one}
	y_{{\rm{s}},\ell}(\mathbf{r})=\int_{{\mathcal{A}}_{\rm{t}}}q(\mathbf{r},{\mathbf{t}})x_\ell({\mathbf{t}}){\rm{d}}{\mathbf{t}}
	+n_{{\rm{s}},\ell}(\mathbf{r}),
\end{align}
where $y_{{\rm{s}},\ell}(\cdot)$ is the receive vector field, $q(\cdot,\cdot)$ is a complex integral kernel denoting the sensing response, and $n_{{\rm{s}},\ell}(\cdot)$ represents thermal noise. The term $q(\mathbf{r},{\mathbf{t}})$ gives the sensing response between the transmit position ${\mathbf{t}}\in{\mathcal{A}}_{\rm{t}}$ and the receive position $\mathbf{r}\in{\mathcal{A}_{\rm{r}}}$. The noise field is characterized as a zero-mean complex Gaussian random process, satisfying 
\begin{align}
	{\mathbb{E}}\{n_{{\rm{s}},\ell}(\mathbf{r})n_{{\rm{s}},\ell}^*(\mathbf{r}')\}={\sigma}_{\rm{s}}^2\delta(\mathbf{r}-{\mathbf{r}}'), 
\end{align}
where $\delta(\cdot)$ is the Dirac delta function, and ${\sigma}_{\rm{s}}^2$ is the power spectral density of the noise. It is also assumed that the transmitted symbol sequence $\mathbf{s}$ and the noise fields $\{n_{{\rm{s}},\ell}(\mathbf{r})\}_{\ell=1}^{L}$ are statistically independent.

The sensing response can be modeled as follows \cite{pizzo2022spatial}:
\begin{align}\label{Target_Response_General}
	q(\mathbf{r},{\mathbf{t}})=\int h_{\rm{s}}(\mathbf{r},\mathbf{p})a(\mathbf{p})h_{\rm{s}}(\mathbf{p},\mathbf{t}){\rm{d}}{\mathbf{p}},
\end{align}
where $h_{\rm{s}}(\mathbf{x},\mathbf{y})$ denotes the sensing link between points $\mathbf{x}$ and $\mathbf{y}$, and $a(\mathbf{p})$ corresponds to the scattering coefficient or reflectivity, commonly referred to as the RCS, at point $\mathbf{p}$. To precisely capture the characteristics of the LoS sensing links in both near-field and far-field scenarios, we model the sensing links based on the non-uniform spherical-wave propagation as follows \cite{ouyang2024impact}:
\begin{align}
	h_{\rm{s}}(\mathbf{x},\mathbf{y})=\frac{{\rm{e}}^{-{\rm{j}}k_0\lVert\mathbf{x}-\mathbf{y}\rVert}}{\sqrt{4\pi}\lVert\mathbf{x}-\mathbf{y}\rVert},
\end{align} 
where $k_0=\frac{2\pi}{\lambda}$ with $\lambda$ being the wavelength denotes the wavenumber.

Based on the ray-tracing model in \cite{zwick2002stochastic}, the scattering response can be expressed as follows:
\begin{align}
	a(\mathbf{p})=\sum\nolimits_{n=1}^{N}\beta_n\delta({\mathbf{p}}-{\mathbf{p}}_n),
\end{align}
where $N$ is the number of resolvable point-like targets, ${\mathbf{p}}_n\in{\mathbb{R}}^{3\times1}$ represents the location of the $n$th target, and $\beta_n$ denotes the associated RCS. { In this work, we consider a single point-like target ($N=1$) to obtain tractable closed-form analysis and clear physical insights, while leaving the extension to multiple targets for future work.} For a target located at
\begin{equation}
	{\mathbf{p}}_{\rm{s}}=[p_{\rm{s},x},p_{\rm{s},y},p_{\rm{s},z}]^\mathsf{T},
\end{equation} 
with RCS $\beta_{\rm{s}}$, the scattering response reduces to
\begin{align}
	a(\mathbf{p})=\beta_{\rm{s}}\delta({\mathbf{p}}-{\mathbf{p}}_{\rm{s}}).
\end{align}
{Substituting into \eqref{Target_Response_General}, the response becomes:
\begin{align}
	q(\mathbf{r},{\mathbf{t}})=h_{\rm{s}}(\mathbf{r},{\mathbf{p}}_{\rm{s}})\beta_{\rm{s}}h_{\rm{s}}({\mathbf{p}}_{\rm{s}},\mathbf{t}).
\end{align}
For notational simplicity, we denote $h_{\rm{s}}(\mathbf{p}_{\rm{s}}, \mathbf{t})\triangleq h_{\rm{s}}(\mathbf{t})$ for $\mathbf{t}\in \mathcal{A} _{\mathrm{t}}$ and $h_{\rm{s}}(\mathbf{r}, \mathbf{p}_{\rm{s}})\triangleq h_{\rm{s}}(\mathbf{r})$ for $\mathbf{r}\in \mathcal{A} _{\mathrm{r}}$. Accordingly, the received echo signal at point $\mathbf{r}$ can be expressed as follows:
\begin{align}
	y_{{\rm{s}},\ell}(\mathbf{r})=h_{\rm{s}}(\mathbf{r})\beta_{\rm{s}}\int_{{\mathcal{A}}_{\rm{t}}}h_{\rm{s}}(\mathbf{t})
	x_\ell({\mathbf{t}}){\rm{d}}{\mathbf{t}}
	+n_{{\rm{s}},\ell}(\mathbf{r}).
\end{align}}
Following the Swerling-I model \cite{richards2005fundamentals}, the RCS $\beta_{\rm{s}}$ is considered stable across different pulses and is characterized by a Rayleigh-distributed magnitude. Consequently, $\beta_{\rm{s}}$ is modeled as a circularly symmetric complex Gaussian random variable, i.e., $\beta_{\rm{s}} \sim \mathcal{CN}(0, \alpha_{\rm{s}})$, where $\alpha_{\rm{s}} > 0$ denotes the mean power of the RCS. This quantity reflects the average strength of the target’s reflectivity \cite{richards2005fundamentals}.

{ In this work, we assume that the target position is known, or has been estimated through a prior acquisition or tracking stage, and focus on estimating the RCS $\beta_{\rm{s}}$ associated with the direction of interest. This coefficient encapsulates the intrinsic scattering properties of the target, e.g., material, size, shape, and orientation, which are crucial for downstream sensing tasks such as target detection, recognition, and tracking \cite{ouyang2023integrated,jiang2024electromagnetic}. Moreover, treating $\beta_{\rm{s}}$ as the unknown information-bearing parameter casts the sensing model as a virtual Multiple-input single-output (MISO) channel. This allows the sensing quality to be quantified by the MI between $\beta_{\rm{s}}$ and the echo signals \cite{ouyang2023integrated,bcrb}, under the same information-theoretic measure as the CR. This is essential for establishing a unified sensing-communication trade-off, which would be difficult to obtain if geometric parameters such as the position are estimated, since their accuracy is naturally characterized by the Cram\'er-Rao bound rather than an MI-based rate. Such a position-known, RCS-unknown formulation has also been widely adopted in radar and ISAC studies \cite{boqun_jstsp,liu2022cramer,dong2023joint,tang2019spectrally}. Extending the framework to joint position and reflection-coefficient estimation remains an important direction for future work.}

In this context, the objective of sensing is to retrieve environmental information encoded in $\beta_{\rm{s}}$ from the observed echo signals $\{y_{{\rm{s}},\ell}(\mathbf{r})\}_{\ell=1}^{L}$, utilizing prior knowledge of the transmitted signal ${\mathbf{x}}({\mathbf{t}})$. To this end, the receiver employs a detector ${{v}}_{\rm{s}}({\mathbf{r}})$ for $\mathbf{r}\in\mathcal{A}_{\rm{r}}$, which yields
\begin{equation}\label{sensing_1}
	\begin{split}
		\int_{{\mathcal{A}}_{\rm{r}}}{}{v}_{\rm{s}}^{*}({\mathbf{r}}){{y}}_{{\rm{s}},\ell}(\mathbf{r}){\rm{d}}{\mathbf{r}}=\sqrt{P}s_{\ell}\beta_{\rm{s}}\int_{{\mathcal{A}}_{\rm{r}}}{{v}}_{\rm{s}}^{*}({\mathbf{r}})h_{\rm{s}}(\mathbf{r}){\rm{d}}{\mathbf{r}}&\\
		\times\int_{{\mathcal{A}}_{\rm{t}}}h_{\rm{s}}(\mathbf{t})w({\mathbf{t}}){\rm{d}}{\mathbf{t}}
		+\int_{{\mathcal{A}}_{\rm{r}}}{{v}}_{\rm{s}}^{*}({\mathbf{r}}){{n}}_{{\rm{s}},\ell}(\mathbf{r}){\rm{d}}{\mathbf{r}}&.
	\end{split}
\end{equation}
For clarity, we denote $\hat{y}_{{\rm{s}},\ell}\triangleq\int_{{\mathcal{A}}_{\rm{r}}}{{v}}_{\rm{s}}^{*}({\mathbf{r}}){{y}}_{{\rm{s}},\ell}(\mathbf{r}){\rm{d}}{\mathbf{r}}$, $\hat{{h}}_{\rm{r}}\triangleq\int_{{\mathcal{A}}_{\rm{r}}}{{v}}_{\rm{s}}^{*}({\mathbf{r}})h_{\rm{s}}(\mathbf{r}){\rm{d}}{\mathbf{r}}$, $\hat{{h}}_{\rm{t}}\triangleq\int_{{\mathcal{A}}_{\rm{t}}}h_{\rm{s}}(\mathbf{t})w({\mathbf{t}}){\rm{d}}{\mathbf{t}}$, and  $\hat{n}_{{\rm{s}},\ell}\triangleq\int_{{\mathcal{A}}_{\rm{r}}}{{v}}_{\rm{s}}^{*}({\mathbf{r}}){{n}}_{{\rm{s}},\ell}(\mathbf{r}){\rm{d}}{\mathbf{r}}$. As per \cite[Lemma 1]{zhao2024continuous}, we have $\hat{n}_{{\rm{s}},\ell}\sim{\mathcal{CN}}(0,{\sigma}_{\rm{s}}^2\int_{\mathcal{A}_{\rm{r}}}\lvert{{v}}_{\rm{s}}({\mathbf{r}})\rvert^2{\rm{d}}{\mathbf{r}})$. Hence, \eqref{sensing_1} can be rewritten as follows:
\begin{align}
	\hat{y}_{{\rm{s}},\ell}=\sqrt{P}\hat{{h}}_{\rm{r}}\beta_{\rm{s}}\hat{{h}}_{\rm{t}}s_{\ell}+\hat{n}_{{\rm{s}},\ell},~\ell=1,\ldots,L,
\end{align}
which yields
\begin{align}\label{Sensing_Model_Transformed}
	\hat{\mathbf{y}}_{{\rm{s}}}=\sqrt{P}\hat{{h}}_{\rm{r}}\hat{{h}}_{\rm{t}}{\mathbf{s}}\beta_{\rm{s}}+\hat{\mathbf{n}}_{{\rm{s}}},
\end{align}
where ${\hat{\mathbf{y}}_{{\rm{s}}}}=[\hat{y}_{{\rm{s}},1},\ldots,\hat{y}_{{\rm{s}},L}]^{\mathsf{T}}\in{\mathbb{C}}^{L\times1}$ and ${\hat{\mathbf{n}}_{{\rm{s}}}}=[\hat{n}_{{\rm{s}},1},\ldots,\hat{n}_{{\rm{s}},L}]^{\mathsf{T}}\sim{\mathcal{CN}}({\mathbf{0}},{\sigma}_{\rm{s}}^2\int_{{\mathcal{A}}_{\rm{r}}}\lvert{{v}}_{\rm{s}}({\mathbf{r}})\rvert^2{\rm{d}}{\mathbf{r}}{\mathbf{I}}_L)$. 

The next task is to recover $\beta_{\rm{s}}$ from $\hat{\mathbf{y}}_{{\rm{s}}}$ given ${\mathbf{s}}$. The information-theoretic limits of sensing performance are characterized by the sensing MI, which quantifies how much information about $\beta_{\rm{s}}$ is contained in $\hat{\mathbf{y}}_{\rm{s}}$ when conditioned on $\mathbf{s}$ \cite{ouyang2023integrated}. To assess performance, we adopt the SR as the performance metric, which is defined as the sensing MI per unit time. Notably, under our considered model, a larger SR also implies a smaller mean-squared error and Bayesian Cram{\'e}r-Rao bound \cite{bcrb,zhao2025downlink}. { Since each DFSAC symbol lasts one unit of time, the SR can be written as follows:
\begin{align}\label{SR_define}
	{\mathcal{R}}_{\rm{s}}({{w}}({\mathbf{t}}))=\frac{1}{L}I(\hat{\mathbf{y}}_{{\rm{s}}};\beta_{\rm{s}}|{\mathbf{s}}),
\end{align}
where $I(\hat{\mathbf{y}}_{{\rm{s}}};\beta_{\rm{s}}|{\mathbf{s}})$ denotes the sensing MI accumulated over the $L$-symbol frame. The factor $1/L$ normalizes this total sensing MI into an information rate per unit time, placing the SR on the same per-channel-use basis as the CR \cite{ouyang2023integrated}.} Viewing the model in \eqref{Sensing_Model_Transformed} as a virtual single-input multiple-output channel, we interpret $\beta_{\rm{s}}$ as the input, $\hat{\mathbf{y}}_{\rm{s}}$ as the output, and $\hat{h}_{\rm{r}} \hat{h}_{\rm{t}} \mathbf{s}$ as the effective channel gain, with an additive noise $\hat{\mathbf{n}}_{\rm{s}}$. The resulting MI is equivalent to the capacity of this virtual MISO channel, computed as
\begin{align}\label{MI}
	I(\hat{\mathbf{y}}_{{\rm{s}}};\beta_{\rm{s}}|{\mathbf{s}})
	&=\log_2\det\left({\mathbf{I}}_L+\frac{P\alpha_{\rm{s}}\lvert\hat{{h}}_{\rm{r}}\rvert^2\lvert\hat{{h}}_{\rm{t}}\rvert^2}{{\sigma}_{\rm{s}}^2
		\int_{{\mathcal{A}}_{\rm{r}}}\lvert{{v}}_{\rm{s}}({\mathbf{r}})\rvert^2{\rm{d}}{\mathbf{r}}}
	{\mathbf{s}}{\mathbf{s}}^{\mathsf{H}}\right)\notag\\
	&=\log_2\left(1+\frac{P\alpha_{\rm{s}}\lvert\hat{{h}}_{\rm{r}}\rvert^2\lvert\hat{{h}}_{\rm{t}}\rvert^2}{{\sigma}_{\rm{s}}^2
		\int_{{\mathcal{A}}_{\rm{r}}}\lvert{{v}}_{\rm{s}}({\mathbf{r}})\rvert^2{\rm{d}}{\mathbf{r}}}
	\lVert{\mathbf{s}}\rVert^2\right),
\end{align}
where the second equality results from applying Sylvester's determinant identity. Substituting \eqref{MI} into \eqref{SR_define} and using $\frac{1}{L}\lVert{\mathbf{s}}\rVert^2=1$, the SR can be expressed as follows:
\begin{align}\label{SR_define_2}
	{\mathcal{R}}_{\rm{s}}({{w}}({\mathbf{t}}))=\frac{1}{L}\log_2\left(1+\frac{P\alpha_{\rm{s}}\lvert\hat{{h}}_{\rm{r}}\rvert^2\lvert\hat{{h}}_{\rm{t}}\rvert^2}{{\sigma}_{\rm{s}}^2
		\int_{\mathcal{A}_{\rm{r}}}\lvert{{v}}_{\rm{s}}({\mathbf{r}})\rvert^2{\rm{d}}{\mathbf{r}}}L\right).
\end{align}

Here, SR is influenced by the choice of detector $v_{\rm{s}}(\mathbf{r})$. To maximize the SR, we solve the following optimization problem:
\begin{align}\label{Downlink_SR_Pre_Opt1}
	\max_{{{v}}_{\rm{s}}({\mathbf{r}})}\frac{\lvert\hat{{h}}_{\rm{r}}\rvert^2}{\int_{\mathcal{A}_{\rm{r}}}\lvert{{v}}_{\rm{s}}({\mathbf{r}})\rvert^2{\rm{d}}{\mathbf{r}}}\Leftrightarrow
	\max_{{{v}}_{\rm{s}}({\mathbf{r}})}\frac{\left\lvert\int_{{\mathcal{A}}_{\rm{r}}}{{v}}_{\rm{s}}^{*}({\mathbf{r}})h_{\rm{s}}(\mathbf{r}){\rm{d}}{\mathbf{r}}\right\rvert^2}{\int_{\mathcal{A}_{\rm{r}}}\lvert{{v}}_{\rm{s}}({\mathbf{r}})\rvert^2{\rm{d}}{\mathbf{r}}}.
\end{align}
The optimal solution occurs when $v_{\rm{s}}(\mathbf{r})$ is proportional to $h_{\rm{s}}(\mathbf{r})$, i.e., $v_{\rm{s}}(\mathbf{r}) \propto h_{\rm{s}}(\mathbf{r})$, which gives
\begin{align}\label{Downlink_SR_Pre_Opt2}
	\max_{{{v}}_{\rm{s}}({\mathbf{r}})}\frac{\lvert\hat{{h}}_{\rm{r}}\rvert^2}{\int_{\mathcal{A}_{\rm{r}}}\lvert{{v}}_{\rm{s}}({\mathbf{r}})\rvert^2{\rm{d}}{\mathbf{r}}}
	=\frac{\lvert\int_{\mathcal{A}_{\rm{r}}}\lvert h_{\rm{s}}(\mathbf{r})\rvert^2{\rm{d}}{\mathbf{r}}\rvert^2}{\int_{\mathcal{A}_{\rm{r}}}\lvert h_{\rm{s}}(\mathbf{r})\rvert^2{\rm{d}}{\mathbf{r}}}=\int_{\mathcal{A}_{\rm{r}}}\lvert h_{\rm{s}}(\mathbf{r})\rvert^2{\rm{d}}{\mathbf{r}}. 
\end{align}
Finally, substituting \eqref{Downlink_SR_Pre_Opt2} back into \eqref{SR_define_2} yields
\begin{equation}\label{SR_define_2_step1}
	\begin{split}
		{\mathcal{R}}_{\rm{s}}({{w}}({\mathbf{t}}))=\frac{1}{L}\log_2&\bigg(1+\overline{\gamma }_{\mathrm{s}}L\alpha_{\mathrm{s}}\int_{\mathcal{A}_{\rm{r}}}\lvert h_{\rm{s}}(\mathbf{r})\rvert^2{\rm{d}}{\mathbf{r}}\\
		&\times\left| \int_{\mathcal{A} _{\mathrm{t}}}{h_{\mathrm{s}}}(\mathbf{t})w(\mathbf{t})\mathrm{d}\mathbf{t}  \right|^2\Bigg),
	\end{split}	
\end{equation}
where $\overline{\gamma }_{\mathrm{s}}\triangleq\frac{P}{\sigma _{\mathrm{s}}^{2}}$.

\subsection{Communication Model}
At the $\ell$th time interval, the received signal at the CU is given by
\begin{align}\label{comm_signal}
y_{{\rm{c}},\ell}=\int_{{\mathcal{A}}_{\rm{t}}}h_{\rm{c}}(\mathbf{p}_{\rm{c}},{\mathbf{t}})x_\ell({\mathbf{t}}){\rm{d}}{\mathbf{t}}
+n_{{\rm{c}},\ell},
\end{align}
for $\ell=1,\ldots,L$, where $\mathbf{p}_{\rm{c}}$ is the position of the CU,  $h_{\rm{c}}(\mathbf{p}_{\rm{c}},{\mathbf{t}})$ represents the spatial response from ${\mathbf{t}}\in{\mathcal{A}}_{\rm{t}}$ to the CU, and $n_{{\rm{c}},\ell}\sim \mathcal{C} \mathcal{N} \left( 0, {\sigma}_{\rm{c}}^2\right) $ represents the additive noise. 
 
We consider a scenario in which no direct path exists between the transmit CAPA and the CU due to the presence of scatterers. In this case, the EM multipath spatial response, $h_{\rm{c}}(\mathbf{p}_{\rm{c}},{\mathbf{t}})\triangleq h_{\rm{c}}({\mathbf{t}})$, can be modeled as follows \cite{pizzo2022spatial}:
\begin{align}\label{comm_model}
	h_{\mathrm{c}}(\mathbf{t})=\frac{1}{(2\pi) ^2}\iiiint_{\mathcal{D} \left( \mathbf{k} \right) \times \mathcal{D} \left( \bm{\kappa } \right)}{\mathrm{e}^{-\mathrm{j}\mathbf{k}^{\mathsf{T}}\mathbf{p}_{\mathrm{c}}}H\left( \mathbf{k},\bm{\kappa } \right) \mathrm{e}^{\mathrm{j}\bm{\kappa }^{\mathsf{T}}\mathbf{t}}}\mathrm{d}\mathbf{k}\mathrm{d}\bm{\kappa },
\end{align}
where $\mathbf{k}=\left[ k_x,\gamma \left( k_x,k_z \right) ,k_z \right]^{\mathsf{T}} \in \mathbb{R} ^{3\times 1}$, $\boldsymbol{\kappa }=\left[ \kappa _x,\gamma \left( \kappa _x,\kappa _z \right) ,\kappa _z \right]^{\mathsf{T}} \in \mathbb{R} ^{3\times 1}$, $\gamma \left( x,y \right) =\sqrt{k_{0}^{2}-x^2-y^2}$, $\mathcal{D} \left( \mathbf{k} \right) =\left\{ \left( k_x,k_z \right) \left| k_{x}^{2}+k_{z}^{2}\le k_{0}^{2} \right. \right\} $, and $\mathcal{D} \left( \boldsymbol{\kappa } \right) =\left\{ \left( \kappa _x,\kappa _z \right) \left| \kappa _{x}^{2}+\kappa _{z}^{2}\le \kappa _{0}^{2} \right. \right\} $\footnote{ The communication and sensing channels stem from the same EM propagation principle, while their different forms reflect distinct physical scenarios. For sensing, the link is a round-trip BS-target-BS path with a known target location. Any additional scatterer along this round trip incurs multiple path losses and returns negligible power, so the sensing link is dominated by the single target reflection and the two BS-target LoS paths, which justifies the deterministic spherical-wave model. The remaining non-negligible scatterer contribution is single-bounce BS-scatterer-BS clutter, which is treated as interference and assumed to be estimated and suppressed before sensing. By contrast, the communication link is modeled as a stochastic multipath field because the CU is considered in rich scattering without a dominant LoS path. This model preserves the fading behavior that the paper aims to analyze and enables key performance insights, such as the diversity order and high-SNR slope. This deterministic-sensing and stochastic-communication modeling approach is also widely adopted in the ISAC literature \cite{gan2024coverage,wei2025isac,bazzi2023outage,holoISAC_2,meng2025cooperative}.}. In particular, $\mathrm{e}^{\mathrm{j}\boldsymbol{\kappa }^{\mathsf{T}}\mathbf{t}}$ and $\mathrm{e}^{-\mathrm{j}\mathbf{k}^{\mathsf{T}}\mathbf{p}_{\mathrm{c}}}$ denote the transmit and receive plane waves propagating in directions $\frac{\boldsymbol{\kappa }}{\left\| \boldsymbol{\kappa } \right\|}$ and $\frac{\mathbf{k}}{\left\| \mathbf{k} \right\|}$, respectively, and $H\left( \mathbf{k},\boldsymbol{\kappa } \right) $ represents the angular response that maps the transmit direction $\frac{\boldsymbol{\kappa }}{\left\| \boldsymbol{\kappa } \right\|}$ to the receive direction $\frac{\mathbf{k}}{\left\| \mathbf{k} \right\|}$. For analytical tractability, by considering the isotropic scattering, $H\left( \mathbf{k},\boldsymbol{\kappa } \right) $ is modeled as a stationary, circularly symmetric, complex-Gaussian random field, which yields
\begin{align}\label{H_define}
	H\left( \mathbf{k},\boldsymbol{\kappa } \right) =\frac{A_s( k_0 )}{\sqrt{\gamma \left( k_x,k_z \right) \gamma \left( \kappa _x,\kappa _z \right)}}W\left( \mathbf{k},\boldsymbol{\kappa } \right). 
\end{align}
Here, $W\left( \mathbf{k},\boldsymbol{\kappa } \right)$ is a zero-mean, unit-variance complex-Gaussian (ZUCG) random field defined over $\mathcal{D} \left( \mathbf{k} \right) \times \mathcal{D} \left( \mathbf{\kappa } \right) $, i.e., $W\left( \mathbf{k},\boldsymbol{\kappa } \right) \sim \mathcal{C} \mathcal{N} \left( 0,1 \right) $ and its autocorrelation satisfies $\mathbb{E}\left\{ W\left( \mathbf{k},\boldsymbol{\kappa } \right) W^*\left( \mathbf{k}^{\prime},\boldsymbol{\kappa }^{\prime} \right) \right\} =\delta \left( \mathbf{k}-\mathbf{k}^{\prime} \right) \delta \left( \boldsymbol{\kappa }-\boldsymbol{\kappa }^{\prime} \right) $, and $A_s\left( k_0 \right)=\frac{2\pi}{k_0}$. {Consistent with the frame-based quasi-static model, one realization of the communication fading field $h_{\rm{c}}(\mathbf{t})$ remains fixed over the $L$ symbols within one frame.}

Substituting \eqref{x_l} into \eqref{comm_signal}, we have
\begin{align}
	{{y}}_{{\rm{c}},\ell}= \sqrt{P}s_\ell\int_{{\mathcal{A}}_{\rm{t}}}h_{\rm{c}}({\mathbf{t}})
	w({\mathbf{t}}){\rm{d}}{\mathbf{t}}
	+n_{{\rm{c}},\ell}.
\end{align}
Therefore, the CR is expressed as follows:
\begin{align}\label{CR_define}
	{\mathcal{R}}_{{\rm{c}}}({{w}}({\mathbf{t}}))=\log_2\left(1+\overline{\gamma}_{\mathrm{c}} \left| \int_{\mathcal{A} _{\mathrm{t}}}{h_{\mathrm{c}}}(\mathbf{t})w(\mathbf{t})\mathrm{d}\mathbf{t}  \right|^2\right),
\end{align} 
where $\overline{\gamma}_{\mathrm{c}}\triangleq\frac{P}{\sigma _{\mathrm{c}}^{2}}$.

{In the considered CAPA-based ISAC system, sensing and communication are integrated through shared physical and signal resources. Specifically, the two functions use the same transmit CAPA, spectrum, power budget, and waveform. The communication data stream $\mathbf{s}$ is reused as the sensing probing sequence, while a common source current pattern $w(\mathbf{t})$ shapes the EM field toward both the CU and the target. Hence, the integration of sensing and communication is governed by the design of this shared continuous beamformer.}

By observing \eqref{CR_define} and \eqref{SR_define_2_step1}, we note that both communication and sensing performance in CAPA-based ISAC systems heavily depend on $w(\mathbf{t})$. Nevertheless, identifying an optimal $w(\mathbf{t})$ to simultaneously maximize both the CR and SR is impossible. To tackle this issue and examine the fundamental performance limits of CAPA-based ISAC systems, we introduce three distinct continuous beamforming strategies: \romannumeral1) \emph{the S-C design} that maximizes ${\mathcal{R}}_{\rm{s}}({{w}}({\mathbf{t}}))$; \romannumeral2) \emph{the C-C design} that maximizes ${\mathcal{R}}_{{\rm{c}}}({{w}}({\mathbf{t}}))$; \romannumeral3) \emph{the Pareto-optimal design} that identifies the Pareto boundary of the SR-CR region.

\section{Communication Channel Statistics}\label{section_pre}
In this section, we characterize the statistical properties for the communication channel.

By substituting \eqref{H_define} into \eqref{comm_model}, the communication channel response can be expressed as follows:
\begin{align}\label{comm_model_2}
h_{\mathrm{c}}(\mathbf{t})=\frac{A_s\left( k_0 \right)}{4\pi ^2}\iint_{\mathcal{D} \left( \mathbf{\kappa } \right)}{\frac{\mathrm{e}^{\mathrm{j}\boldsymbol{\kappa }^{\mathsf{T}}\mathbf{t}}}{\sqrt{\gamma \left( \kappa _x,\kappa _z \right)}}}\hat{H}\left( \boldsymbol{\kappa } \right) \mathrm{d}\mathbf{\kappa },
\end{align}
where 
\begin{align}
	\hat{H}\left( \boldsymbol{\kappa } \right) =\iint_{\mathcal{D} \left( \mathbf{k} \right)}{\frac{\mathrm{e}^{-\mathrm{j}\mathbf{k}^{\mathsf{T}}\mathbf{p}_{\mathrm{c}}}}{\sqrt{\gamma \left( k_x,k_z \right)}}W\left( \mathbf{k},\boldsymbol{\kappa } \right) }\mathrm{d}\mathbf{k}.
\end{align}
Since $W\left( \mathbf{k},\boldsymbol{\kappa } \right) $ is a ZUCG random field, we can conclude that $\hat{H}\left( \boldsymbol{\kappa } \right)$ is also a zero-mean complex-Gaussian random field. Given that $\mathbf{t}=[0,0,z]^\mathsf{T}$ for ${z}\in \mathcal{Z} \triangleq \left\{ \bar{z}\in \mathbb{R}| d/2\le \bar{z}\le d/2+L_{\mathrm{t}} \right\} $ and $\boldsymbol{\kappa }=\left[ \kappa _x,\gamma \left( \kappa _x,\kappa _z \right) ,\kappa _z \right]^{\mathsf{T}}$, \eqref{comm_model_2} can be further simplified as follows:
\begin{align}
	h_{\mathrm{c}}(\mathbf{t})=\frac{A_s\left( k_0 \right)}{4\pi ^2}\int_{-k_0}^{k_0}{\mathrm{e}^{\mathrm{j}\kappa _zz}\hat{H}_z\left( \kappa _z \right)}\mathrm{d}\kappa _z\triangleq \hat{h}_{\rm{c}}\left( z \right),
\end{align}
where
\begin{align}
	\hat{H}_z\left( \kappa _z \right) =\int_{-\sqrt{k_{0}^{2}-\kappa _{z}^{2}}}^{\sqrt{k_{0}^{2}-\kappa _{z}^{2}}}{\frac{\hat{H}\left( \boldsymbol{\kappa } \right)}{\sqrt{\gamma \left( \kappa _x,\kappa _z \right)}}}\mathrm{d}\kappa _x
\end{align}
is a zero-mean complex-Gaussian random field defined on $[-k_0,k_0]$. Therefore, the discussion above implies that the communication channel $\hat{h}_{\rm{c}}\left( z \right)$ is a zero-mean complex-Gaussian random field defined on $\mathcal{Z}$, whose statistics are determined by its autocorrelation function given as follows.
\begin{lemma}\label{lem_Rz}
The autocorrelation function of $\hat{h}_{\rm{c}}( z )$ is given by
\begin{align}\label{Rc}
	R_{\rm{c}}( z,z^{\prime} ) =\frac{1}{2k_0}\int_{-k_0}^{k_0}{\mathrm{e}^{\mathrm{j}\kappa _z\left( z-z^{\prime} \right)}\mathrm{d}\kappa _z}=\frac{\sin \left( k_0(z-z^\prime) \right)}{k_0(z-z^\prime)}.
\end{align}
\end{lemma}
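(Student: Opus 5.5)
We start from the angular representation \eqref{comm_model_2} and compute $\mathbb{E}\{\hat{h}_{\rm c}(z)\hat{h}_{\rm c}^*(z')\}$ by pushing the expectation inside the integrals. Working with the original four-fold integral \eqref{comm_model} is cleaner. With $\mathbf{t}=[0,0,z]^{\mathsf T}$ we have $\bm\kappa^{\mathsf T}\mathbf{t}=\kappa_z z$. The product $h_{\rm c}(\mathbf t)h_{\rm c}^*(\mathbf t')$ is then an eight-fold integral whose only random factor is $W(\mathbf k,\bm\kappa)W^*(\mathbf k',\bm\kappa')$. Its expectation equals $\delta(\mathbf k-\mathbf k')\delta(\bm\kappa-\bm\kappa')$, which collapses the primed integrals.

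The receive-side factor then reduces to
\begin{align}
\iint_{\mathcal D(\mathbf k)}\frac{|\mathrm e^{-\mathrm j\mathbf k^{\mathsf T}\mathbf p_{\rm c}}|^2}{\gamma(k_x,k_z)}\mathrm d k_x\mathrm dk_z=\iint_{k_x^2+k_z^2\le k_0^2}\frac{\mathrm dk_x\mathrm dk_z}{\sqrt{k_0^2-k_x^2-k_z^2}}=2\pi k_0,
\end{align}
which follows from polar coordinates: $2\pi\int_0^{k_0}\rho(k_0^2-\rho^2)^{-1/2}\mathrm d\rho=2\pi k_0$. Note that $\mathbf p_{\rm c}$ drops out.

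The transmit-side factor becomes
\begin{align}
\iint_{\mathcal D(\bm\kappa)}\frac{\mathrm e^{\mathrm j\kappa_z(z-z')}}{\gamma(\kappa_x,\kappa_z)}\mathrm d\kappa_x\mathrm d\kappa_z .
\end{align}
We perform the inner $\kappa_x$ integral first. For fixed $\kappa_z$,
\begin{align}
\int_{-\sqrt{k_0^2-\kappa_z^2}}^{\sqrt{k_0^2-\kappa_z^2}}\frac{\mathrm d\kappa_x}{\sqrt{(k_0^2-\kappa_z^2)-\kappa_x^2}}=\pi,
\end{align}
by the arcsine integral, independently of $\kappa_z$. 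This is the key step: the $1/\sqrt{\gamma}$ weighting in \eqref{H_define} is exactly what makes the projected angular spectrum on the $z$-axis flat. The transmit factor is therefore $\pi\int_{-k_0}^{k_0}\mathrm e^{\mathrm j\kappa_z(z-z')}\mathrm d\kappa_z$.

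Collecting constants, $\frac{A_s^2(k_0)}{(2\pi)^4}\cdot 2\pi k_0\cdot\pi$ with $A_s(k_0)=2\pi/k_0$ equals $\frac{1}{2k_0}$. This gives the first equality in \eqref{Rc}. Evaluating $\int_{-k_0}^{k_0}\mathrm e^{\mathrm j\kappa_z u}\mathrm d\kappa_z=2\sin(k_0u)/u$ then yields the sinc form, with value $1$ at $u=0$, consistent with unit normalization.

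The main obstacle is justifying the interchange of expectation and integration, and the delta-collapse, for a white Gaussian field. We treat this formally, in the standard sense of \cite{pizzo2022spatial}, since the weights $1/\sqrt{\gamma}$ produce only integrable singularities: both the $\rho$ and $\kappa_x$ integrals converge. The remaining care is tracking the $(2\pi)^{-2}$ versus $A_s/(4\pi^2)$ normalizations so that the constant comes out exactly as $1/(2k_0)$.
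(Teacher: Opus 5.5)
Your proposal is correct and follows essentially the same route as the paper: you collapse the white-field correlation $\delta(\mathbf{k}-\mathbf{k}')\delta(\bm{\kappa}-\bm{\kappa}')$, evaluate the receive-side integral $2\pi k_0$ and the arcsine integral $\pi$ over $\kappa_x$, and collect the constant $A_s^2(k_0)\cdot 2\pi^2 k_0/(16\pi^4)=1/(2k_0)$. The only difference is bookkeeping: the paper goes through the intermediate fields $\hat{H}(\bm{\kappa})$ and $\hat{H}_z(\kappa_z)$, with correlations $2\pi k_0\,\delta(\bm{\kappa}-\bm{\kappa}')$ and $2\pi^2 k_0\,\delta(\kappa_z-\kappa_z')$, whereas you do it in a single pass from the four-fold integral.
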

\vspace{-3pt}
\begin{IEEEproof}
Please refer to Appendix \ref{proof_lem_Rz} for more details.
\end{IEEEproof}
Given that $R_{\rm{c}}( z,z^{\prime} )$ is a semi-positive definite Hilbert–Schmidt operator, we denote its eigendecomposition (EVD) of as follows:
\begin{align}
	R_{\rm{c}}( z,z^{\prime} ) =\sum\nolimits_{n=1}^{\infty}{\lambda _n}\phi _n\left( z \right) \phi _{n}^{*}\left( z^{\prime} \right), 
\end{align}
where $\lambda _1\ge \lambda _2\ge ...\ge \lambda _{\infty}\ge 0$ are the eigenvalues, and $\phi _n\left( z \right)$ is the associate eigenfunction satisfying
\begin{align}
	\int_{\mathcal{Z}}{\phi _n\left( z \right) \phi _{n^{\prime}}^{*}\left( z \right)}\mathrm{d}z=\delta _{n,n^{\prime}}
\end{align}
for $n\in[1,2,\ldots,\infty]$ with $\delta _{n,n^{\prime}}$ being the Kronecker delta. The eigenvalues can be evaluated numerically as outlined in the Appendix~\ref{eigenvalue}. On this basis, we can derive the following lemma.
\begin{lemma}\label{lem_h_sta}
The statistics of the zero-mean complex-Gaussian random field $\hat{h}_{\rm{c}}\left( z \right)$ satisfy
\begin{align}\label{hc_sta}
	\hat{h}_{\rm{c}}\left( z \right)\overset{d}{=}\sum\nolimits_{n=1}^{\infty}{\sqrt{\lambda _n}}\phi _n\left( z \right) \Psi  _n,
\end{align}
where $\overset{d}{=}$ represents equivalence in distribution, and $\left\{ \Psi  _n \right\} _{n=1}^{\infty}$ are independent and identically distributed (i.i.d.) ZUCG random variables.
\end{lemma}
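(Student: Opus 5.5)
This is the Karhunen--Lo\`eve expansion of $\hat{h}_{\rm{c}}(z)$ with respect to the kernel in Lemma~\ref{lem_Rz}. Since a zero-mean circularly-symmetric complex Gaussian field on $\mathcal{Z}$ is determined in distribution by its autocorrelation, it suffices to show three things: the right-hand side of \eqref{hc_sta} is well defined, it is a zero-mean circularly-symmetric complex Gaussian field, and its autocorrelation equals $R_{\rm{c}}(z,z')$.

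\textbf{Step 1 (projection coefficients).} Define the projections
\begin{align}
\Psi_n \triangleq \frac{1}{\sqrt{\lambda_n}}\int_{\mathcal{Z}}\hat{h}_{\rm{c}}(z)\phi_n^*(z)\,\mathrm{d}z
\end{align}
for every $n$ with $\lambda_n>0$. Each $\Psi_n$ is a linear functional of a zero-mean circularly-symmetric complex Gaussian field, so the family $\{\Psi_n\}$ is jointly zero-mean circularly-symmetric complex Gaussian.

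Next compute the covariance. Exchanging expectation and integration by Fubini, which is justified because $R_{\rm{c}}$ is bounded and $\mathcal{Z}$ is compact, gives
\begin{align}
\mathbb{E}\{\Psi_n\Psi_{n'}^*\}=\frac{1}{\sqrt{\lambda_n\lambda_{n'}}}\iint_{\mathcal{Z}^2}R_{\rm{c}}(z,z')\phi_n^*(z)\phi_{n'}(z')\,\mathrm{d}z\,\mathrm{d}z'.
\end{align}
The eigen-relation $\int_{\mathcal{Z}} R_{\rm{c}}(z,z')\phi_{n'}(z')\,\mathrm{d}z'=\lambda_{n'}\phi_{n'}(z)$ together with orthonormality reduces this to $\delta_{n,n'}$. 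Because the $\Psi_n$ are jointly Gaussian and uncorrelated, they are independent, so $\{\Psi_n\}$ are i.i.d.\ ZUCG.

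\textbf{Step 2 (convergence of the expansion).} Show that $\sum_{n\le N}\sqrt{\lambda_n}\phi_n(z)\Psi_n\to \hat{h}_{\rm{c}}(z)$ in mean square. Expanding the error gives
\begin{align}
\mathbb{E}\Big|\hat{h}_{\rm{c}}(z)-\sum_{n\le N}\sqrt{\lambda_n}\phi_n(z)\Psi_n\Big|^2 = R_{\rm{c}}(z,z)-\sum_{n\le N}\lambda_n|\phi_n(z)|^2,
\end{align}
using the cross term $\mathbb{E}\{\hat{h}_{\rm{c}}(z)\Psi_n^*\}=\sqrt{\lambda_n}\phi_n(z)$, which is again computed via the eigen-relation.

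\textbf{Step 3 (Mercer's theorem; main obstacle).} The crux is showing that this error vanishes, which requires the diagonal of the eigen-expansion to converge pointwise. The sinc kernel is continuous, Hermitian, and positive semidefinite; the last property holds because it is the Fourier transform of a nonnegative spectral density on $[-k_0,k_0]$, as seen in \eqref{Rc}. On the compact interval $\mathcal{Z}$, Mercer's theorem then gives $R_{\rm{c}}(z,z')=\sum_n\lambda_n\phi_n(z)\phi_n^*(z')$ with uniform convergence. Hence $\sum_n\lambda_n|\phi_n(z)|^2=R_{\rm{c}}(z,z)=1$, and the error goes to zero.

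\textbf{Step 4 (conclusion).} Mean-square limits of Gaussian fields are Gaussian, and the limit series has autocorrelation $\sum_n\lambda_n\phi_n(z)\phi_n^*(z')=R_{\rm{c}}(z,z')$. Both sides of \eqref{hc_sta} are therefore zero-mean circularly-symmetric complex Gaussian fields with identical covariance, so all their finite-dimensional distributions coincide, which establishes \eqref{hc_sta}. Indices with $\lambda_n=0$, if any, contribute nothing to the expansion, so they can be assigned arbitrary independent ZUCG variables without changing the sum.
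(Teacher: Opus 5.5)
Your argument is correct, but it runs in the opposite direction from the paper's.

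\textbf{The paper's route (synthesis).} It takes an arbitrary i.i.d.\ ZUCG sequence $\{\Psi_n\}$ and forms $\bar{h}(z)=\sum_n\sqrt{\lambda_n}\phi_n(z)\Psi_n$. It then computes $\mathbb{E}\{\bar{h}(z)\bar{h}^*(z')\}=\sum_n\lambda_n\phi_n(z)\phi_n^*(z')=R_{\rm{c}}(z,z')$ from the eigendecomposition. It concludes because a zero-mean Gaussian field is determined by its autocorrelation. This is essentially your Step~4 on its own.

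\textbf{Your route (analysis, Karhunen--Lo\`eve).} You define the $\Psi_n$ as normalized projections of $\hat{h}_{\rm{c}}$ onto the eigenfunctions and check that they are i.i.d.\ ZUCG. Mercer's theorem then gives mean-square convergence of the partial sums to $\hat{h}_{\rm{c}}(z)$.

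\textbf{What each buys.} The paper's route is shorter. However, it tacitly uses exactly the facts you make explicit: pointwise convergence of the eigen-expansion of $R_{\rm{c}}$ (Mercer), hence mean-square convergence of the random series, and circular symmetry, so that the covariance alone fixes the complex Gaussian law.

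Your route also yields a stronger, pathwise identity, which is what Corollary~\ref{cor_g_sta} really needs. The residual has vanishing expected energy, by the trace identity $\int_{\mathcal{Z}}R_{\rm{c}}(z,z)\mathrm{d}z=\sum_n\lambda_n$. Pythagoras/Parseval then gives $g_{\rm{c}}=\sum_n\lambda_n\lvert\Psi_n\rvert^2$ almost surely. By contrast, equality of finite-dimensional distributions needs a further (routine) argument to pass to the path functional $\int_{\mathcal{Z}}\lvert\hat{h}_{\rm{c}}(z)\rvert^2\mathrm{d}z$.

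\textbf{Two small remarks.} First, once Step~3 is done, Step~4 is redundant: mean-square equality at each $z$ already gives equality of all finite-dimensional distributions. Second, in Step~1, ``uncorrelated implies independent'' for complex Gaussians also requires the vanishing pseudo-covariance $\mathbb{E}\{\Psi_n\Psi_{n'}\}=0$. This is inherited from the circular symmetry of $W$.
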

\vspace{-3pt}
\begin{IEEEproof}
	Please refer to Appendix \ref{proof_lem_h_sta} for more details.
\end{IEEEproof}
Based on Lemma~\ref{lem_h_sta}, we can further deduce the statistics for the communication channel gain which is given by $g_{\mathrm{c}}\triangleq\int_{\mathcal{A} _{\mathrm{t}}}{\left| h_{\mathrm{c}}(\mathbf{t}) \right|^2}\mathrm{d}\mathbf{t}=\int_{\mathcal{Z}}{\lvert \hat{h}_{\mathrm{c}}\left( z \right) \rvert^2}\mathrm{d}z $ as follows.
\begin{corollary}\label{cor_g_sta}
The statistics of the communication channel gain satisfy
\begin{align}\label{sum_exp}
	g_{\mathrm{c}}\overset{d}{=}\sum\nolimits_{n=1}^{\infty}{\lambda _n\left| \Psi  _n \right|^2}.
\end{align}
Since $\Psi  _n\sim \mathcal{C} \mathcal{N} \left( 0,1 \right) $, we have $\left| \Psi  _n \right|^2\sim \mathsf{Exp}\left( 1 \right) $. The results of \eqref{sum_exp} thus indicate that $g_{\mathrm{c}}$ is statistically equivalent to a weighted sum of an infinite number of i.i.d. exponentially distributed random variables.
\end{corollary}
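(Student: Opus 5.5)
The plan is to start from the Karhunen--Lo\`eve representation in Lemma~\ref{lem_h_sta} and let the orthonormality of the eigenfunctions $\{\phi_n\}$ collapse the integral defining $g_{\mathrm{c}}$ into a sum. Since $g_{\mathrm{c}}=\int_{\mathcal{Z}}\lvert\hat{h}_{\mathrm{c}}(z)\rvert^2\mathrm{d}z$ is a measurable functional of the whole field $\hat{h}_{\mathrm{c}}(\cdot)$, equality in distribution of the fields transfers to $g_{\mathrm{c}}$. This step requires reading \eqref{hc_sta} as equality in law of the processes, not merely of the marginals. I would justify this by noting that both sides are zero-mean Gaussian fields with the same covariance kernel $R_{\rm{c}}(z,z')$, by Mercer's theorem, so all finite-dimensional distributions coincide. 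It therefore suffices to compute $\int_{\mathcal{Z}}\lvert\sum_n\sqrt{\lambda_n}\phi_n(z)\Psi_n\rvert^2\mathrm{d}z$.

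Next, I would expand the squared modulus and interchange the sums with the integral:
\begin{align*}
\int_{\mathcal{Z}}\Big\lvert\sum\nolimits_{n}\sqrt{\lambda_n}\phi_n(z)\Psi_n\Big\rvert^2\mathrm{d}z=\sum\nolimits_{n}\sum\nolimits_{n'}\sqrt{\lambda_n\lambda_{n'}}\Psi_n\Psi_{n'}^*\int_{\mathcal{Z}}\phi_n(z)\phi_{n'}^*(z)\mathrm{d}z=\sum\nolimits_{n}\lambda_n\lvert\Psi_n\rvert^2,
\end{align*}
using $\int_{\mathcal{Z}}\phi_n\phi_{n'}^*\mathrm{d}z=\delta_{n,n'}$. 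This is Parseval's identity in $L^2(\mathcal{Z})$.

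The main technical point is justifying the interchange of the infinite sum with the integral. I would handle it as follows. The partial sums $\sum_{n\le N}\sqrt{\lambda_n}\phi_n\Psi_n$ converge in $L^2(\mathcal{Z})$ almost surely, because $\sum_n\lambda_n\lvert\Psi_n\rvert^2$ is almost surely finite. This finiteness holds since $\mathbb{E}\{\sum_n\lambda_n\lvert\Psi_n\rvert^2\}=\sum_n\lambda_n=\int_{\mathcal{Z}}R_{\rm{c}}(z,z)\mathrm{d}z=L_{\rm{t}}<\infty$ (trace class), and the summands are nonnegative. Because the $L^2$ norm is continuous, Parseval then applies to the limit, which gives \eqref{sum_exp}.

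Finally, for the distributional remark: if $\Psi_n\sim\mathcal{CN}(0,1)$, then its real and imaginary parts are i.i.d. $\mathcal{N}(0,1/2)$. Hence $2\lvert\Psi_n\rvert^2\sim\chi^2_2$, which is exponential with mean $2$, so $\lvert\Psi_n\rvert^2\sim\mathsf{Exp}(1)$. Independence of the $\lvert\Psi_n\rvert^2$ is inherited from the i.i.d. $\Psi_n$. Together these give the weighted-sum-of-exponentials interpretation.
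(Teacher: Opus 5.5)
Your proposal is correct and follows the paper's own argument: substitute the expansion of Lemma~\ref{lem_h_sta}, expand the squared modulus, interchange the double sum with the integral, and collapse it using the orthonormality of $\{\phi_n\}$. Your additional steps are valid refinements of points the paper leaves implicit: the almost-sure $L^2(\mathcal{Z})$ convergence of the partial sums via $\sum_n\lambda_n=\int_{\mathcal{Z}}R_{\rm{c}}(z,z)\mathrm{d}z=L_{\rm{t}}<\infty$, the transfer of equality in law from the field to the functional $g_{\mathrm{c}}$, and the $\chi^2_2$ argument for $\lvert\Psi_n\rvert^2\sim\mathsf{Exp}(1)$; one minor caveat is that equal covariance kernels determine the law of a complex Gaussian field only because both fields here are also circularly symmetric (zero pseudo-covariance).
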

\vspace{-3pt}
\begin{IEEEproof}
Based on \eqref{hc_sta}, we have
\begin{align}
	g_{\rm{c}}&\overset{d}{=}\int_{\mathcal{Z}}{\left| \sum\nolimits_{n=1}^{\infty}{\sqrt{\lambda _n}}\phi _n\left( z \right) \Psi  _n \right|^2}\mathrm{d}z\notag\\
	&=\int_{\mathcal{Z}}{\sum\nolimits_{n=1}^{\infty}{\sqrt{\lambda _n}}\phi _n( z ) \Psi  _n\sum\nolimits_{n^{\prime}=1}^{\infty}{\sqrt{\lambda _{n^{\prime}}}\phi _{n^{\prime}}^{*}( z )}\Psi  _{n^{\prime}}^{*}}\mathrm{d}z\notag\\
	&=\sum\nolimits_{n=1}^{\infty}{\sum\nolimits_{n^{\prime}=1}^{\infty}{\sqrt{\lambda _n\lambda _{n^{\prime}}}}}\Psi  _n\Psi  _{n^{\prime}}^{*}\int_{\mathcal{Z}}{\phi _n( z ) \phi _{n^{\prime}}^{*}( z )}\mathrm{d}z\notag\\
	&=\sum\nolimits_{n=1}^{\infty}{\lambda _n\left| \Psi  _n \right|}^2.
\end{align}
\end{IEEEproof}

However, analyzing the statistics of an infinite
weighted sum of exponentially distributed random variables
is challenging. To address this, we explore more properties of the eigenvalues of $R_{\rm{c}}( z,z^{\prime} )$, i.e., $\left\{ \lambda _n \right\} _{n=1}^{\infty}$. In particular, we define
\begin{equation}
K(z,z^{\prime})\triangleq \frac{1}{2\pi}\int_{-k_0}^{k_0}{\mathrm{e}^{\mathrm{j}\kappa _z\left( z-z^{\prime} \right)}\mathrm{d}\kappa _z}=\frac{k_0}{\pi}R_{\mathrm{c}}(z,z^{\prime}),\ z,z^{\prime}\in \mathcal{Z}.
\end{equation}
Let $\varepsilon _1\ge \varepsilon _2\ge ...\ge \varepsilon _{\infty}\ge 0$ denote the eigenvalues of $K(z,z^{\prime})$, which yields $\frac{\pi}{k_0}\varepsilon _n=\lambda _n$. For an arbitrary square-integrable function $f(z')$ defined on $z'\in \mathcal{Z}$, we introduce
\begin{equation}
\hat{f}\left( z \right) \triangleq \int_{\mathcal{Z}}{K(z,z^{\prime})f\left( z^{\prime} \right) \mathrm{d}z^{\prime}}, \	z\in \mathcal{Z},
\end{equation}
which can be rewritten as follows:
\begin{equation}
\hat{f}\left( z \right) =\mathbf{1}_{\mathcal{Z}}\left( z \right) \int_{\mathcal{Z}}{g(z-z^{\prime})\mathbf{1}_{\mathcal{Z}}\left( z^{\prime} \right) f\left( z^{\prime} \right) \mathrm{d}z^{\prime}}.	
\end{equation}
Here, $\mathbf{1}_{\mathcal{Z}}\left( \cdot \right) $ represent the indicator function over the set $\mathcal{Z}$, and the function $g\left( \cdot \right) $ is defined as 
\begin{equation}
	g(x)\triangleq \frac{1}{2\pi}\int_{-k_0}^{k_0}{\mathrm{e}^{\mathrm{j}x\kappa _z}\mathrm{d}\kappa _z}, \ x\in\mathbb{R},
\end{equation}
which corresponds to the inverse Fourier transform of $\mathbf{1}_{\left[ -k_0,k_0 \right]}\left( \kappa _z \right) $. In other words, the Fourier transform of $g(x)$ is an ideal filter over the band $\left[ -k_0,k_0 \right]$. Consequently, according Landau’s eigenvalue theorem \cite{landau1980eigenvalue,landau_tit}, the eigenvalues $\left\{ \varepsilon _n \right\} _{n=1}^{\infty}$ satisfy
\begin{align}\label{epsilon}
1\ge\varepsilon _1\ge \varepsilon _2\ge ...\ge \varepsilon _{\infty}\ge 0,
\end{align}
and are governed by the effective DoFs:
\begin{align}\label{dof}
\mathsf{DoF}=\frac{1}{2\pi}\mu \left( \left[ -k_0,k_0 \right] \right) \mu \left( \mathcal{Z} \right) =\frac{2k_0L_{\rm{t}}}{2\pi}=\frac{2L_{\rm{t}}}{\lambda}.	
\end{align}
As $L_{\rm{t}}\rightarrow \infty$ or $\mathsf{DoF}\rightarrow \infty$, the eigenvalues $\left\{ \varepsilon _n \right\} _{n=1}^{\infty}$ asymptotically polarize. Specifically, for any $\varepsilon>0$, it holds that
\begin{align}\label{polarize}
\lvert n\!:\varepsilon _n\!>\varepsilon \rvert=\mathsf{DoF}+\!\left( \frac{1}{\pi ^2}\log \frac{1\!-\!\sqrt{\varepsilon}}{\sqrt{\varepsilon}} \right) \log \mathsf{DoF}+\!o( \log \mathsf{DoF} ) .
\end{align} 
\begin{remark}\label{rem_landau}
The results in \eqref{epsilon} and \eqref{polarize} indicate that as $L_{\mathrm{t}}$ increases, the leading $\mathsf{DoF}$ eigenvalues $\left\{ \varepsilon _n \right\} _{n=1}^{\mathsf{DoF}}$ approach one, while the rest are near zero, with a transition band whose width scales proportionally to $\log \mathsf{DoF}$. 
\end{remark}

These observations imply that, for small $n$, the eigenvalues $\lambda_n$ decay slowly until reaching a critical point at $n=\mathsf{DoF}$, beyond which they decrease rapidly. This step-like behavior becomes more pronounced as the physical length $L_{\rm{t}}$ increases. Since CAPAs are typically electromagnetically large arrays with $L_{\rm{t}}\gg \lambda$, the sum $\sum\nolimits_{n=1}^{\infty}{\lambda _n\left| \Psi  _n \right|^2}$ is dominated by the first $\mathsf{DoF}$ terms, which yields
\begin{align}
	g_{\mathrm{c}}\overset{d}{=}\sum\nolimits_{n=1}^{\infty}{\lambda _n\left| \Psi  _n \right|^2}\approx \sum\nolimits_{n=1}^{\mathsf{DoF}}{\lambda _n\left| \Psi  _n \right|^2}.
\end{align}
 This suggests that the communication channel gain can be approximated as a finite weighted sum of i.i.d. exponentially distributed random variables, each following $\mathsf{Exp}\left( 1 \right)$. Consequently, its PDF is given as follows.
 \vspace{-3pt}
 \begin{lemma}\label{lem_pdf}
 The PDF of communication channel gain $\int_{\mathcal{A} _{\mathrm{t}}}{\left| h_{\mathrm{c}}(\mathbf{t}) \right|^2}\mathrm{d}\mathbf{t}$ is given by \cite{sum_gamma}
 \begin{align}\label{pdf}
 	f_{g}( x ) =\frac{\lambda _{\mathsf{DoF}}^{\mathsf{DoF}}}{\prod_{n=1}^\mathsf{DoF}{\lambda _n}}\sum_{m=0}^{\infty}{\frac{\xi _mx^{\mathsf{DoF}+m-1}}{\lambda _{\mathsf{DoF}}^{\mathsf{DoF}+m}\Gamma( \mathsf{DoF}+m )}}{\rm{e}}^{-\frac{x}{\lambda _{\mathsf{DoF}}}},
 \end{align}
 where $\Gamma \left( x \right) =\int_0^{\infty}{t^{x-1}\mathrm{e}^{-t}\mathrm{d}t}$ is the Gamma function, $\xi _{0}=1$, and $\xi_m$ ($m>0$) is calculated recursively as $\xi_m=\frac{1}{m}\sum_{i=1}^{m}\xi_{m-i}{\sum_{n=1}^{\mathsf{DoF}}{(1-{\lambda_{\mathsf{DoF}}}/{\lambda_{n}})^i}}$. 
 \end{lemma}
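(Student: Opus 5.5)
We start from the truncated representation just derived, $g_{\mathrm{c}}\approx\sum_{n=1}^{\mathsf{DoF}}\lambda_n\lvert\Psi_n\rvert^2$. By Corollary~\ref{cor_g_sta}, the summands $X_n=\lambda_n\lvert\Psi_n\rvert^2$ are independent with $X_n\sim\mathsf{Exp}(1/\lambda_n)$, i.e., Gamma variables with unit shape and scale $\lambda_n$. The lemma is then Moschopoulos' series for a sum of independent Gamma variables with distinct scales, using the smallest retained eigenvalue $\lambda_{\mathsf{DoF}}$ as the reference scale. Since $\lambda_1\ge\cdots\ge\lambda_{\mathsf{DoF}}$, this is the minimum scale, which is what makes the expansion converge.

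\textbf{Step 1: moment generating function.} I would write
\begin{align}
\mathbb{E}\{\mathrm{e}^{sg}\}=\prod_{n=1}^{\mathsf{DoF}}(1-\lambda_n s)^{-1},\quad s<1/\lambda_1 .
\end{align}
Put $\beta=\lambda_{\mathsf{DoF}}$ and $\rho_n=1-\beta/\lambda_n\in[0,1)$. Then each factor factors as
\begin{align}
(1-\lambda_n s)^{-1}=\frac{\beta}{\lambda_n}(1-\beta s)^{-1}\bigl(1-\rho_n(1-\beta s)^{-1}\bigr)^{-1}.
\end{align}
Taking the product over $n$ gives
\begin{align}
\mathbb{E}\{\mathrm{e}^{sg}\}=C(1-\beta s)^{-\mathsf{DoF}}\exp\Bigl(\sum_{i\ge1}\frac{\gamma_i}{i}(1-\beta s)^{-i}\Bigr),
\end{align}
where $C=\beta^{\mathsf{DoF}}/\prod_n\lambda_n$ and $\gamma_i=\sum_{n=1}^{\mathsf{DoF}}\rho_n^i$. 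This uses $-\log(1-u)=\sum_i u^i/i$, valid because $\lvert\rho_n(1-\beta s)^{-1}\rvert<1$ on a suitable range of $s$.

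\textbf{Step 2: the recursion for $\xi_m$.} Expand the exponential as a power series in $u=(1-\beta s)^{-1}$, namely $\exp(\sum_i\gamma_i u^i/i)=\sum_{m\ge0}\xi_m u^m$. Differentiating both sides in $u$ and matching coefficients gives $\xi_0=1$ and
\begin{align}
\xi_m=\frac{1}{m}\sum_{i=1}^m i\,\frac{\gamma_i}{i}\,\xi_{m-i}=\frac{1}{m}\sum_{i=1}^m\gamma_i\xi_{m-i},
\end{align}
which is exactly the stated recursion. The MGF then becomes $C\sum_m\xi_m(1-\beta s)^{-(\mathsf{DoF}+m)}$. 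Each term is the MGF of a $\mathrm{Gamma}(\mathsf{DoF}+m,\beta)$ density $x^{\mathsf{DoF}+m-1}\mathrm{e}^{-x/\beta}/(\beta^{\mathsf{DoF}+m}\Gamma(\mathsf{DoF}+m))$. Inverting term by term yields \eqref{pdf}.

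\textbf{Step 3: justification (main obstacle).} The delicate part is the interchange of summation with the transform inversion, together with showing that the series defines a valid density. First, $\xi_m\ge0$ by induction, since $\gamma_i\ge0$. Second, $C\sum_m\xi_m=1$, obtained by evaluating the MGF at $s=0$ (i.e., $u=1$). Third, $\xi_m\le\mathsf{DoF}^{\,}\rho_{\max}^m$-type geometric bounds hold with $\rho_{\max}=\max_n\rho_n<1$, so the series converges. Together these give a convergent nonnegative mixture of Gamma densities, and monotone convergence permits termwise inversion by uniqueness of the Laplace transform.

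Two caveats should be stated. The eigenvalues are assumed positive, which holds for the sinc kernel since it is strictly positive definite. The result is exact for the truncated sum, and it is to $g_{\mathrm{c}}$ itself only through the Landau-based approximation above; the original formula is attributed to \cite{sum_gamma}.
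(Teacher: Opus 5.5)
Your proposal is correct and follows the paper's route: truncate $g_{\mathrm{c}}$ to its first $\mathsf{DoF}$ terms via Landau's theorem, then apply Moschopoulos' gamma-sum series with the smallest retained eigenvalue $\lambda_{\mathsf{DoF}}$ as reference scale; the paper only cites \cite{sum_gamma}, and your MGF factorization correctly re-derives that series and its recursion. One small slip: the bound $\xi_m\le\mathsf{DoF}\,\rho_{\max}^m$ is false in general, since the $\xi_m$ are complete homogeneous symmetric polynomials in the $\rho_n$, bounded only by $\binom{m+\mathsf{DoF}-1}{\mathsf{DoF}-1}\rho_{\max}^m$; you do not need it, because $\xi_m\ge0$ and $C\sum_m\xi_m=1$ (from $u=1$ lying inside the radius of convergence $1/\rho_{\max}>1$) already justify termwise inversion.
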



\section{Performance Analysis}\label{section_I_CSI}
In this section, we analyze the performance of communications and sensing under the different beamforming designs.

\subsection{Sensing-Centric Design}
In the S-C design, the beamformer is designed to maximize the SR, which is given by
\begin{align}\label{SC_Beamforming_Design}
	w_{\rm{s}}({\mathbf{t}})=\argmax_{\int_{{\mathcal{A}}_{\rm{t}}}\lvert{w}({\mathbf{t}})\rvert^2{\rm{d}}{\mathbf{t}}=1}{\mathcal{R}}_{{\rm{s}}}({{w}}({\mathbf{t}}))
	=\frac{h_{\rm{s}}^{*}({\mathbf{t}})}
	{\sqrt{{\int_{{\mathcal{A}}_{\rm{t}}}\lvert h_{\rm{s}}({\mathbf{t}})\rvert^2{\rm{d}}{\mathbf{t}}}}}.
\end{align} 

\subsubsection{Performance of Sensing}
Substituting $w({\mathbf{t}})=w_{\rm{s}}({\mathbf{t}})$ into \eqref{SR_define_2_step1} yields
\begin{align}\label{ins_sc_sr_def}
	{\mathcal{R}}_{\rm{s}}^{\rm{s}}=\frac{1}{L}\log _2\left( 1+\overline{\gamma }_{\mathrm{s}}L\alpha _{\mathrm{s}}\int_{\mathcal{A} _{\mathrm{r}}}{\left| h_{\mathrm{s}}(\mathbf{r}) \right|^2}\mathrm{d}\mathbf{r}\int_{\mathcal{A} _{\mathrm{t}}}{\left| h_{\mathrm{s}}(\mathbf{t}) \right|^2}\mathrm{d}\mathbf{t} \right) .
\end{align}
A closed-form expression for ${\mathcal{R}}_{\rm{s}}^{\rm{s}}$ is given as follows.
\vspace{-5pt}
\begin{theorem}\label{the_ins_sc_sr}
	In the S-C design, the SR can be expressed as follows:
	\begin{equation}\label{ins_sc_sr}
			{\mathcal{R} _{{\mathrm{s}}}^{\mathrm{s}}}=\frac{1}{L}\log _2\left( 1+{\overline{\gamma }_{\mathrm{s}}L\alpha _{\mathrm{s}}G_{\mathrm{r}}G_{\mathrm{t}}}\right).
	\end{equation}
	Here,
	\begin{equation}
	G_{\mathrm{t}}=\frac{\zeta \left( \frac{d}{2},\frac{d}{2}+L_{\mathrm{t}} \right)}{4\pi \sqrt{p_{\mathrm{s},x}^{2}+p_{\mathrm{s},y}^{2}}}, \
	G_{\mathrm{r}}=\frac{\zeta \left( -\frac{d}{2}-L_{\mathrm{r}},-\frac{d}{2} \right)}{4\pi \sqrt{p_{\mathrm{s},x}^{2}+p_{\mathrm{s},y}^{2}}}, 
	\end{equation}
	 where $\zeta \left( x,y \right) \triangleq \arctan \left( \frac{y-p_{\mathrm{s},z}}{\sqrt{p_{\mathrm{s},x}^{2}+p_{\mathrm{s},y}^{2}}} \right) -\arctan \left( \frac{x-p_{\mathrm{s},z}}{\sqrt{p_{\mathrm{s},x}^{2}+p_{\mathrm{s},y}^{2}}} \right) $.
\end{theorem}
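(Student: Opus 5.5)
The plan is to start from \eqref{ins_sc_sr_def}. That expression already isolates the two aperture integrals $\int_{\mathcal{A}_{\mathrm{t}}}\lvert h_{\mathrm{s}}(\mathbf{t})\rvert^2\mathrm{d}\mathbf{t}$ and $\int_{\mathcal{A}_{\mathrm{r}}}\lvert h_{\mathrm{s}}(\mathbf{r})\rvert^2\mathrm{d}\mathbf{r}$. The theorem therefore reduces to showing that these integrals equal $G_{\mathrm{t}}$ and $G_{\mathrm{r}}$, respectively.

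First, insert the spherical-wave model. For a point $\mathbf{t}=[0,0,z]^{\mathsf{T}}$ on either aperture, the phase term has unit modulus, so
\begin{align}
\lvert h_{\mathrm{s}}(\mathbf{t})\rvert^2=\frac{1}{4\pi\lVert\mathbf{t}-\mathbf{p}_{\mathrm{s}}\rVert^2}=\frac{1}{4\pi\left(p_{\mathrm{s},x}^2+p_{\mathrm{s},y}^2+(z-p_{\mathrm{s},z})^2\right)}.
\end{align}
The same expression holds for $\lvert h_{\mathrm{s}}(\mathbf{r})\rvert^2$, since the distance is symmetric. Each aperture integral thus becomes a one-dimensional integral over $z$. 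For the transmit CAPA the range is $[d/2,\,d/2+L_{\mathrm{t}}]$, and for the receive CAPA it is $[-d/2-L_{\mathrm{r}},\,-d/2]$.

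Second, write $\rho\triangleq\sqrt{p_{\mathrm{s},x}^2+p_{\mathrm{s},y}^2}$ and use the elementary antiderivative
\begin{align}
\int_x^y\frac{\mathrm{d}z}{\rho^2+(z-p_{\mathrm{s},z})^2}=\frac{1}{\rho}\left[\arctan\frac{y-p_{\mathrm{s},z}}{\rho}-\arctan\frac{x-p_{\mathrm{s},z}}{\rho}\right]=\frac{\zeta(x,y)}{\rho}.
\end{align}
This antiderivative follows from the substitution $u=(z-p_{\mathrm{s},z})/\rho$. Multiplying by $1/(4\pi)$ and evaluating at the transmit and receive limits gives $G_{\mathrm{t}}$ and $G_{\mathrm{r}}$ exactly as stated. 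Substituting both back into \eqref{ins_sc_sr_def} yields \eqref{ins_sc_sr}.

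There is no real obstacle here. The only care needed is at the parameterization step: the aperture measure $\mathrm{d}\mathbf{t}$ must be read as $\mathrm{d}z$ along the $z$-axis, and the integration limits must be matched correctly to each CAPA. It is also implicitly assumed that $\rho>0$, i.e., the target does not lie on the array axis, so that the integrand stays bounded and $\zeta$ is well defined.
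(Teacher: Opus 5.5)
Your proposal is correct and follows essentially the same route as the paper's proof in Appendix~D: you substitute the spherical-wave model, reduce each aperture integral to $\frac{1}{4\pi}\int \frac{\mathrm{d}z}{(z-p_{\mathrm{s},z})^2+p_{\mathrm{s},x}^2+p_{\mathrm{s},y}^2}$ over the correct limits, evaluate it with the arctangent antiderivative (the paper cites a table integral for this step), and insert $G_{\mathrm{t}}$ and $G_{\mathrm{r}}$ into the S-C SR expression. One notational point: your symbol $\rho$ for $\sqrt{p_{\mathrm{s},x}^2+p_{\mathrm{s},y}^2}$ clashes with the paper's $\rho=\int_{\mathcal{A}_{\mathrm{t}}}h_{\mathrm{c}}(\mathbf{t})h_{\mathrm{s}}^{*}(\mathbf{t})\mathrm{d}\mathbf{t}$, so use a different letter.
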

\vspace{-5pt}
\begin{IEEEproof}
	Please refer to Appendix~\ref{proof_the_ins_sc_sr} for more details.
\end{IEEEproof}
To gain further insights, we then examine the asymptotic SR in the high-SNR regime to obtain the maximal multiplexing gain, i.e., the high-SNR slope, which is defined as 
\begin{equation}
	\mathcal{S} _i=\lim_{P\rightarrow \infty} \frac{\mathcal{R} _i}{\log _2P}, \ i\in\{\mathrm{c},\mathrm{s}\}.
\end{equation}

\begin{corollary}
As $P\rightarrow \infty$, the asymptotic SR of the S-C design satisfies
	\begin{equation}
		\mathcal{R} _{\mathrm{s}}^{\mathrm{s}}\simeq \frac{1}{L}\log _2\overline{\gamma }_{\mathrm{s}}+\frac{1}{L}\log _2\left( L\alpha _{\mathrm{s}}G_{\mathrm{r}}G_{\mathrm{t}} \right). 
	\end{equation}
	This suggests that the high-SNR slope of the SR achieved by the S-C design is $\frac{1}{L}$.
\end{corollary}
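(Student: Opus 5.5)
This corollary follows directly from the closed form in Theorem~\ref{the_ins_sc_sr}, since the SR under the S-C design is deterministic and there is no fading to average over. The plan is to write $\overline{\gamma}_{\mathrm{s}}=P/\sigma_{\mathrm{s}}^2$ and note that $G_{\mathrm{t}}$, $G_{\mathrm{r}}$, $L$ and $\alpha_{\mathrm{s}}$ do not depend on $P$. They depend only on the geometry through $\zeta(\cdot,\cdot)$. Moreover, $G_{\mathrm{t}}$ and $G_{\mathrm{r}}$ are strictly positive whenever $p_{\mathrm{s},x}^2+p_{\mathrm{s},y}^2>0$, because $\arctan$ is strictly increasing and both intervals have positive length. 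So $c\triangleq L\alpha_{\mathrm{s}}G_{\mathrm{r}}G_{\mathrm{t}}>0$ is a fixed constant.

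Next I factor inside the logarithm:
\begin{equation*}
\log_2(1+\overline{\gamma}_{\mathrm{s}}c)=\log_2\overline{\gamma}_{\mathrm{s}}+\log_2 c+\log_2\!\left(1+\frac{1}{\overline{\gamma}_{\mathrm{s}}c}\right).
\end{equation*}
The last term vanishes as $P\to\infty$, since $\overline{\gamma}_{\mathrm{s}}\to\infty$. It is $O(1/P)$. Dividing by $L$ gives the stated asymptotic expression with an $o(1)$ error, which is what $\simeq$ means here.

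For the high-SNR slope, I will substitute $\log_2\overline{\gamma}_{\mathrm{s}}=\log_2P-\log_2\sigma_{\mathrm{s}}^2$ into $\mathcal{S}_{\mathrm{s}}=\lim_{P\to\infty}\mathcal{R}_{\mathrm{s}}^{\mathrm{s}}/\log_2P$. Every term other than $\frac{1}{L}\log_2P$ stays bounded, so its ratio to $\log_2P$ tends to zero, and the limit is $1/L$.

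There is no real obstacle. The only point needing care is confirming $c>0$, so that $\log_2 c$ is finite. This holds whenever the target is not on the array axis, which the closed form of Theorem~\ref{the_ins_sc_sr} already presupposes.
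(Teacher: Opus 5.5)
Your proposal is correct and takes the same route as the paper: apply $\log_2(1+x)\simeq\log_2 x$ as $x\to\infty$ to the closed form of Theorem~\ref{the_ins_sc_sr}, then read off the slope $1/L$. Your explicit $O(1/P)$ remainder and the check that $L\alpha_{\mathrm{s}}G_{\mathrm{r}}G_{\mathrm{t}}>0$ simply make that one-line argument rigorous.
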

\vspace{-5pt}
\begin{IEEEproof}
	The results can be easily obtained based on the fact $\lim_{x\rightarrow \infty} \log _2\left( 1+x \right) \simeq \log _2x$.
\end{IEEEproof}

\subsubsection{Performance of Communications}
By inserting $w(\mathbf{t})=w_\mathrm{s}(\mathbf{t})$ into \eqref{CR_define}, the CR under the S-C design is expressed as
\begin{align}\label{def_sc_ecr}
	{\mathcal{R} _{{\mathrm{c}}}^{\mathrm{s}}}=\log _2\left( 1+\overline{\gamma }_{\mathrm{c}}\frac{\left|\rho  \right|^2}{\int_{\mathcal{A} _{\mathrm{t}}}{\left| h_{\mathrm{s}}(\mathbf{t}) \right|^2}\mathrm{d}\mathbf{t}} \right) ,
\end{align}	
where $\rho\triangleq\int_{\mathcal{A} _{\mathrm{t}}}{h_{\mathrm{c}}}(\mathbf{t})h_{\mathrm{s}}^{*}(\mathbf{t})\mathrm{d}\mathbf{t}$ is a random variable.
In the following theorem, we derive the ECR.
\vspace{-3pt}
\begin{theorem}\label{the_ins_sc_cr}
In the S-C design, the ECR, i.e.,  $\overline{\mathcal{R} }_{\mathrm{c}}^{\mathrm{s}}=\mathbb{E}\{{\mathcal{R} }_{\mathrm{c}}^{\mathrm{s}}\}$, can be expressed as follows:	
\begin{equation}\label{ins_sc_cr}
	\overline{\mathcal{R} }_{\mathrm{c}}^{\mathrm{s}}=-\frac{1}{\ln 2}\mathrm{e}^{\frac{G_{\mathrm{t}} }{\overline{\gamma }_{\mathrm{c}}\Xi}}\mathrm{Ei}\left( -\frac{G_\mathrm{t} }{\overline{\gamma }_{\mathrm{c}}\Xi} \right) ,
\end{equation}
where $\Xi  ={\int_{\mathcal{Z}}{\int_{\mathcal{Z}}{\hat{h}_{\mathrm{s}}\left( z \right) R_{\mathrm{c}}\left( z,z^{\prime} \right) \hat{h}_{\mathrm{s}}^{*}\left( z^{\prime} \right)}}\mathrm{d}z\mathrm{d}z^{\prime}}$ with $\hat{h}_{\mathrm{s}}\left( z \right)$ defined as \eqref{hshat}, and $\mathrm{Ei}\left( x \right) =-\int_{-x}^{\infty}{\frac{{\rm{e}}^{-t}}{t}dt}$ is the exponential integral function.
\end{theorem}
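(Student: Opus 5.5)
The plan is to show that $\rho$ is a zero-mean circularly-symmetric complex Gaussian random variable. Then $|\rho|^2$ is exponentially distributed, and the ECR reduces to the standard Rayleigh-fading integral.

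\emph{Step 1: parametrize on $\mathcal{Z}$.} Write $\mathbf{t}=[0,0,z]^{\mathsf{T}}$ and $\hat{h}_{\mathrm{s}}(z)=h_{\mathrm{s}}([0,0,z]^{\mathsf{T}})$, as in \eqref{hshat}. Then
\[
\rho=\int_{\mathcal{Z}}\hat{h}_{\mathrm{c}}(z)\hat{h}_{\mathrm{s}}^{*}(z)\mathrm{d}z .
\]
Because $\hat{h}_{\mathrm{c}}(z)$ is a zero-mean complex Gaussian random field, $\rho$ is a linear functional of it. Hence $\rho$ is zero-mean circularly-symmetric complex Gaussian. Circularity follows from that of $W(\mathbf{k},\bm{\kappa})$, or directly from Lemma~\ref{lem_h_sta}: substituting \eqref{hc_sta} gives $\rho\overset{d}{=}\sum_n\sqrt{\lambda_n}\,\langle\phi_n,\hat{h}_{\mathrm{s}}\rangle\,\Psi_n$, a linear combination of i.i.d. ZUCG variables.

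\emph{Step 2: compute the variance.} Exchanging expectation and integrals and using Lemma~\ref{lem_Rz},
\[
\mathbb{E}\{|\rho|^2\}=\int_{\mathcal{Z}}\int_{\mathcal{Z}}\hat{h}_{\mathrm{s}}^{*}(z)\,R_{\mathrm{c}}(z,z')\,\hat{h}_{\mathrm{s}}(z')\,\mathrm{d}z\,\mathrm{d}z' .
\]
Since $R_{\mathrm{c}}$ is real and symmetric, this equals $\Xi$ after relabeling $z\leftrightarrow z'$. Therefore $|\rho|^2\sim\mathsf{Exp}(1/\Xi)$.

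\emph{Step 3: use the denominator.} In Theorem~\ref{the_ins_sc_sr}'s proof (Appendix~\ref{proof_the_ins_sc_sr}), the quantity $\int_{\mathcal{A}_{\mathrm{t}}}|h_{\mathrm{s}}(\mathbf{t})|^2\mathrm{d}\mathbf{t}$ evaluates to $G_{\mathrm{t}}$. The instantaneous SNR is therefore $X=\overline{\gamma}_{\mathrm{c}}|\rho|^2/G_{\mathrm{t}}$, which is exponential with mean $a=\overline{\gamma}_{\mathrm{c}}\Xi/G_{\mathrm{t}}$.

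\emph{Step 4: evaluate the expectation.} The ECR is
\[
\frac{1}{\ln 2}\int_0^\infty \ln(1+x)\,\frac{1}{a}\,\mathrm{e}^{-x/a}\,\mathrm{d}x .
\]
Integrating by parts gives $\frac{1}{\ln 2}\int_0^\infty \frac{\mathrm{e}^{-x/a}}{1+x}\mathrm{d}x$. The substitution $t=(1+x)/a$ then reduces this to $\frac{1}{\ln 2}\mathrm{e}^{1/a}E_1(1/a)=-\frac{1}{\ln 2}\mathrm{e}^{1/a}\mathrm{Ei}(-1/a)$, which is exactly \eqref{ins_sc_cr}.

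\emph{Main obstacle.} The only delicate point is justifying Gaussianity and circular symmetry of $\rho$, together with the interchange of expectation and integration in Step~2. I would handle this through the Karhunen--Loève form \eqref{hc_sta}. Square-summability of the coefficients follows from $\sum_n\lambda_n=\int_{\mathcal{Z}}R_{\mathrm{c}}(z,z)\mathrm{d}z=L_{\mathrm{t}}<\infty$ and boundedness of $\hat{h}_{\mathrm{s}}$ on $\mathcal{Z}$, since the target is off the aperture. With that, the series converges in mean square to a circular Gaussian, and its variance is $\sum_n\lambda_n|\langle\phi_n,\hat{h}_{\mathrm{s}}\rangle|^2=\Xi$ by the eigendecomposition of $R_{\mathrm{c}}$.
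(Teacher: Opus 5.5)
Your proposal is correct and follows the paper's route. Like the paper, you show $\rho\sim\mathcal{CN}(0,\Xi)$, so that $|\rho|^2\sim\mathsf{Exp}(1/\Xi)$, then substitute $\int_{\mathcal{A}_{\mathrm{t}}}|h_{\mathrm{s}}(\mathbf{t})|^2\mathrm{d}\mathbf{t}=G_{\mathrm{t}}$ and evaluate the resulting Rayleigh-type integral. The differences are in detail only: you justify the Gaussianity and the variance $\Xi$ (which the paper calls ``straightforward'') via the Karhunen--Lo\`eve form and the real symmetry of $R_{\mathrm{c}}$, and you evaluate the final integral by parts rather than quoting a Gradshteyn--Ryzhik table entry.
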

\vspace{-5pt}
\begin{IEEEproof}
	Please refer to Appendix~\ref{proof_the_ins_sc_cr} for more details.
\end{IEEEproof}

\begin{corollary}\label{cor_ins_sc_cr}
	The high-SNR ECR under the S-C design satisfies
	\begin{align}\label{asy_sc_cr}
		\overline{\mathcal{R} }_{\mathrm{c}}^{\mathrm{s}}\simeq \log _2\overline{\gamma }_{\mathrm{c}}-\log _2\frac{G_{\mathrm{t}}}{\Xi}   -\frac{\mathcal{C}}{\ln 2},
	\end{align}	
	where $\mathcal{C}\approx0.5772$ is the Euler constant. This suggests that the high-SNR slope of the CR achieved by the S-C design is one.
\end{corollary}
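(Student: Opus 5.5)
We work directly from the closed form of Theorem~\ref{the_ins_sc_cr}. Set $x\triangleq \frac{G_{\mathrm{t}}}{\overline{\gamma}_{\mathrm{c}}\Xi}$, so that $\overline{\mathcal{R}}_{\mathrm{c}}^{\mathrm{s}}=-\frac{1}{\ln 2}\mathrm{e}^{x}\mathrm{Ei}(-x)$. As $P\rightarrow\infty$ we have $\overline{\gamma}_{\mathrm{c}}\rightarrow\infty$, and therefore $x\rightarrow 0^{+}$, since $G_{\mathrm{t}}$ and $\Xi$ do not depend on $P$.

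Before expanding, we check that $\Xi>0$. This holds because $R_{\mathrm{c}}$ is a positive semidefinite kernel. Indeed, $\Xi=\sum_n\lambda_n\lvert\langle \hat h_{\mathrm{s}},\phi_n\rangle\rvert^2$, and this is nonzero for a nontrivial $\hat h_{\mathrm{s}}$, because the sinc kernel restricted to a finite interval is strictly positive definite (its eigenfunctions are complete and all $\lambda_n>0$). Hence $x$ is well defined and positive.

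The key step is the small-argument series of the exponential integral:
\begin{align}
\mathrm{Ei}(-x)=\mathcal{C}+\ln x+\sum\nolimits_{k=1}^{\infty}\frac{(-x)^k}{k\,k!}=\mathcal{C}+\ln x+O(x),\qquad x\rightarrow0^{+}.
\end{align}
Combining this with $\mathrm{e}^{x}=1+O(x)$ gives
\begin{align}
-\mathrm{e}^{x}\mathrm{Ei}(-x)=-\ln x-\mathcal{C}+O(x\ln x).
\end{align}
Dividing by $\ln 2$ and substituting $x$ back, we obtain
\begin{align}
\overline{\mathcal{R}}_{\mathrm{c}}^{\mathrm{s}}=\log_2\overline{\gamma}_{\mathrm{c}}-\log_2\frac{G_{\mathrm{t}}}{\Xi}-\frac{\mathcal{C}}{\ln 2}+o(1),
\end{align}
which is \eqref{asy_sc_cr}.

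For the slope, note that $\overline{\gamma}_{\mathrm{c}}=P/\sigma_{\mathrm{c}}^2$, so $\log_2\overline{\gamma}_{\mathrm{c}}=\log_2P-\log_2\sigma_{\mathrm{c}}^2$. Dividing the expansion by $\log_2P$ and letting $P\rightarrow\infty$ gives $\mathcal{S}_{\mathrm{c}}=1$.

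The only delicate point is justifying the remainder. $\mathrm{e}^{x}-1=O(x)$ multiplies a term of order $\ln x$, so the error is $O(x\ln x)$, which still vanishes as $x\rightarrow0$. The $\mathrm{Ei}$ series converges absolutely near zero, so no further issue arises.

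As a cross-check, one can argue directly from the model. Under the S-C design, $\rho$ is zero-mean complex Gaussian, so $\lvert\rho\rvert^2/G_{\mathrm{t}}$ is exponentially distributed with mean $\Xi/G_{\mathrm{t}}$. Then $\mathbb{E}\{\log_2(1+\overline{\gamma}_{\mathrm{c}}X)\}=\log_2\overline{\gamma}_{\mathrm{c}}+\mathbb{E}\{\log_2X\}+o(1)$, by dominated convergence applied to $\log_2(1/\overline{\gamma}_{\mathrm{c}}+X)$. For an exponential variable with mean $\theta$, $\mathbb{E}\{\ln X\}=\ln\theta-\mathcal{C}$, which reproduces the same constant.
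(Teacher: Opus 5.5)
Your proof is correct, but your primary route differs from the paper's. The paper does not use the closed form of Theorem~\ref{the_ins_sc_cr}. It drops the $1$ inside the logarithm at high SNR, writing $\mathcal{R}_{\mathrm{c}}^{\mathrm{s}}\simeq\log_2(\overline{\gamma}_{\mathrm{c}}\lvert\rho\rvert^2/G_{\mathrm{t}})$. It then evaluates $\mathbb{E}\{\ln\lvert\rho\rvert^2\}$ under the exponential law of $\lvert\rho\rvert^2$ (mean $\Xi$) using the tabulated integral \cite[(4.331.1)]{integral}. That is exactly your ``cross-check.''

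Your main route instead expands $-\mathrm{e}^{x}\mathrm{Ei}(-x)$ through the series $\mathrm{Ei}(-x)=\mathcal{C}+\ln x+O(x)$. This gives a genuine limit statement with an explicit remainder $O(x\ln x)$, i.e., $O(\overline{\gamma}_{\mathrm{c}}^{-1}\ln\overline{\gamma}_{\mathrm{c}})$, built on the already-established exact ECR. Your dominated-convergence remark also justifies swapping limit and expectation, which the paper's ``drop the $1$'' step leaves implicit. Your strict-positivity argument for $\Xi$ covers another point the paper takes for granted.

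The paper's route, in turn, needs no exact expression. It is the template the paper reuses for Corollary~\ref{cor_ins_cc_cr} and Corollary~\ref{cor_ins_cc_sr}. There the exact rates are infinite series, and for the sensing rate they involve eigenvalues $\nu_n$ that themselves depend on $\overline{\gamma}_{\mathrm{s}}$. A term-by-term small-argument expansion like yours would be much less convenient in those cases.
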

\vspace{-5pt}
\begin{IEEEproof}
	When $P\rightarrow \infty$, the CR can approximated as ${\mathcal{R} _{\mathrm{c}}^{\mathrm{s}}}\simeq \log _2\left( \overline{\gamma }_{\mathrm{c}}\frac{\left| \int_{\mathcal{A} _{\mathrm{t}}}{h_{\mathrm{c}}}(\mathbf{t})h_{\mathrm{s}}^{*}(\mathbf{t})\mathrm{d}\mathbf{t} \right|^2}{\int_{\mathcal{A} _{\mathrm{t}}}{\left| h_{\mathrm{s}}(\mathbf{t}) \right|^2}\mathrm{d}\mathbf{t}} \right) $, and thus the ECR can be calculated based on Appendix~\ref{proof_the_ins_sc_cr} and \cite[(4.331.1)]{integral}.
\end{IEEEproof}

Next, we analyze the OP, which is defined as
\begin{equation}
	\mathcal{P}=\mathrm{Pr}\left({\mathcal{R} _{\mathrm{c}}}<\mathcal{R} _{0}\right),
\end{equation}
where $\mathcal{R} _{0}$ denotes the target rate.

\begin{theorem}\label{the_inc_sc_op}
The communication OP in the S-C design can be expressed in closed form as follows:
\begin{equation}\label{ins_sc_op}
\mathcal{P} _{\mathrm{s}}=1-\mathrm{e}^{-\frac{G_{\mathrm{t}} }{\overline{\gamma }_{\mathrm{c}}\Xi}\left( 2^{\mathcal{R} _0}-1 \right)}.	
\end{equation} 	
\end{theorem}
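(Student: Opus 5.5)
The plan is to reduce the outage event to a tail event of an exponential random variable, reusing the distributional fact behind Theorem~\ref{the_ins_sc_cr}.

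\textbf{Step 1: distribution of $\rho$.} By \eqref{def_sc_ecr}, the CR depends on the fading only through $\rho=\int_{\mathcal{A}_{\mathrm{t}}}h_{\mathrm{c}}(\mathbf{t})h_{\mathrm{s}}^{*}(\mathbf{t})\mathrm{d}\mathbf{t}=\int_{\mathcal{Z}}\hat{h}_{\mathrm{c}}(z)\hat{h}_{\mathrm{s}}^{*}(z)\mathrm{d}z$. Since $\hat{h}_{\mathrm{c}}(z)$ is a zero-mean circularly symmetric complex Gaussian field (Section~\ref{section_pre}), $\rho$ is a linear functional of it and is therefore a zero-mean circularly symmetric complex Gaussian scalar. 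Its variance is obtained by exchanging expectation and integration and invoking Lemma~\ref{lem_Rz}:
\begin{align}
\mathbb{E}\{|\rho|^2\}=\int_{\mathcal{Z}}\int_{\mathcal{Z}}\hat{h}_{\mathrm{s}}^{*}(z)R_{\mathrm{c}}(z,z^{\prime})\hat{h}_{\mathrm{s}}(z^{\prime})\mathrm{d}z\mathrm{d}z^{\prime}.
\end{align}
Because $R_{\mathrm{c}}$ is real and symmetric, this double integral equals $\Xi$ as defined in Theorem~\ref{the_ins_sc_cr}. Alternatively, one can use Lemma~\ref{lem_h_sta}: $\rho\overset{d}{=}\sum_n\sqrt{\lambda_n}\Psi_n\langle\phi_n,\hat{h}_{\mathrm{s}}\rangle$, whose variance is $\sum_n\lambda_n|\langle\phi_n,\hat{h}_{\mathrm{s}}\rangle|^2=\Xi$. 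Hence $|\rho|^2\sim\mathsf{Exp}(1/\Xi)$, which is presumably exactly what Appendix~\ref{proof_the_ins_sc_cr} establishes.

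\textbf{Step 2: the normalizing energy.} Identify $\int_{\mathcal{A}_{\mathrm{t}}}|h_{\mathrm{s}}(\mathbf{t})|^2\mathrm{d}\mathbf{t}=G_{\mathrm{t}}$, as computed in the proof of Theorem~\ref{the_ins_sc_sr}. This is an arctan integral of $1/(4\pi\|\mathbf{p}_{\mathrm{s}}-\mathbf{t}\|^2)$ along the $z$-axis.

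\textbf{Step 3: outage probability.} The effective SNR is $X=\overline{\gamma}_{\mathrm{c}}|\rho|^2/G_{\mathrm{t}}$, so
\begin{align}
\mathcal{P}_{\mathrm{s}}=\Pr\!\left(\log_2(1+X)<\mathcal{R}_0\right)=\Pr\!\left(|\rho|^2<\frac{G_{\mathrm{t}}(2^{\mathcal{R}_0}-1)}{\overline{\gamma}_{\mathrm{c}}}\right).
\end{align}
Applying the exponential CDF $1-\mathrm{e}^{-x/\Xi}$ then gives \eqref{ins_sc_op}.

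\textbf{Main obstacle.} The only nontrivial step is the rigorous justification in Step 1 that $\rho$ is complex Gaussian with variance $\Xi$. This requires exchanging expectation with the aperture integrals, which is legitimate because $\hat{h}_{\mathrm{s}}$ is bounded on $\mathcal{Z}$ and $R_{\mathrm{c}}$ is bounded. It also requires circular symmetry, which is inherited from $W(\mathbf{k},\boldsymbol{\kappa})$. Once this is established, the result is a one-line CDF evaluation. No DoF truncation is needed here, so the expression is exact rather than approximate.
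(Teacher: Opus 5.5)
Your proposal is correct and follows the same route as the paper's Appendix~\ref{proof_the_ins_sc_cr}: you establish $\rho\sim\mathcal{CN}(0,\Xi)$ so that $|\rho|^2$ is exponential with mean $\Xi$, substitute $\int_{\mathcal{A}_{\mathrm{t}}}|h_{\mathrm{s}}(\mathbf{t})|^2\mathrm{d}\mathbf{t}=G_{\mathrm{t}}$, and evaluate the exponential CDF at $G_{\mathrm{t}}(2^{\mathcal{R}_0}-1)/\overline{\gamma}_{\mathrm{c}}$. You also justify more than the paper does, since it simply asserts the Gaussianity and variance of $\rho$; in particular, you check that the conjugation order in $\Xi$ does not matter because $R_{\mathrm{c}}$ is real and symmetric.
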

\vspace{-3pt}
\begin{IEEEproof}
	Please refer to Appendix~\ref{proof_the_ins_sc_cr} for more details.
\end{IEEEproof}

We then consider the high-SNR OP to obtain the maximal diversity gain, i.e., the diversity order, which is defined as
\begin{equation}
	\mathcal{D} =-\lim_{P\rightarrow \infty} \frac{\log _2\mathcal{P}}{\log _2P}.	
\end{equation}

\begin{corollary}
	As $P\rightarrow \infty$, the asymptotic OP of the S-C design satisfies
	\begin{equation}\label{ins_sc_op_asy}
		\mathcal{P} _{\mathrm{s}}\simeq \frac{G_{\mathrm{t}} }{\overline{\gamma }_{\mathrm{c}}\Xi}\left( 2^{\mathcal{R} _0}-1 \right) .
	\end{equation}
	This suggests that the diversity order achieved by the S-C design is one.
\end{corollary}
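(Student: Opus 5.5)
The plan is to start from the exact closed form in Theorem~\ref{the_inc_sc_op}, namely $\mathcal{P}_{\mathrm{s}}=1-\mathrm{e}^{-x}$ with
\begin{equation*}
x\triangleq\frac{G_{\mathrm{t}}}{\overline{\gamma}_{\mathrm{c}}\Xi}\left(2^{\mathcal{R}_0}-1\right),
\end{equation*}
and expand it to first order in $x$.

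First, I check that $x\to 0$ as $P\to\infty$. Here $\overline{\gamma}_{\mathrm{c}}=P/\sigma_{\mathrm{c}}^2$ grows linearly in $P$. The quantities $G_{\mathrm{t}}$, $\Xi$ and $\mathcal{R}_0$ do not depend on $P$. $G_{\mathrm{t}}$ is determined by geometry through Theorem~\ref{the_ins_sc_sr}. $\Xi$ is a deterministic quadratic form of $\hat{h}_{\mathrm{s}}$ against the kernel $R_{\mathrm{c}}$ from Lemma~\ref{lem_Rz}.

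We also need $0<\Xi<\infty$, so that $x$ is well defined and positive. Since $R_{\mathrm{c}}$ is positive semidefinite, $\Xi=\sum_n\lambda_n\lvert\langle\hat{h}_{\mathrm{s}},\phi_n\rangle\rvert^2\ge 0$. Strict positivity holds because $\hat{h}_{\mathrm{s}}$ is a nonzero spherical-wave function. Its projection onto the band-limited eigenfunctions cannot vanish identically, which is the only point that needs a brief justification.

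Next, I apply the Taylor expansion $\mathrm{e}^{-x}=1-x+O(x^2)$ as $x\to0$. This gives $\mathcal{P}_{\mathrm{s}}=x+O(x^2)$, so $\mathcal{P}_{\mathrm{s}}\simeq x$, which is exactly \eqref{ins_sc_op_asy}.

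For the diversity order, I substitute this into the definition of $\mathcal{D}$. Since $\log_2\mathcal{P}_{\mathrm{s}}=-\log_2 P+\log_2\!\big(\sigma_{\mathrm{c}}^2G_{\mathrm{t}}(2^{\mathcal{R}_0}-1)/\Xi\big)+o(1)$, dividing by $\log_2 P$ and letting $P\to\infty$ gives $\mathcal{D}=1$.

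There is no real obstacle. The only point needing care is confirming that $\Xi$ is strictly positive and independent of $P$.
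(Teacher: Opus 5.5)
Your proposal is correct and follows the same route as the paper, which simply applies the first-order expansion $\mathrm{e}^{-x}\simeq 1-x$ as $x\to 0$ to the closed form in Theorem~\ref{the_inc_sc_op}. Your added checks spell out what the paper leaves implicit: that $x\to 0$ because $\overline{\gamma}_{\mathrm{c}}\propto P$ while $G_{\mathrm{t}}$ and $\Xi$ are fixed with $\Xi>0$, and that the log-ratio computation gives $\mathcal{D}=1$.
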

\vspace{-3pt}
\begin{IEEEproof}
\eqref{ins_sc_op_asy} is obtained by applying the fact $\lim_{x\rightarrow 0}\mathrm{e}^{-x}\simeq 1-x$ to \eqref{ins_sc_op}.
\end{IEEEproof}

\subsection{Communications-Centric Design}
In the C-C design, the continuous beamformer is designed to maximize the CR, given by
\begin{align}\label{CC_Beamforming_Design}
	w_{\rm{c}}({\mathbf{t}})=\argmax_{\int_{{\mathcal{A}}_{\rm{t}}}\lvert{w}({\mathbf{t}})\rvert^2{\rm{d}}{\mathbf{t}}=1}{\mathcal{R}}_{{\rm{c}}}({{w}}({\mathbf{t}}))=\frac{h_{\rm{c}}^{*}({\mathbf{t}})}
	{\sqrt{{\int_{{\mathcal{A}}_{\rm{t}}}\lvert h_{\rm{c}}({\mathbf{t}})\rvert^2{\rm{d}}{\mathbf{t}}}}}.
\end{align} 
\subsubsection{Performance of Communications}
Substituting $w({\mathbf{t}})=w_{\rm{c}}({\mathbf{t}})$ into \eqref{CR_define}, the CR is expressed as follows:
\begin{align}
{\mathcal{R} _{\mathrm{c}}^{\mathrm{c}}}=\log _2\left( 1+\overline{\gamma}_{\mathrm{c}}g_{\mathrm{c}} \right) .
\end{align}
Based on \textbf{Lemma~\ref{lem_pdf}}, the following theorem can be found. 
\vspace{-5pt}
\begin{theorem}\label{the_ins_cc_cr}
In the C-C design, the ECR $\mathbb{E}\{{\mathcal{R} }_{\mathrm{c}}^{\mathrm{c}}\}$ is given by
\begin{equation}\label{ins_cc_cr}
	\begin{split}
		\overline{\mathcal{R} }_{\mathrm{c}}^{\mathrm{c}}&=\frac{\lambda _{\mathsf{DoF}}^{\mathsf{DoF}}}{\ln2\prod_{n=1}^{\mathsf{DoF}}{\lambda _n}}\sum_{m=0}^{\infty}{\sum_{u=0}^{\mathsf{DoF}+m-1}\!{\frac{\xi _m}{(\mathsf{DoF}+m\!-\!u\!-\!1)!}}}\\
		&\times\left[ \frac{(-1)^{\mathsf{DoF}+m-u}\mathrm{e}^{1/\left( \overline{\gamma }_{\mathrm{c}}\lambda _{\mathsf{DoF}} \right)}}{(\overline{\gamma }_{\mathrm{c}}\lambda _{\mathsf{DoF}})^{\mathsf{DoF}+m-u-1}}\mathrm{Ei}\left( -\frac{1}{\overline{\gamma }_{\mathrm{c}}\lambda _{\mathsf{DoF}}} \right)\right.\\ &\left.+\!\sum_{l=1}^{\mathsf{DoF}+m-u-1}\!\!{(l-1)!\left( -\frac{1}{\overline{\gamma }_{\mathrm{c}}\lambda _{\mathsf{DoF}}} \right) ^{\mathsf{DoF}+m-u-1-l}} \right] .
	\end{split}	
\end{equation}
\end{theorem}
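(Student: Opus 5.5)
The plan is to compute the ECR directly from the PDF in Lemma~\ref{lem_pdf}. We have $\overline{\mathcal{R}}_{\mathrm{c}}^{\mathrm{c}}=\int_0^\infty \log_2(1+\overline{\gamma}_{\mathrm{c}}x)f_g(x)\,\mathrm{d}x$. Substituting \eqref{pdf} and exchanging sum and integral, the problem reduces to the family of integrals
\begin{equation*}
\mathcal{I}_m=\frac{1}{\ln 2}\int_0^\infty \ln(1+\overline{\gamma}_{\mathrm{c}}x)\frac{x^{\mathsf{DoF}+m-1}}{\lambda_{\mathsf{DoF}}^{\mathsf{DoF}+m}\Gamma(\mathsf{DoF}+m)}\mathrm{e}^{-x/\lambda_{\mathsf{DoF}}}\,\mathrm{d}x .
\end{equation*}
Each of these is the ergodic rate of a Gamma$(\mathsf{DoF}+m,\lambda_{\mathsf{DoF}})$ channel gain. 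The prefactor $\lambda_{\mathsf{DoF}}^{\mathsf{DoF}}/\prod_n\lambda_n$ and the weights $\xi_m$ are carried along unchanged.

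The exchange of sum and integral needs justification. The series in \eqref{pdf} has nonnegative terms, because $0\le 1-\lambda_{\mathsf{DoF}}/\lambda_n<1$ for ordered eigenvalues, so $\xi_m\ge 0$. Tonelli's theorem then applies, since the logarithm is nonnegative.

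To evaluate $\mathcal{I}_m$:

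\textbf{Step 1.} Set $N=\mathsf{DoF}+m$, an integer, and $a=1/(\overline{\gamma}_{\mathrm{c}}\lambda_{\mathsf{DoF}})$. The substitution $t=x/\lambda_{\mathsf{DoF}}$ gives
\begin{equation*}
\mathcal{I}_m=\frac{1}{\ln 2\,(N-1)!}\int_0^\infty \ln\!\left(1+t/a\right)t^{N-1}\mathrm{e}^{-t}\,\mathrm{d}t .
\end{equation*}

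\textbf{Step 2.} Integrate by parts, or equivalently use the table integral \cite[(4.337.5)]{integral}. The cleanest route writes $\int_0^\infty \ln(1+t/a)t^{N-1}\mathrm{e}^{-t}\mathrm{d}t$ by differentiating with respect to $a$. Alternatively, integrate by parts with $\frac{\mathrm{d}}{\mathrm{d}t}\bigl[-\mathrm{e}^{-t}\sum_{k=0}^{N-1}\frac{(N-1)!}{k!}t^k\bigr]=t^{N-1}\mathrm{e}^{-t}$. The boundary terms vanish, leaving
\begin{equation*}
(N-1)!\sum_{k=0}^{N-1}\frac{1}{k!}\int_0^\infty \frac{t^k\mathrm{e}^{-t}}{t+a}\,\mathrm{d}t .
\end{equation*}
Reindexing with $u=N-1-k$ produces the factor $1/(\mathsf{DoF}+m-u-1)!$ and the outer sum over $u=0,\ldots,\mathsf{DoF}+m-1$ in \eqref{ins_cc_cr}.

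\textbf{Step 3.} Evaluate $J_k=\int_0^\infty \frac{t^k\mathrm{e}^{-t}}{t+a}\mathrm{d}t$ in closed form using the polynomial division $t^k=(t+a)q(t)+(-a)^k$. This gives
\begin{equation*}
J_k=(-a)^k\bigl(-\mathrm{e}^{a}\mathrm{Ei}(-a)\bigr)+\sum_{l=1}^{k}(l-1)!(-a)^{k-l}.
\end{equation*}
With $k=\mathsf{DoF}+m-u-1$ and $a^k=(\overline{\gamma}_{\mathrm{c}}\lambda_{\mathsf{DoF}})^{-k}$, the first term becomes $\frac{(-1)^{k+1}\mathrm{e}^{a}}{(\overline{\gamma}_{\mathrm{c}}\lambda_{\mathsf{DoF}})^{k}}\mathrm{Ei}(-a)$. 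Since $k+1=\mathsf{DoF}+m-u$, this matches the bracketed term in \eqref{ins_cc_cr}, and the finite sum matches the second term. Collecting the $1/(N-1)!$ from the Gamma normalization against the $(N-1)!$ from the integration by parts yields the stated expression.

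The main obstacle is the bookkeeping in Steps 2–3. Getting the signs $(-1)^{\mathsf{DoF}+m-u}$, the powers of $\overline{\gamma}_{\mathrm{c}}\lambda_{\mathsf{DoF}}$, and the range of $l$ to line up exactly with \eqref{ins_cc_cr} is error-prone. I would verify the identity for $J_k$ by induction on $k$, using $J_{k}=\Gamma(k)-aJ_{k-1}$ and $J_0=-\mathrm{e}^{a}\mathrm{Ei}(-a)$, which directly generates the alternating finite sum.
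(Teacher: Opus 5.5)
Your proposal is correct and follows the paper's route: substitute the PDF \eqref{pdf} into $\frac{1}{\ln 2}\int_0^\infty \ln(1+\overline{\gamma}_{\mathrm{c}}x)f_g(x)\,\mathrm{d}x$ and evaluate each Gamma-weighted term with \cite[(4.337.5)]{integral}. Your integration by parts and the recursion $J_k=(k-1)!-aJ_{k-1}$ re-derive that table entry, with the bracket matching \eqref{ins_cc_cr} exactly for $k=\mathsf{DoF}+m-u-1$, and your Tonelli argument via $\xi_m\ge 0$ justifies the sum--integral interchange that the paper leaves implicit.
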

\vspace{-2pt}
\begin{IEEEproof}
The ECR is calculated as 
\begin{align}\label{cc_ecr}
	\overline{\mathcal{R} }_{\mathrm{c}}^{\mathrm{c}}&=\mathbb{E} \left\{ \log _2\left( 1+\overline{\gamma }_{\mathrm{c}}g_{\mathrm{c}} \right) \right\}\notag\\
	& =\frac{1}{\ln 2}\int_0^{\infty}{\ln\left( 1+\overline{\gamma }_{\mathrm{c}}x \right) f_g\left( x \right) \mathrm{d}x}.
\end{align}
By substituting \eqref{pdf} into \eqref{cc_ecr} and using \cite[(4.337.5)]{integral}, the results of \eqref{ins_cc_cr} can be obtained.
\end{IEEEproof}

\begin{corollary}\label{cor_ins_cc_cr}
The asymptotic ECR achieve by the C-C design in the high-SNR regime is given by
\begin{align}
\overline{\mathcal{R} }_{\mathrm{c}}^{\mathrm{c}}\simeq \log _2\!\overline{\gamma }_{\mathrm{c}}+\frac{\lambda _{\mathsf{DoF}}^{\mathsf{DoF}}\log _2\!\mathrm{e}}{\prod_{n=1}^{\mathsf{DoF}}{\lambda _n}}\!\sum_{m=0}^{\infty}{\xi _m\left( \psi (\mathsf{DoF}\!+\!m)+\ln(\lambda _{\mathsf{DoF}}) \right)},
\end{align}	
where $\psi \left( x \right) =\frac{\mathrm{d}}{\mathrm{d}x}\ln \Gamma \left( x \right) $ is the digamma function. This suggests that the high-SNR slope of the CR achieved by the C-C design is one.
\end{corollary}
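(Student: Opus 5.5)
We will derive the high-SNR expansion directly from the integral form \eqref{cc_ecr}, with the PDF of Lemma~\ref{lem_pdf}, rather than expanding the closed form \eqref{ins_cc_cr}. As $P\rightarrow\infty$ (i.e., $\overline{\gamma}_{\mathrm{c}}\rightarrow\infty$), we use $\ln(1+\overline{\gamma}_{\mathrm{c}}x)\simeq\ln\overline{\gamma}_{\mathrm{c}}+\ln x$ for $x>0$. This gives
\begin{align}
\overline{\mathcal{R}}_{\mathrm{c}}^{\mathrm{c}}\simeq \log_2\overline{\gamma}_{\mathrm{c}}\int_0^{\infty}f_g(x)\mathrm{d}x+\frac{1}{\ln 2}\int_0^{\infty}\ln(x)f_g(x)\mathrm{d}x .
\end{align}
The first integral equals one because $f_g$ is a PDF. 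This produces the $\log_2\overline{\gamma}_{\mathrm{c}}$ term and, after dividing by $\log_2P$ (note $\overline{\gamma}_{\mathrm{c}}=P/\sigma_{\mathrm{c}}^2$), the unit high-SNR slope. To justify the approximation, we write the error as $\mathbb{E}\{\ln(1+1/(\overline{\gamma}_{\mathrm{c}}g_{\mathrm{c}}))\}$. This term is dominated by $\mathbb{E}\{\ln(1+1/g_{\mathrm{c}})\}$ once $\overline{\gamma}_{\mathrm{c}}\ge1$. That bound is finite because $f_g(x)=O(x^{\mathsf{DoF}-1})$ near zero, so monotone convergence sends the error to zero.

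For the constant term, we substitute \eqref{pdf} and integrate term by term. The needed integral is \cite[(4.352.1)]{integral}:
\begin{align}
\int_0^{\infty}x^{\nu-1}\mathrm{e}^{-x/\lambda_{\mathsf{DoF}}}\ln x\,\mathrm{d}x=\Gamma(\nu)\lambda_{\mathsf{DoF}}^{\nu}\left(\psi(\nu)+\ln\lambda_{\mathsf{DoF}}\right),
\end{align}
with $\nu=\mathsf{DoF}+m$. The factors $\Gamma(\mathsf{DoF}+m)$ and $\lambda_{\mathsf{DoF}}^{\mathsf{DoF}+m}$ cancel the denominators in \eqref{pdf}. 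What remains is
\begin{align}
\frac{1}{\ln 2}\frac{\lambda_{\mathsf{DoF}}^{\mathsf{DoF}}}{\prod_{n=1}^{\mathsf{DoF}}\lambda_n}\sum_{m=0}^{\infty}\xi_m\left(\psi(\mathsf{DoF}+m)+\ln\lambda_{\mathsf{DoF}}\right).
\end{align}
Since $\log_2\mathrm{e}=1/\ln 2$, this is exactly the stated expression.

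The main obstacle is justifying the interchange of the infinite $m$-sum with the integral. All $\xi_m\ge0$, because $0\le 1-\lambda_{\mathsf{DoF}}/\lambda_n<1$ and the recursion only sums nonnegative terms. Hence the positive and negative parts of $\ln x$ can be handled separately by Tonelli's theorem. Convergence of the series follows from the standard bound $\xi_m\le\binom{\mathsf{DoF}+m-1}{m}\delta^m$ with $\delta=\max_n(1-\lambda_{\mathsf{DoF}}/\lambda_n)<1$, combined with $\psi(\mathsf{DoF}+m)=O(\ln m)$. Equivalently, $\sum_m\xi_m=\prod_n\lambda_n/\lambda_{\mathsf{DoF}}^{\mathsf{DoF}}$ is finite, which is consistent with normalization, and the added $\ln m$ growth does not break convergence. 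The unit slope then follows because the constant term does not depend on $P$.
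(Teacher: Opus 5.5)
Your proposal is correct and follows the same route as the paper. It replaces $\log_2(1+\overline{\gamma}_{\mathrm{c}}g_{\mathrm{c}})$ by $\log_2(\overline{\gamma}_{\mathrm{c}}g_{\mathrm{c}})$ and evaluates $\mathbb{E}\{\ln g_{\mathrm{c}}\}$ term by term via \cite[(4.352.1)]{integral}, with the $\Gamma(\mathsf{DoF}+m)\lambda_{\mathsf{DoF}}^{\mathsf{DoF}+m}$ factors cancelling. Your additional checks are valid rigor that the paper omits: the vanishing error term $\mathbb{E}\{\ln(1+1/(\overline{\gamma}_{\mathrm{c}}g_{\mathrm{c}}))\}$, and the sum--integral interchange justified by $\xi_m\ge 0$ and $\xi_m\le\binom{\mathsf{DoF}+m-1}{m}\delta^m$.
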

\vspace{-5pt}
\begin{IEEEproof}
When $P\rightarrow \infty$, the CR can approximated as ${\mathcal{R} _{\mathrm{c}}^{\mathrm{c}}}\approx\log _2\left( \overline{\gamma}_{\mathrm{c}}g_{\mathrm{c}} \right)$, and thus the ECR is given by
\begin{equation}
\overline{\mathcal{R} }_{\mathrm{c}}^{\mathrm{c}}\simeq\log _2\overline{\gamma }_{\mathrm{c}}+\int_0^{\infty}{\log _2x  f_g\left( x \right) \mathrm{d}x},
\end{equation}
which can be calculated by using \cite[(4.352.1)]{integral}.	
\end{IEEEproof}

We then study the OP. The following theorem provides a closed-form expression of the OP.
\begin{theorem}
The OP of the C-C design can be expressed as
\begin{align}\label{ins_cc_op}
\mathcal{P} _{\mathrm{c}}=\frac{\lambda _{\mathsf{DoF}}^{\mathsf{DoF}}}{\prod_{n=1}^{\mathsf{DoF}}{\lambda _n}}\sum_{m=0}^{\infty}{\frac{\xi _m\gamma (\mathsf{DoF}+m,\frac{2^{\mathcal{R} _0}-1}{\overline{\gamma }_{\mathrm{c}}\lambda _{\mathsf{DoF}}})}{\left( \mathsf{DoF}+m-1 \right) !}},
\end{align}
where $\gamma( s,x ) =\int_0^x{t^{s-1}{\rm{e}}^{-t}\mathrm{d}t}$ represents the lower incomplete gamma function.
\end{theorem}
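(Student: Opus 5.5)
The plan is to start from the definition of the outage probability and reduce it to the cumulative distribution function of the channel gain $g_{\mathrm{c}}$. Under the C-C design we have $\mathcal{R}_{\mathrm{c}}^{\mathrm{c}}=\log_2(1+\overline{\gamma}_{\mathrm{c}}g_{\mathrm{c}})$. The map $x\mapsto\log_2(1+\overline{\gamma}_{\mathrm{c}}x)$ is strictly increasing on $[0,\infty)$, so the event $\mathcal{R}_{\mathrm{c}}^{\mathrm{c}}<\mathcal{R}_0$ is the same as $g_{\mathrm{c}}<x_0$ with $x_0\triangleq(2^{\mathcal{R}_0}-1)/\overline{\gamma}_{\mathrm{c}}$. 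Hence
\begin{align}
\mathcal{P}_{\mathrm{c}}=\Pr(g_{\mathrm{c}}<x_0)=\int_0^{x_0}f_g(x)\,\mathrm{d}x,
\end{align}
where $f_g$ is the PDF of Lemma~\ref{lem_pdf}. This PDF is the finite-$\mathsf{DoF}$ approximation, justified via Landau's theorem and Remark~\ref{rem_landau}.

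Next, substitute the series \eqref{pdf} and integrate term by term. Write $c\triangleq\lambda_{\mathsf{DoF}}^{\mathsf{DoF}}/\prod_{n=1}^{\mathsf{DoF}}\lambda_n$. The $m$th term is
\begin{align}
\frac{c\,\xi_m}{\lambda_{\mathsf{DoF}}^{\mathsf{DoF}+m}\Gamma(\mathsf{DoF}+m)}\int_0^{x_0}x^{\mathsf{DoF}+m-1}\mathrm{e}^{-x/\lambda_{\mathsf{DoF}}}\,\mathrm{d}x.
\end{align}
Changing variables with $t=x/\lambda_{\mathsf{DoF}}$ cancels the factor $\lambda_{\mathsf{DoF}}^{\mathsf{DoF}+m}$. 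The integral becomes $\gamma(\mathsf{DoF}+m,x_0/\lambda_{\mathsf{DoF}})$, the lower incomplete gamma function. Since $\mathsf{DoF}+m$ is a positive integer, $\Gamma(\mathsf{DoF}+m)=(\mathsf{DoF}+m-1)!$. Summing over $m$ then gives exactly \eqref{ins_cc_op}.

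The one step that needs care is interchanging the infinite sum with the integral. I would justify it by monotone convergence, since every term is nonnegative. For this, note that $\lambda_{\mathsf{DoF}}\le\lambda_n$ for $n\le\mathsf{DoF}$, so $0\le 1-\lambda_{\mathsf{DoF}}/\lambda_n<1$. The recursion then gives $\xi_m\ge0$ by induction on $m$, starting from $\xi_0=1$. Alternatively, one can use the known convergence of the Moschopoulos series from \cite{sum_gamma}: as a series of nonnegative terms, it converges to a density that integrates to one, so the series of CDFs also converges. Beyond this, the argument is direct integration; the result is exact for the truncated model and approximate for the original infinite sum.
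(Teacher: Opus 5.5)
Your proposal is correct and follows the paper's route. Like the paper, you write $\mathcal{P}_{\mathrm{c}}=F_g\bigl((2^{\mathcal{R}_0}-1)/\overline{\gamma}_{\mathrm{c}}\bigr)$ and integrate the series \eqref{pdf} term by term into lower incomplete gamma functions (the paper cites a standard integral table for this step); your monotone-convergence justification of the sum--integral interchange, via $\xi_m\ge 0$ from $\lambda_{\mathsf{DoF}}\le\lambda_n$, is a detail the paper omits.
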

\vspace{-5pt}
\begin{IEEEproof}
By performing some manipulations, the OP can be written as $\mathcal{P} _{\mathrm{c}}=F_g\left( \frac{2^{\mathcal{R} _0}-1}{\overline{\gamma }_{\mathrm{c}}} \right)$, where $F_g\left( \cdot \right) $ denotes the CDF of $g_{\mathrm{c}}$. With the aid of \cite[(3.351.1)]{integral}, $F_g\left( x \right) $ can be calculated as follows:
\begin{align}
	F_g( x ) =\!\int_0^x\!{f_g\left( t \right) \mathrm{d}}t=\frac{\lambda _{\mathsf{DoF}}^{\mathsf{DoF}}}{\prod_{n=1}^{\mathsf{DoF}}{\lambda _n}}\sum_{m=0}^{\infty}{\frac{\xi _m\gamma (\mathsf{DoF}+m,\frac{x}{\lambda _{\mathsf{DoF}}})}{\left( \mathsf{DoF}+m-1 \right) !}},
\end{align}
which yields the results of \eqref{ins_cc_op}.
\end{IEEEproof}

\begin{corollary}
As $P\rightarrow \infty$, the asymptotic OP of the C-C design satisfies
\begin{equation}\label{ins_cc_op_asy}
	\mathcal{P} _{\mathrm{c}}\simeq \frac{\left( 2^{\mathcal{R} _0}-1 \right) ^{\mathsf{DoF}}}{\overline{\gamma }_{\mathrm{c}}^{\mathsf{DoF}} \mathsf{DoF} !\prod_{n=1}^{\mathsf{DoF}}{\lambda _n}}.
\end{equation}
This suggests that the diversity order achieved by the C-C design is $\mathsf{DoF}$.
\end{corollary}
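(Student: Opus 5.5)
The plan is to read the small-argument behaviour directly off the CDF used in the preceding theorem, which is simpler than expanding the incomplete gamma series term by term. Write $x_0 \triangleq \frac{2^{\mathcal{R}_0}-1}{\overline{\gamma}_{\mathrm{c}}}$, so that $\mathcal{P}_{\mathrm{c}} = F_g(x_0)$, and note that $x_0\to 0$ as $P\to\infty$.

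Under the finite-sum approximation $g_{\mathrm{c}}\approx\sum_{n=1}^{\mathsf{DoF}}\lambda_n|\Psi_n|^2$ underlying Lemma~\ref{lem_pdf}, $F_g(x)$ is the probability that $\sum_n \lambda_n E_n \le x$ with $E_n$ i.i.d.\ $\mathsf{Exp}(1)$. The joint density of $(E_1,\ldots,E_{\mathsf{DoF}})$ is $\mathrm{e}^{-\sum_n e_n}\to 1$ uniformly on the shrinking region, so
\begin{align}
F_g(x)\simeq \mathrm{vol}\Big\{e\in\mathbb{R}_+^{\mathsf{DoF}}:\ \sum\nolimits_n \lambda_n e_n\le x\Big\}=\frac{x^{\mathsf{DoF}}}{\mathsf{DoF}!\prod_{n=1}^{\mathsf{DoF}}\lambda_n}.
\end{align}
This uses the standard simplex volume $x^{N}/(N!\prod_n\lambda_n)$ after the change of variables $u_n=\lambda_n e_n$. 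Substituting $x=x_0$ gives \eqref{ins_cc_op_asy}.

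As a cross-check against the series in \eqref{ins_cc_op}, use $\gamma(s,y)\simeq y^{s}/s$ as $y\to0$. Every term with $m\ge1$ is $O(y^{\mathsf{DoF}+1})$, so only $m=0$ survives, with $\xi_0=1$. That term gives $\frac{\lambda_{\mathsf{DoF}}^{\mathsf{DoF}}}{\prod_n\lambda_n}\cdot\frac{(x_0/\lambda_{\mathsf{DoF}})^{\mathsf{DoF}}}{\mathsf{DoF}\,(\mathsf{DoF}-1)!}$, and the powers of $\lambda_{\mathsf{DoF}}$ cancel to the same expression. The one point needing care is exchanging the limit with the infinite sum. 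This is justified because $0\le 1-\lambda_{\mathsf{DoF}}/\lambda_n<1$, which bounds $\xi_m$ geometrically (it is a negative-binomial-type coefficient). As a result, the tail $\sum_{m\ge1}$ is $O(x_0^{\mathsf{DoF}+1})$ uniformly. I expect this dominated-convergence step to be the only mild obstacle, and the simplex argument sidesteps it.

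Finally, the diversity order follows because $\log_2\mathcal{P}_{\mathrm{c}} = -\mathsf{DoF}\log_2 P + O(1)$. Hence $\mathcal{D}=\mathsf{DoF}$.
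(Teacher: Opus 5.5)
Your proof is correct, but your main argument takes a different route from the paper. The paper's proof is exactly what you use as a cross-check. It substitutes $\gamma(s,y)\simeq y^{s}/s$ into the series \eqref{ins_cc_op} and keeps only the $m=0$ term (with $\xi_0=1$). It does not say why the infinitely many $m\ge 1$ terms can be discarded.

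Your primary argument never uses the Moschopoulos series behind Lemma~\ref{lem_pdf}. On the region $\{e\ge 0:\sum_n\lambda_n e_n\le x\}$ you have $\sum_n e_n\le x/\lambda_{\mathsf{DoF}}$. Hence $\mathrm{e}^{-x/\lambda_{\mathsf{DoF}}}V(x)\le F_g(x)\le V(x)$ with $V(x)=x^{\mathsf{DoF}}/(\mathsf{DoF}!\prod_{n=1}^{\mathsf{DoF}}\lambda_n)$. This two-sided bound gives several things at once:
\begin{itemize}
\item the leading constant, together with an explicit $O(x_0)$ relative error;
\item a proof that needs no interchange of limit and infinite sum;
\item a transparent reason for the exponent, which is just the number of independent exponential terms retained.
\end{itemize}
The last point shows that the diversity order $\mathsf{DoF}$ is inherited from the truncation $g_{\mathrm{c}}\approx\sum_{n=1}^{\mathsf{DoF}}\lambda_n|\Psi_n|^2$, which both you and the paper adopt. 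Applied to the untruncated sum, whose eigenvalues are all positive, the same bound with any $M$ retained terms gives $F_g(x)=O(x^{M})$ for every $M$.

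The paper's route is a one-liner once the closed-form outage probability is in hand, and your tail remark supplies the step it leaves implicit. You do not need the geometric bound on $\xi_m$ for this. It suffices that $\xi_m\ge 0$, that $\gamma(s,y)\le y^{s}/s$, and that $\sum_m\xi_m=\prod_{n}\lambda_n/\lambda_{\mathsf{DoF}}^{\mathsf{DoF}}<\infty$. The last identity is the generating function $\prod_n(1-(1-\lambda_{\mathsf{DoF}}/\lambda_n)t)^{-1}$ evaluated at $t=1$. Together these bound the $m\ge 1$ tail by $y^{\mathsf{DoF}+1}\sum_m\xi_m$ for $y\le 1$.
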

\vspace{-5pt}
\begin{IEEEproof}
	The results of \eqref{ins_cc_op_asy} is obtained by applying the fact $\lim_{x\rightarrow 0} \gamma \left( s,x \right) \simeq \frac{x^s}{s}$ based on \cite[(8.354.1)]{integral} to \eqref{ins_cc_op}.
\end{IEEEproof}


\subsubsection{Performance of Sensing}
When the C-C beamformer $w_{\mathrm{c}}(\mathbf{t})$ is employed, the SR is given by
\begin{equation}\label{def_ins_cc_sr}
		{\mathcal{R} _{{\mathrm{s}}}^{\mathrm{c}}}\!=\!\frac{1}{L}\log _2\!\left( \!1\!+\!L\alpha _{\mathrm{s}}\overline{\gamma }_{\mathrm{s}}\int_{\mathcal{A} _{\mathrm{r}}}\!\!{\left| h_{\mathrm{s}}(\mathbf{r}) \right|^2}\mathrm{d}\mathbf{r}\frac{\left| \int_{\mathcal{A} _{\mathrm{t}}}\!{h_{\mathrm{s}}}(\mathbf{t})h_{\mathrm{c}}^{*}(\mathbf{t})\mathrm{d}\mathbf{t} \right|^2}{\int_{{\mathcal{A}}_{\rm{t}}}\lvert h_{\rm{c}}({\mathbf{t}})\rvert^2{\rm{d}}{\mathbf{t}}} \right),
\end{equation}
which can be further rewritten as follows.
\vspace{-3pt}
\begin{lemma}\label{lem_ins_cc_sr}
The SR under the C-C design can be written as
\begin{equation}
	\mathcal{R} _{\mathrm{s}}^{\mathrm{c}}=\frac{1}{L}\log _2\left( \sum\nolimits_{n=1}^{\mathsf{DoF}_Q}{\nu _n\left| \Phi _n \right|^2} \right) -\frac{1}{L}\log _2g_{\mathrm{c}},
\end{equation} 
where $\left\{ \nu  _n \right\} _{n=1}^{\infty}$ are the eigenvalues of $Q\left( z_1,z_2 \right) \triangleq \int_{\mathcal{Z}}\int_{\mathcal{Z}}\tilde{R}_{\mathrm{c}}\left( z_1,z \right) ( \delta \left( z-z^{\prime} \right) +L\alpha _{\mathrm{s}}\overline{\gamma }_{\mathrm{s}}G_\mathrm{r}\hat{h}_{\mathrm{s}}\left( z \right) \hat{h}_{\mathrm{s}}^{*}\left( z^{\prime} \right) ) \tilde{R}_{\mathrm{c}}\left( z^{\prime},z_2 \right)\mathrm{d}z\mathrm{d}z^{\prime}$ with $\tilde{R}_{\mathrm{c}}\left( z,z_1 \right) =\sum\nolimits_{n=1}^{\mathsf{DoF}}{\sqrt{\lambda _n}}\phi _n\left( z \right) \phi _{n}^{*}\left( z_1 \right) $, $\mathsf{DoF}_Q$ denotes the DoF of $Q\left( z_1,z_2 \right) $, and $\left\{ \Phi  _n \right\} _{n=1}^{\infty}$ are i.i.d. ZUCG random variables. Note that $\left\{ \nu  _n \right\} _{n=1}^{\infty}$ can be obtained using a method similar to that described in Appendix~\ref{eigenvalue}.   
\end{lemma}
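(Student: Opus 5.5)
Starting from \eqref{def_ins_cc_sr}, the plan is to bring the argument of the logarithm onto a common denominator $g_{\mathrm{c}}=\int_{\mathcal{A}_{\mathrm{t}}}\lvert h_{\mathrm{c}}(\mathbf{t})\rvert^2\mathrm{d}\mathbf{t}$. By Theorem~\ref{the_ins_sc_sr}, $\int_{\mathcal{A}_{\mathrm{r}}}\lvert h_{\mathrm{s}}(\mathbf{r})\rvert^2\mathrm{d}\mathbf{r}=G_{\mathrm{r}}$. Write $\hat{h}_{\mathrm{s}}(z)$ for the transmit sensing response restricted to $\mathcal{Z}$, as in \eqref{hshat}, and set $\beta\triangleq L\alpha_{\mathrm{s}}\overline{\gamma}_{\mathrm{s}}G_{\mathrm{r}}$. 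Then
\begin{align}
\mathcal{R}_{\mathrm{s}}^{\mathrm{c}}=\frac{1}{L}\log_2\Big(g_{\mathrm{c}}+\beta\Big\lvert\int_{\mathcal{Z}}\hat{h}_{\mathrm{s}}(z)\hat{h}_{\mathrm{c}}^{*}(z)\mathrm{d}z\Big\rvert^2\Big)-\frac{1}{L}\log_2 g_{\mathrm{c}}.
\end{align}
The second term already has the required form, so everything reduces to identifying the first argument as a Hermitian form in the Gaussian field $\hat{h}_{\mathrm{c}}$. Writing $g_{\mathrm{c}}=\int\!\!\int\hat{h}_{\mathrm{c}}^{*}(z)\delta(z-z')\hat{h}_{\mathrm{c}}(z')\mathrm{d}z\mathrm{d}z'$ and expanding the squared modulus gives
\begin{align}
\iint\hat{h}_{\mathrm{c}}^{*}(z)\big(\delta(z-z')+\beta\hat{h}_{\mathrm{s}}(z)\hat{h}_{\mathrm{s}}^{*}(z')\big)\hat{h}_{\mathrm{c}}(z')\mathrm{d}z\mathrm{d}z'.
\end{align}
Call the kernel in parentheses $T(z,z')$. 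The complex conjugations need to be tracked so that $T$ has exactly the orientation appearing in $Q$; if it comes out conjugated, replace $\Psi_n$ by $\Psi_n^{*}$, which has the same law.

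Next I would invoke Lemma~\ref{lem_h_sta} together with the DoF truncation already justified by Remark~\ref{rem_landau}. This gives $\hat{h}_{\mathrm{c}}(z)\overset{d}{=}\sum_{n=1}^{\mathsf{DoF}}\sqrt{\lambda_n}\phi_n(z)\Psi_n$, which can be rewritten as $\hat{h}_{\mathrm{c}}(z)=\int_{\mathcal{Z}}\tilde{R}_{\mathrm{c}}(z,z_1)\chi(z_1)\mathrm{d}z_1$. Here $\chi(z_1)=\sum_n\phi_n(z_1)\Psi_n$ is a white complex Gaussian field projected onto the span of $\{\phi_n\}_{n\le\mathsf{DoF}}$, i.e.\ its coordinates in the orthonormal basis are the i.i.d.\ ZUCG $\Psi_n$. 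Substituting this into the Hermitian form and using that $\tilde{R}_{\mathrm{c}}$ is Hermitian turns the first argument into
\begin{align}
\iint\chi^{*}(z_1)Q(z_1,z_2)\chi(z_2)\mathrm{d}z_1\mathrm{d}z_2=\boldsymbol{\Psi}^{\mathsf{H}}\mathbf{Q}\boldsymbol{\Psi},
\end{align}
where $\mathbf{Q}$ is the $\mathsf{DoF}\times\mathsf{DoF}$ matrix with entries $[\mathbf{Q}]_{n,n'}=\iint\phi_n^{*}(z_1)Q(z_1,z_2)\phi_{n'}(z_2)\mathrm{d}z_1\mathrm{d}z_2$. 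The same random $\boldsymbol{\Psi}$ also drives $g_{\mathrm{c}}=\sum_n\lambda_n\lvert\Psi_n\rvert^2$, but that coupling is irrelevant for the statement, which concerns each term's representation separately.

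Finally, $Q$ is positive semidefinite and Hermitian, being $\tilde{R}_{\mathrm{c}}(I+\beta\hat{h}_{\mathrm{s}}\hat{h}_{\mathrm{s}}^{\mathsf{H}})\tilde{R}_{\mathrm{c}}$, and its range lies in the span of the first $\mathsf{DoF}$ eigenfunctions. It therefore has at most $\mathsf{DoF}$ nonzero eigenvalues, and I would let $\mathsf{DoF}_Q$ denote their number. Diagonalizing $\mathbf{Q}=\mathbf{U}\,\mathrm{diag}(\nu_n)\mathbf{U}^{\mathsf{H}}$ and setting $\boldsymbol{\Phi}=\mathbf{U}^{\mathsf{H}}\boldsymbol{\Psi}$ gives i.i.d.\ ZUCG $\Phi_n$ by unitary invariance of $\mathcal{CN}(\mathbf{0},\mathbf{I})$, and hence the sum $\sum_{n=1}^{\mathsf{DoF}_Q}\nu_n\lvert\Phi_n\rvert^2$. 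This is the continuous analogue of the Corollary~\ref{cor_g_sta} argument, applied with eigenfunctions of $Q$ in place of those of $R_{\mathrm{c}}$.

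The main obstacle is the middle step: making the whitening $\hat{h}_{\mathrm{c}}=\tilde{R}_{\mathrm{c}}\chi$ rigorous and checking that the kernel composition reproduces exactly the stated $Q$, with the correct conjugation and argument order. The cleanest route is to carry out every step in the finite orthonormal basis $\{\phi_n\}_{n=1}^{\mathsf{DoF}}$, so that all operators become $\mathsf{DoF}\times\mathsf{DoF}$ matrices and no delta-function manipulations are needed.
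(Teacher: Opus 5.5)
Your proposal is correct and follows the paper's proof essentially step for step. You make the same common-denominator rewrite, build the same Hermitian form with kernel $\delta(z-z')+L\alpha_{\mathrm{s}}\overline{\gamma}_{\mathrm{s}}G_{\mathrm{r}}\hat{h}_{\mathrm{s}}(z)\hat{h}_{\mathrm{s}}^{*}(z')$, and use the same whitening $\hat{h}_{\mathrm{c}}=\int_{\mathcal{Z}}\tilde{R}_{\mathrm{c}}(\cdot,z_1)\overline{h}(z_1)\mathrm{d}z_1$, which produces exactly the stated $Q$ (the conjugations do come out in the right orientation), followed by the same eigen-expansion with unitarily invariant ZUCG coordinates.

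The only difference is that you work in the finite basis $\{\phi_n\}_{n=1}^{\mathsf{DoF}}$ and take $\mathsf{DoF}_Q$ to be the rank of $Q$. This makes the final truncation exact, whereas the paper rests that step on Landau's theorem. It also shows $\mathsf{DoF}_Q=\mathsf{DoF}$, since $\mathbf{Q}\succeq\mathrm{diag}(\lambda_1,\ldots,\lambda_{\mathsf{DoF}})\succ\mathbf{0}$.
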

\vspace{-3pt}
\begin{IEEEproof}
	Please refer to Appendix~\ref{proof_lem_ins_cc_sr} for more details.
\end{IEEEproof}
Based on Lemma~\ref{lem_ins_cc_sr}, a closed-form expression for the average SR defined as $\overline{\mathcal{R}}_{\mathrm{s}}^{\mathrm{c}}=\mathbb{E} \left\{ \mathcal{R} _{\mathrm{s}}^{\mathrm{c}} \right\} $ is derived as follows.
\vspace{-3pt}
\begin{theorem}
In the C-C design, the average SR is given by
\begin{align}\label{ins_cc_sr}
	\overline{\mathcal{R} }_{\mathrm{s}}^{\mathrm{c}}&=\frac{\nu _{\mathsf{DoF}_Q}^{\mathsf{DoF}_Q}\log _2\mathrm{e}}{L\prod_{n=1}^{\mathsf{DoF}_Q}{\nu _n}}\sum_{m=0}^{\infty}{\xi _m\left( \psi (\mathsf{DoF}_Q\!+m)+\ln(\nu _{\mathsf{DoF}_Q}) \right)}\notag\\
	&-\frac{\lambda _{\mathsf{DoF}}^{\mathsf{DoF}}\log _2\mathrm{e}}{L\prod_{n=1}^{\mathsf{DoF}}{\lambda _n}}\sum_{m=0}^{\infty}{\xi _m\left( \psi (\mathsf{DoF}\!+m)+\ln(\lambda _{\mathsf{DoF}}) \right)}.
\end{align}	
\end{theorem}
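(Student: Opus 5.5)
Taking expectations in Lemma~\ref{lem_ins_cc_sr} gives
\begin{align}
\overline{\mathcal{R}}_{\mathrm{s}}^{\mathrm{c}}=\frac{1}{L}\mathbb{E}\Big\{\log_2\Big(\sum\nolimits_{n=1}^{\mathsf{DoF}_Q}\nu_n|\Phi_n|^2\Big)\Big\}-\frac{1}{L}\mathbb{E}\{\log_2 g_{\mathrm{c}}\},
\end{align}
and by linearity of expectation the two terms can be handled separately. This is legitimate even though the two random quantities are dependent, since both are built from the same channel. The proof therefore reduces to computing $\mathbb{E}\{\ln X\}$ for a random variable $X$ that is a finite weighted sum of i.i.d. $\mathsf{Exp}(1)$ variables.

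\textbf{Second term.} By the finite-sum approximation preceding Lemma~\ref{lem_pdf}, $g_{\mathrm{c}}\approx\sum_{n=1}^{\mathsf{DoF}}\lambda_n|\Psi_n|^2$, whose PDF is \eqref{pdf}. We then need $\int_0^\infty \ln x\, f_g(x)\,\mathrm{d}x$. Term by term, each summand of \eqref{pdf} is a Gamma density with shape $\mathsf{DoF}+m$ and scale $\lambda_{\mathsf{DoF}}$, up to the weight $\xi_m\lambda_{\mathsf{DoF}}^{\mathsf{DoF}}/\prod_n\lambda_n$. For a Gamma$(k,\theta)$ variable, $\mathbb{E}\{\ln x\}=\psi(k)+\ln\theta$; this follows from \cite[(4.352.1)]{integral}, i.e., $\int_0^\infty x^{k-1}\mathrm{e}^{-x/\theta}\ln x\,\mathrm{d}x=\theta^{k}\Gamma(k)(\psi(k)+\ln\theta)$. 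The $\Gamma(\mathsf{DoF}+m)$ and $\lambda_{\mathsf{DoF}}^{\mathsf{DoF}+m}$ factors then cancel, leaving
\begin{align}
\frac{\lambda_{\mathsf{DoF}}^{\mathsf{DoF}}}{\prod_n\lambda_n}\sum_m\xi_m\big(\psi(\mathsf{DoF}+m)+\ln\lambda_{\mathsf{DoF}}\big).
\end{align}
Multiplying by $\log_2\mathrm{e}/L$ gives the subtracted line of \eqref{ins_cc_sr}. This is exactly the computation already carried out in the proof of Corollary~\ref{cor_ins_cc_cr}, so it can be cited directly.

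\textbf{First term.} Since $\{\Phi_n\}$ are i.i.d. ZUCG, $\sum_{n=1}^{\mathsf{DoF}_Q}\nu_n|\Phi_n|^2$ has the same structure as $g_{\mathrm{c}}$, with $\lambda_n$ replaced by $\nu_n$ and $\mathsf{DoF}$ by $\mathsf{DoF}_Q$. Its PDF is therefore \eqref{pdf} under the substitutions $\lambda_n\to\nu_n$ and $\mathsf{DoF}\to\mathsf{DoF}_Q$, with coefficients $\xi_m$ recomputed via the same recursion using $(1-\nu_{\mathsf{DoF}_Q}/\nu_n)^i$. Repeating the term-by-term integration above yields the first line of \eqref{ins_cc_sr}.

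\textbf{Main obstacle.} The main point requiring care is justifying term-by-term integration of the infinite series in \eqref{pdf} against $\ln x$, which has a singularity at $0$ and grows at $\infty$. I would use the fact that $\xi_m\ge 0$ and that the mixture weights $\frac{\lambda_{\mathsf{DoF}}^{\mathsf{DoF}}}{\prod\lambda_n}\xi_m$ sum to one. This holds because each ratio $\lambda_{\mathsf{DoF}}/\lambda_n\le1$, which gives a convergent Moschopoulos-type expansion. Since $|\psi(k)|\le \ln k+1$ grows only logarithmically and the weights decay geometrically, dominated (or monotone, applied to $|\ln x|$) convergence permits the interchange. A secondary point is that $\mathsf{DoF}_Q$ must be interpreted, like $\mathsf{DoF}$, as the number of non-negligible eigenvalues of $Q$, so the truncation is of the same approximate nature as in Lemma~\ref{lem_pdf}. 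I would state this explicitly rather than claim exactness.
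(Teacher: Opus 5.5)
Your proposal is correct and follows the paper's argument: take expectations of the two logarithmic terms in Lemma~\ref{lem_ins_cc_sr} separately, write each finite weighted sum of exponentials as the Gamma-mixture density \eqref{pdf}, and integrate term by term against $\ln x$ using \cite[(4.352.1)]{integral}. Your extra remarks are valid refinements of details the paper leaves implicit: the $\xi_m$ in the first line must be recomputed from the $\nu_n$, since the paper reuses the symbol, and you justify the term-by-term interchange via nonnegative, geometrically decaying mixture weights.
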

\begin{IEEEproof}
The results are obtained by using \cite[(4.352.1)]{integral}.	
\end{IEEEproof}

\vspace{-5pt}
\begin{corollary}\label{cor_ins_cc_sr}
	The high-SNR average SR under the C-C design satisfies
	\begin{equation}
		\begin{split}
			\overline{\mathcal{R}}_{\mathrm{s}}^{\mathrm{c}}&\simeq \frac{1}{L}\left( \log _2\overline{\gamma }_{\mathrm{c}}+\log _2({L\alpha _{\mathrm{s}}G_{\mathrm{r}}\Xi}  )-\frac{\mathcal{C}}{\ln 2} \right) \\
			&-\frac{\lambda _{\mathsf{DoF}}^{\mathsf{DoF}}\log _2\mathrm{e}}{L\prod_{n=1}^{\mathsf{DoF}}{\lambda _n}}\sum_{m=0}^{\infty}{\xi _m\left( \psi (\mathsf{DoF}\!+m)+\ln(\lambda _{\mathsf{DoF}}) \right)}.
		\end{split}		
	\end{equation}	
	This suggests that the high-SNR slope of the SR achieved by the C-C design is $\frac{1}{L}$.
\end{corollary}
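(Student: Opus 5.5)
The corollary concerns the high-SNR behaviour of the average SR under the C-C design. The plan is to start from Lemma~\ref{lem_ins_cc_sr} and the exact expression \eqref{ins_cc_sr}, and to treat its two terms separately. By linearity of expectation,
\begin{equation*}
\overline{\mathcal{R}}_{\mathrm{s}}^{\mathrm{c}}=\frac{1}{L}\mathbb{E}\Big\{\log_2\sum\nolimits_{n=1}^{\mathsf{DoF}_Q}\nu_n|\Phi_n|^2\Big\}-\frac{1}{L}\mathbb{E}\{\log_2 g_{\mathrm{c}}\}.
\end{equation*}
The second term does not depend on the transmit power. It is exactly the digamma series already computed in the proof of Corollary~\ref{cor_ins_cc_cr}, via \cite[(4.352.1)]{integral} applied to $f_g$ from Lemma~\ref{lem_pdf}. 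That term therefore carries over unchanged, and all the work is in the first term.

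For the first term I will not use the eigen-series of $Q$. Instead I will return to the definition \eqref{def_ins_cc_sr}. The argument of the logarithm equals
\begin{equation*}
g_{\mathrm{c}}+L\alpha_{\mathrm{s}}\overline{\gamma}_{\mathrm{s}}G_{\mathrm{r}}\Big|\int_{\mathcal{A}_{\mathrm{t}}}h_{\mathrm{s}}(\mathbf{t})h_{\mathrm{c}}^{*}(\mathbf{t})\mathrm{d}\mathbf{t}\Big|^2=g_{\mathrm{c}}+L\alpha_{\mathrm{s}}\overline{\gamma}_{\mathrm{s}}G_{\mathrm{r}}|\rho|^2,
\end{equation*}
after multiplying through by $g_{\mathrm{c}}$. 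This is precisely the quadratic form whose spectrum is $\{\nu_n\}$. As $P\to\infty$, the rank-one part scaling with the SNR dominates. Consequently one eigenvalue of $Q$ grows linearly in $P$ while the others stay bounded, and $\log_2(\cdot)\simeq\log_2(\overline{\gamma}\,L\alpha_{\mathrm{s}}G_{\mathrm{r}}|\rho|^2)$ up to a vanishing error.

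From the proof of Theorem~\ref{the_ins_sc_cr} (Appendix~\ref{proof_the_ins_sc_cr}), $\rho$ is zero-mean complex Gaussian with variance $\Xi$, so $|\rho|^2\sim\mathsf{Exp}(1/\Xi)$. Then
\begin{equation*}
\mathbb{E}\{\ln|\rho|^2\}=\ln\Xi-\mathcal{C}
\end{equation*}
by \cite[(4.331.1)]{integral}. This is the same computation that yields \eqref{asy_sc_cr}, and it produces the term $\log_2(L\alpha_{\mathrm{s}}G_{\mathrm{r}}\Xi)-\mathcal{C}/\ln 2$. The SNR term is written as $\log_2\overline{\gamma}_{\mathrm{c}}$ in the statement, following the paper's convention that both SNRs scale with $P$. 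Dividing the $\log_2 P$ coefficient by $L$ then gives a high-SNR slope of $1/L$.

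The main obstacle is justifying that the expectation of the logarithm passes to the limit. Concretely, I need
\begin{equation*}
\mathbb{E}\{\log(g_{\mathrm{c}}+c\overline{\gamma}|\rho|^2)\}-\mathbb{E}\{\log(c\overline{\gamma}|\rho|^2)\}\to 0.
\end{equation*}
The difference inside the expectation equals $\log(1+g_{\mathrm{c}}/(c\overline{\gamma}|\rho|^2))$. This is nonnegative and tends to zero pointwise. It is dominated by $g_{\mathrm{c}}/(c|\rho|^2)$ for $\overline{\gamma}\ge1$, but that bound is not integrable, because $\mathbb{E}\{1/|\rho|^2\}=\infty$ for an exponential variable. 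I would therefore split the event $\{|\rho|^2<\epsilon\}$. On this event I would use the cruder bound $\log(1+g_{\mathrm{c}}/(c|\rho|^2))\le\log(1+g_{\mathrm{c}}/c)+|\log|\rho|^2|$, which is integrable since $|\rho|^2$ has a bounded density near zero. Dominated convergence then gives the $o(1)$ error.
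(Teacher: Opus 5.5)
Your proposal is correct and follows the paper's own route: drop $g_{\mathrm{c}}$ against $L\alpha_{\mathrm{s}}\overline{\gamma}_{\mathrm{s}}G_{\mathrm{r}}|\rho|^2$, use $|\rho|^2\sim\mathsf{Exp}(1/\Xi)$ to get $\ln\Xi-\mathcal{C}$ as in Corollary~\ref{cor_ins_sc_cr}, and reuse the digamma series for $\mathbb{E}\{\log_2 g_{\mathrm{c}}\}$ from Corollary~\ref{cor_ins_cc_cr}, with your dominated-convergence split justifying the limit/expectation interchange that the paper leaves implicit. One quibble concerns the SNR term: the derivation yields $\log_2\overline{\gamma}_{\mathrm{s}}$ (as the paper's intermediate step also shows), and replacing it by $\log_2\overline{\gamma}_{\mathrm{c}}$ changes the constant by $\frac{1}{L}\log_2(\sigma_{\mathrm{c}}^2/\sigma_{\mathrm{s}}^2)$, so the swap is exact only when $\sigma_{\mathrm{s}}^2=\sigma_{\mathrm{c}}^2$, not merely because both SNRs scale with $P$ (the slope $1/L$ is unaffected).
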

\vspace{-5pt}
\begin{IEEEproof}
	When $P\rightarrow \infty$, \eqref{def_ins_cc_sr} can be rewritten as follows:
	\begin{equation}
	\mathcal{R} _{\mathrm{s}}^{\mathrm{c}}\simeq \frac{1}{L}\log _2\left( L\alpha _{\mathrm{s}}\overline{\gamma }_{\mathrm{s}}G_\mathrm{r}\left| \rho \right|^2 \right) -\frac{1}{L}\log _2g_{\mathrm{c}} .
	\end{equation}
	The average SR can be then calculated by following the steps similar to the proofs of Corollary \ref{cor_ins_sc_cr} and \ref{cor_ins_cc_cr}.	
\end{IEEEproof}

By comparing the sensing and communication performance of the S-C and C-C designs, we have the following observations.
\begin{remark}\label{compare}
The S-C and C-C designs achieve the same multiplexing gains for both sensing and communications, whereas the C-C design provides a higher diversity gain than the S-C design.
\end{remark}

\subsection{Pareto-Optimal Design}
In real-world implementations, the beamformer $w({\mathbf{t}})$ can be tailored to satisfy different quality-of-service demands, inevitably leading to a balance between communication and sensing performance. To assess this balance, we analyze the Pareto boundary of the achievable SR–CR region in the CAPA-based ISAC framework. This boundary consists of SR–CR pairs where enhancing one rate necessarily results in the degradation of the other. It effectively depicts the optimal trade-off between sensing and communication functionalities and offers key insights for system optimization.

It should be emphasized that the SR monotonically increases with respect to $\lvert \int_{\mathcal{A} _{\mathrm{t}}}{h_{\mathrm{s}}}(\mathbf{t})w(\mathbf{t})\mathrm{d}\mathbf{t}  \rvert^2\triangleq{\Upsilon}_{\rm{s}}({{w}}({\mathbf{t}}))$, while the CR exhibits a similar monotonic relationship with $\lvert\int_{{\mathcal{A}}_{\rm{t}}}h_{\rm{c}}(\mathbf{t})w({\mathbf{t}}){\rm{d}}{\mathbf{t}}\rvert^2\triangleq{\Upsilon}_{\rm{c}}({{w}}({\mathbf{t}}))$. Consequently, the achievable SR–CR region can be characterized through the $\hat{\gamma}_{\rm{c}}({{w}}({\mathbf{t}}))$-$\hat{\gamma}_{\rm{s}}({{w}}({\mathbf{t}}))$ space. Specifically, any point lying on the Pareto boundary of this region can be determined by solving the following optimization problem:
\begin{subequations}\label{pareto}
\begin{align}
\max_{w({\mathbf{t}}),\Upsilon}~~&\Upsilon\\
{\rm{s.t.}}~~&{\Upsilon}_{\rm{s}}({{w}}({\mathbf{t}}))\geq\tau\Upsilon,~\Upsilon_{\rm{c}}({{w}}({\mathbf{t}}))\geq(1-\tau)\Upsilon,\\
&\int_{{\mathcal{A}}_{\rm{t}}}\lvert{w}({\mathbf{t}})\rvert^2{\rm{d}}{\mathbf{t}}=1,
\end{align}
\end{subequations}
where $\tau\in[0,1]$ is a trade-off parameter. The entire Pareto boundary is determined by solving the above problem with $\tau$ varying from $0$ to $1$, which belongs to the class of non-convex integral-based functional programming. To solve such a challenging problem, we introduce a \textit{subspace approach} as follows.
\vspace{-5pt}
\begin{lemma}\label{Lemma_Subspace}
For a given $\tau$, the Pareto-optimal beamformer lies in the subspace spanned by $\{h_{\rm{s}}^{*}({\mathbf{t}}),h_{\rm{c}}^{*}(\mathbf{t})\}$, i.e., $w({\mathbf{t}})=ah_{\rm{c}}^{*}({\mathbf{t}})+bh_{\rm{s}}^{*}(\mathbf{t})$, where $a,b\in\mathbb{C}$.
\end{lemma}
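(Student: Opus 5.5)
We argue by orthogonal decomposition in the Hilbert space $L^2(\mathcal{A}_{\rm t})$ with inner product $\langle f,g\rangle=\int_{\mathcal{A}_{\rm t}} f(\mathbf{t})g^*(\mathbf{t})\,\mathrm{d}\mathbf{t}$. In this notation both objectives are squared inner products: $\Upsilon_{\rm s}(w)=\lvert\langle w,h_{\rm s}^*\rangle\rvert^2$ and $\Upsilon_{\rm c}(w)=\lvert\langle w,h_{\rm c}^*\rangle\rvert^2$, where the channels $h_{\rm s},h_{\rm c}$ are fixed square-integrable functions in a given realization. Let $\mathcal{S}=\mathrm{span}\{h_{\rm s}^*,h_{\rm c}^*\}$, a closed subspace of dimension at most two. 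Any feasible $w$ splits uniquely as $w=w_\parallel+w_\perp$ with $w_\parallel\in\mathcal{S}$ and $w_\perp\perp\mathcal{S}$.

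Fix $\tau$ and let $(w^\star,\Upsilon^\star)$ be an optimal solution of \eqref{pareto}. Since $w_\perp^\star$ is orthogonal to both $h_{\rm s}^*$ and $h_{\rm c}^*$, it contributes nothing to either inner product, so $\Upsilon_{\rm s}(w^\star)=\Upsilon_{\rm s}(w^\star_\parallel)$ and $\Upsilon_{\rm c}(w^\star)=\Upsilon_{\rm c}(w^\star_\parallel)$. By Pythagoras, $\lVert w^\star_\parallel\rVert^2=1-\lVert w^\star_\perp\rVert^2\le 1$.

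Suppose $w^\star_\perp\neq 0$ and $w^\star_\parallel\neq0$, and define the rescaled beamformer $\tilde w=w^\star_\parallel/\lVert w^\star_\parallel\rVert$. It satisfies the energy constraint, and both $\Upsilon_{\rm s}$ and $\Upsilon_{\rm c}$ are multiplied by $1/\lVert w^\star_\parallel\rVert^2>1$. Both constraints in \eqref{pareto} then hold with slack, so $\Upsilon$ can be raised strictly above $\Upsilon^\star$. This contradicts optimality whenever $\Upsilon^\star>0$.

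The remaining case $w^\star_\parallel=0$ gives $\Upsilon^\star=0$. This cannot be optimal: for example, $w=h_{\rm c}^*/\lVert h_{\rm c}\rVert$, or a normalized combination chosen so that both $\Upsilon_{\rm s}$ and $\Upsilon_{\rm c}$ are positive, already achieves a positive objective. Such a combination exists unless $h_{\rm s}$ or $h_{\rm c}$ vanishes identically, which happens with probability zero for the fading channel. Hence $w^\star_\perp=0$, i.e., $w^\star\in\mathcal{S}$, and so $w^\star=ah_{\rm c}^*+bh_{\rm s}^*$ for some $a,b\in\mathbb{C}$.

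The main obstacle is rigor in the function-space setting: orthogonal projection onto $\mathcal{S}$ must be well defined, and the degenerate cases must be handled. Projection needs only $h_{\rm s},h_{\rm c}\in L^2(\mathcal{A}_{\rm t})$. For $h_{\rm s}$ this follows from Theorem~\ref{the_ins_sc_sr}, since $G_{\rm t}<\infty$. For $h_{\rm c}$ it holds almost surely, because $\mathbb{E}\{g_{\rm c}\}=\sum_n\lambda_n=L_{\rm t}<\infty$ by Corollary~\ref{cor_g_sta}. If $h_{\rm s}^*$ and $h_{\rm c}^*$ are collinear, $\mathcal{S}$ is one-dimensional and the claim still holds with a redundant parametrization.

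Finally, the argument has to cover every Pareto point, not just one maximizer. I would note that the argument above applies to any maximizer of \eqref{pareto}, so for every $\tau\in[0,1]$ all optimal beamformers lie in $\mathcal{S}$. This means no optimality is lost when the search is restricted to $(a,b)$.
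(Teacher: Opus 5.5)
Your proof is correct and takes the same route as the paper: you split $w$ into a component in $\mathrm{span}\{h_{\rm s}^*,h_{\rm c}^*\}$ and an orthogonal remainder, which leaves $\Upsilon_{\rm s}$ and $\Upsilon_{\rm c}$ unchanged while consuming energy. Your explicit renormalization $\tilde w=w_\parallel/\lVert w_\parallel\rVert$, which scales both metrics by $1/\lVert w_\parallel\rVert^2>1$, and your treatment of the degenerate case $w_\parallel=0$ make rigorous the paper's informal claim that the orthogonal component ``should be eliminated.''
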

\vspace{-5pt}
\begin{IEEEproof}
Please refer to Appendix \ref{proof_Lemma_Subspace} for more details.
\end{IEEEproof}
Based on Lemma \ref{Lemma_Subspace}, we denote $\{\phi_1(\mathbf{t}),\phi_2(\mathbf{t})\}$ as an orthonormal basis for the signal subspace spanned by $\{h_{\rm{s}}^{*}({\mathbf{t}}),h_{\rm{c}}^{*}(\mathbf{t})\}$, which satisfies
\begin{align}
\int_{{\mathcal{A}}_{\rm{t}}}\phi_{i}^{*}(\mathbf{t})\phi_{j}(\mathbf{t}){\rm{d}}{\mathbf{t}}=\delta_{i,j},\quad\forall i,j\in\{1,2\},
\end{align}
where $\delta_{i,j}$ denotes the Kronecker delta. The beamformer and channels can be expressed in terms of the basis as follows:
\begin{subequations}\label{Subspace_Approach_Transform_Basic_Variables}
\begin{align}
w({\mathbf{t}})&=w_1\phi_1(\mathbf{t})+w_2\phi_2(\mathbf{t}),\\
h_{\rm{s}}^{*}({\mathbf{t}})&=h_{\mathrm{s},1}\phi_1(\mathbf{t})+h_{\mathrm{s},2}\phi_2(\mathbf{t}),\\
h_{\rm{c}}^{*}(\mathbf{t})&=h_{\mathrm{c},1}\phi_1(\mathbf{t})+h_{\mathrm{c},2}\phi_2(\mathbf{t}).
\end{align}
\end{subequations}
Defining the vector representations: ${\mathbf{w}}=[w_1,w_2]^{\mathsf{T}}$, ${\mathbf{h}}_{\mathrm{s}}=[h_{\mathrm{s},1},h_{\mathrm{s},2}]^{\mathsf{T}}$, and ${\mathbf{h}}_{\mathrm{c}}=[h_{\mathrm{c},1},h_{\mathrm{c},2}]^{\mathsf{T}}$, we have $\left\| \mathbf{h}_{\mathrm{s}} \right\| ^2=\int_{\mathcal{A} _{\mathrm{t}}}{\!}\left| h_{\mathrm{s}}(\mathbf{t}) \right|^2\mathrm{d}\mathbf{t}=G_{\mathrm{t}}$, $\left\| \mathbf{h}_{\mathrm{c}} \right\| ^2=\int_{\mathcal{A} _{\mathrm{t}}}{\!}\left| h_{\mathrm{c}}(\mathbf{t}) \right|^2\mathrm{d}\mathbf{t}=g_{\mathrm{c}}$, and $\mathbf{h}_{\mathrm{s}}^{\mathsf{H}}\mathbf{h}_{\mathrm{c}}=\int_{\mathcal{A} _{\mathrm{t}}}{h_{\mathrm{c}}(\mathbf{t})h_{\mathrm{s}}^{*}(\mathbf{t})}\mathrm{d}\mathbf{t}=\rho$. Furthermore, the integrals can be transformed into the following forms:
\begin{subequations}
\begin{align}
	&\int_{{\mathcal{A}}_{\rm{t}}}h_{\rm{s}}({\mathbf{t}})
	{{w}}({\mathbf{t}}){\rm{d}}{\mathbf{t}}={\mathbf{h}}_{\mathrm{s}}^{\mathsf{T}}{\mathbf{w}}, \
	\int_{{\mathcal{A}}_{\rm{t}}}h_{\rm{c}}(\mathbf{t})w({\mathbf{t}}){\rm{d}}{\mathbf{t}}
	={\mathbf{h}}_{\mathrm{c}}^{\mathsf{T}}{\mathbf{w}},\\
	&\int_{{\mathcal{A}}_{\rm{t}}}\lvert{w}({\mathbf{t}})\rvert^2{\rm{d}}{\mathbf{t}}=\lVert{\mathbf{w}}\rVert^2=1,
\end{align}
\end{subequations}
Consequently, problem \eqref{pareto} can be equivalently rewritten as follows:
\begin{align}\label{pareto_problme_2}
\max_{{\mathbf{w}},\Upsilon}\Upsilon~~{\rm{s.t.}}~\lvert{\mathbf{h}}_{\mathrm{s}}^{\mathsf{T}}{\mathbf{w}}\rvert^2\geq\tau\Upsilon,~
\lvert{\mathbf{h}}_{\mathrm{c}}^{\mathsf{T}}{\mathbf{w}}\rvert^2\geq(1\!-\!\tau)\Upsilon,~
\lVert{\mathbf{w}}\rVert^2\!=\!1.
\end{align}
By employing the subspace approach, the originally intractable functional programming problem is converted to problem \eqref{pareto_problme_2}, which is a classical vector-based optimization problem that can be solved using the Karush–Kuhn–Tucker (KKT) conditions.
\vspace{-5pt}
\begin{lemma}\label{lem_w_star}
For a given $\tau$, the Pareto-optimal solution for problem \eqref{pareto_problme_2} is given by
\begin{align}\label{w_star}
\mathbf{w}_{\tau}=\begin{cases}
	\frac{\mathbf{h}_{\mathrm{c}}^*}{\sqrt{g_{\mathrm{c}}}}&		\tau\in \left[ 0,\frac{\left| \rho  \right|^2}{\left| \rho  \right|^2+g_{\mathrm{c}}^{2}} \right]\\
	\frac{\epsilon _1}{\varsigma }\mathbf{h}_{\mathrm{s}}^*+\frac{\epsilon _2\mathrm{e}^{-\mathrm{j}\angle \rho }}{\varsigma }\mathbf{h}_{\mathrm{c}}^*&		\tau \in \left( \frac{\left| \rho  \right|^2}{\left| \rho  \right|^2+g_{\mathrm{c}}^{2}},\frac{G_{\mathrm{t}}^{2}}{G_{\mathrm{t}}^{2}+\left| \rho  \right|^2} \right)\\
	\frac{\mathbf{h}_{\mathrm{s}}^*}{\sqrt{G_{\mathrm{t}}}}&		\tau \in \left[ \frac{G_{\mathrm{t}}^{2}}{G_{\mathrm{t}}^{2}+\left| \rho  \right|^2},1 \right]\\
\end{cases},    
\end{align}
where $\epsilon_1=\frac{\sqrt{\tau}g_{\mathrm{c}}-\sqrt{\left( 1-\tau \right)}\left| \rho  \right|}{\left( 1-\tau \right) G_{\mathrm{t}}+\tau g_{\mathrm{c}}-2\sqrt{\tau \left( 1-\tau \right)}\left| \rho  \right|}$, $\epsilon _2=\frac{\sqrt{\left( 1-\tau \right)}G_{\mathrm{t}}-\sqrt{\tau}\left| \rho  \right|}{\left( 1-\tau \right) G_{\mathrm{t}}+\tau g_{\mathrm{c}}-2\sqrt{\tau \left( 1-\tau \right)}\left| \rho  \right|}$, and $\varsigma  =\sqrt{\epsilon _{1}^{2}G_{\mathrm{t}}+\epsilon _{2}^{2}g_{\mathrm{c}}+2\epsilon _1\epsilon _2\left| \rho  \right|}$ is for normalization.
\end{lemma}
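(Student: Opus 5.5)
The plan is to rewrite problem \eqref{pareto_problme_2} as the max-min problem
\begin{align}
\max_{\lVert\mathbf{w}\rVert=1}\ \min\left\{\frac{\lvert\mathbf{h}_{\mathrm{s}}^{\mathsf{T}}\mathbf{w}\rvert^2}{\tau},\ \frac{\lvert\mathbf{h}_{\mathrm{c}}^{\mathsf{T}}\mathbf{w}\rvert^2}{1-\tau}\right\}
\end{align}
and then split into three regimes according to which constraints are active at the optimum.

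The first step handles the two boundary regimes. For every unit-norm $\mathbf{w}$, Cauchy--Schwarz gives $\lvert\mathbf{h}_{\mathrm{c}}^{\mathsf{T}}\mathbf{w}\rvert^2\le g_{\mathrm{c}}$, so $\Upsilon\le g_{\mathrm{c}}/(1-\tau)$. This bound is attained only by $\mathbf{w}=\mathbf{h}_{\mathrm{c}}^*/\sqrt{g_{\mathrm{c}}}$, up to a phase. That choice is therefore optimal exactly when it also satisfies the sensing constraint, namely $\lvert\rho\rvert^2/g_{\mathrm{c}}\ge \tau g_{\mathrm{c}}/(1-\tau)$. This rearranges to $\tau\le \lvert\rho\rvert^2/(\lvert\rho\rvert^2+g_{\mathrm{c}}^2)$, which is the first case. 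Symmetrically, $\Upsilon\le G_{\mathrm{t}}/\tau$ is attained by $\mathbf{h}_{\mathrm{s}}^*/\sqrt{G_{\mathrm{t}}}$. That choice is feasible iff $\lvert\rho\rvert^2/G_{\mathrm{t}}\ge (1-\tau)G_{\mathrm{t}}/\tau$, i.e., $\tau\ge G_{\mathrm{t}}^2/(G_{\mathrm{t}}^2+\lvert\rho\rvert^2)$, which is the third case.

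The second step treats the intermediate interval. Here neither single-function beamformer is feasible, and I would argue that both constraints must be active at the optimum. Suppose only the sensing constraint were tight. Then the optimum would be a local maximizer of $\lvert\mathbf{h}_{\mathrm{s}}^{\mathsf{T}}\mathbf{w}\rvert^2$ on the unit sphere of $\mathbb{C}^2$. Its only local maximizer is $\mathbf{h}_{\mathrm{s}}^*/\sqrt{G_{\mathrm{t}}}$, which is infeasible in this range, so a small rotation toward $\mathbf{h}_{\mathrm{s}}^*$ would strictly increase $\Upsilon$, a contradiction. The same argument applies with the roles exchanged. The KKT conditions give the same conclusion: both multipliers are positive.

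Next, by Lemma~\ref{Lemma_Subspace} and the two-dimensionality, I write $\mathbf{w}=(x\mathbf{h}_{\mathrm{s}}^*+y\,\mathrm{e}^{-\mathrm{j}\angle\rho}\mathbf{h}_{\mathrm{c}}^*)/\varsigma$. I will show that the optimal $x,y$ can be taken real and nonnegative, since for fixed moduli this phase alignment makes the cross terms add coherently in both $\lvert\mathbf{h}_{\mathrm{s}}^{\mathsf{T}}\mathbf{w}\rvert$ and $\lvert\mathbf{h}_{\mathrm{c}}^{\mathsf{T}}\mathbf{w}\rvert$. Then $\lvert\mathbf{h}_{\mathrm{s}}^{\mathsf{T}}\mathbf{w}\rvert=(xG_{\mathrm{t}}+y\lvert\rho\rvert)/\varsigma$ and $\lvert\mathbf{h}_{\mathrm{c}}^{\mathsf{T}}\mathbf{w}\rvert=(x\lvert\rho\rvert+yg_{\mathrm{c}})/\varsigma$. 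Imposing equality of the two active constraints gives the single linear equation
\begin{align}
\sqrt{1-\tau}\,(xG_{\mathrm{t}}+y\lvert\rho\rvert)=\sqrt{\tau}\,(x\lvert\rho\rvert+yg_{\mathrm{c}}).
\end{align}
This fixes the ratio $x:y$, and any choice of overall scale yields $\epsilon_1,\epsilon_2$ as stated, with $\varsigma$ computed from $\lVert\mathbf{w}\rVert^2=x^2G_{\mathrm{t}}+y^2g_{\mathrm{c}}+2xy\lvert\rho\rvert$. Finally, I will verify that $\epsilon_1,\epsilon_2\ge 0$ precisely on the open middle interval. The two sign conditions $\sqrt{\tau}g_{\mathrm{c}}\ge\sqrt{1-\tau}\lvert\rho\rvert$ and $\sqrt{1-\tau}G_{\mathrm{t}}\ge\sqrt{\tau}\lvert\rho\rvert$ should reproduce the interval endpoints, after accounting for the common denominator's sign.

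The main obstacle is making rigorous that the phase-aligned, nonnegative-coefficient direction satisfying the equality is the global maximizer and not merely a stationary point. In two complex dimensions the equality curve modulo a global phase is small, so I would try a direct argument. On the equality set, $\Upsilon$ is proportional to $\lvert\mathbf{h}_{\mathrm{s}}^{\mathsf{T}}\mathbf{w}\rvert^2$, which is maximized by coherent combining. Any misalignment of the relative phase lowers both inner products simultaneously, so it cannot help.
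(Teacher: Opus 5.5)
Your route differs from the paper's, and apart from one step it is sound. The paper writes KKT conditions for the nonconvex vector problem, splits on the sign pattern of $(\mu_1,\mu_2)$, and reads the interval endpoints off the positivity of the explicitly computed multipliers. You instead certify the two boundary regimes with the Cauchy--Schwarz bounds $\Upsilon\le g_{\mathrm{c}}/(1-\tau)$ and $\Upsilon\le G_{\mathrm{t}}/\tau$. This gives global optimality, and uniqueness up to phase, directly rather than through necessary conditions. Your local-maximizer argument that both constraints are active on the middle interval is also correct. The endpoint conditions agree with the paper's, since $\epsilon_1=\sqrt{\tau}\,\mu_1$ and $\epsilon_2=\sqrt{1-\tau}\,\mu_2$.

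The step that does not go through as justified is the phase alignment. Take $\mathbf{w}=(x\mathbf{h}_{\mathrm{s}}^*+y\,\mathrm{e}^{\mathrm{j}\theta}\mathbf{h}_{\mathrm{c}}^*)/\varsigma$ with $x,y\ge0$, and let $c$ be the cosine of the phase mismatch ($c=1$ when aligned). Then the normalization also depends on the relative phase, $\varsigma^2=x^2G_{\mathrm{t}}+y^2g_{\mathrm{c}}+2xy\lvert\rho\rvert c$, and it too is maximized by alignment. So ``the cross terms add coherently'', and likewise ``misalignment lowers both inner products'' in your last paragraph, holds only for the unnormalized inner products. It says nothing yet about the normalized gains that enter the constraints.

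The missing comparison is short. The normalized gains are $u_{\mathrm{s}}(c)=(x^2G_{\mathrm{t}}^2+y^2\lvert\rho\rvert^2+2xyG_{\mathrm{t}}\lvert\rho\rvert c)/\varsigma^2$ and $u_{\mathrm{c}}(c)=(x^2\lvert\rho\rvert^2+y^2g_{\mathrm{c}}^2+2xyg_{\mathrm{c}}\lvert\rho\rvert c)/\varsigma^2$. The quotient rule gives $u_{\mathrm{s}}'(c)=2xy^3\lvert\rho\rvert(G_{\mathrm{t}}g_{\mathrm{c}}-\lvert\rho\rvert^2)/\varsigma^4$ and $u_{\mathrm{c}}'(c)=2x^3y\lvert\rho\rvert(G_{\mathrm{t}}g_{\mathrm{c}}-\lvert\rho\rvert^2)/\varsigma^4$. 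Both are nonnegative by Cauchy--Schwarz, $\lvert\rho\rvert^2\le G_{\mathrm{t}}g_{\mathrm{c}}$. Hence aligning the phase weakly increases both gains, and therefore the minimum; real coefficients of opposite sign are just the case $c=-1$.

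With this inserted, your argument closes. An optimizer exists by compactness, and its aligned version is still optimal. By your second step both constraints are active there. Your linear equation then has the unique positive ratio $x:y=\epsilon_1:\epsilon_2$ on the open middle interval.

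The paper's KKT proof has the same soft spot. Its relation for $\alpha/\beta$ simply asserts the relative phase. Stationarity with real $\nu$ only fixes that phase up to $\pi$, and nothing rules out the anti-aligned branch.
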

\vspace{-5pt}
\begin{IEEEproof}
Please refer to Appendix \ref{proof_lem_w_star} for more details.
\end{IEEEproof}
Having obtain $\mathbf{w}_{\tau}$, based on \eqref{Subspace_Approach_Transform_Basic_Variables}, we can derive the Pareto-optimal continuous beamformer for problem \eqref{pareto}.
\vspace{-5pt}
\begin{theorem}\label{the_ins_pareto}
For a given $\tau$, the Pareto-optimal continuous beamformer is given by
\begin{equation}\label{Pareto_Optimal_Continuous_Beamformer}
\resizebox{1\hsize}{!}{$w_{\tau}( \mathbf{t} ) \!=\begin{cases}
	w_{\rm{c}}({\mathbf{t}})&		\tau \in \left[ 0,\frac{\left| \rho  \right|^2}{\left| \rho  \right|^2+g_{\mathrm{c}}^{2}} \right]\\
	\frac{\epsilon _1}{\varsigma }h_{\mathrm{s}}^{*}( \mathbf{t} ) \!+\!\frac{\epsilon _2\mathrm{e}^{-\mathrm{j}\angle \rho }}{\varsigma }h_{\rm{c}}^{*}( \mathbf{t} )&		\tau \in \!\left( \frac{\left| \rho  \right|^2}{\left| \rho  \right|^2+g_{\mathrm{c}}^{2}},\frac{G_{\mathrm{t}}^{2}}{G_{\mathrm{t}}^{2}+\left| \rho  \right|^2} \right)\\
	w_{\rm{s}}({\mathbf{t}})&		\tau \in \left[ \frac{G_{\mathrm{t}}^{2}}{G_{\mathrm{t}}^{2}+\left| \rho  \right|^2},1 \right]\\
\end{cases}.$}
\end{equation}
\end{theorem}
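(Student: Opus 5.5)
The proof lifts the vector solution of Lemma~\ref{lem_w_star} back to the function space through the orthonormal basis $\{\phi_1(\mathbf{t}),\phi_2(\mathbf{t})\}$.

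First, Lemma~\ref{Lemma_Subspace} shows that problem \eqref{pareto} loses nothing by restricting to $w(\mathbf{t})=w_1\phi_1(\mathbf{t})+w_2\phi_2(\mathbf{t})$. Under this restriction, \eqref{Subspace_Approach_Transform_Basic_Variables} maps the objective and all constraints of \eqref{pareto} one-to-one onto those of \eqref{pareto_problme_2}. The map $\mathbf{w}\mapsto w_1\phi_1+w_2\phi_2$ is an isometry from $\mathbb{C}^2$ onto the subspace, so $\lVert\mathbf{w}\rVert=1$ corresponds exactly to the energy constraint. The two problems therefore have the same optimal value, and the optimizer of \eqref{pareto} is the image of $\mathbf{w}_\tau$.

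Second, the image of $\mathbf{w}_\tau$ is computed by linearity. By \eqref{Subspace_Approach_Transform_Basic_Variables}, the vector $\mathbf{h}_{\rm s}^*$ (the coefficient vector of $h_{\rm s}^*(\mathbf{t})$) maps to $h_{\rm s}^*(\mathbf{t})$. Likewise, $\mathbf{h}_{\rm c}^*$ maps to $h_{\rm c}^*(\mathbf{t})$. Applying this case by case gives the following.
\begin{itemize}
\item For $\tau$ in the first interval, $\mathbf{h}_{\rm c}^*/\sqrt{g_{\rm c}}$ maps to $h_{\rm c}^*(\mathbf{t})/\sqrt{g_{\rm c}}$. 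Since $g_{\rm c}=\int_{\mathcal{A}_{\rm t}}\lvert h_{\rm c}\rvert^2\mathrm{d}\mathbf{t}$, this equals $w_{\rm c}(\mathbf{t})$ in \eqref{CC_Beamforming_Design}.
\item For $\tau$ in the last interval, $\mathbf{h}_{\rm s}^*/\sqrt{G_{\rm t}}$ maps to $w_{\rm s}(\mathbf{t})$ in \eqref{SC_Beamforming_Design}, using $G_{\rm t}=\lVert\mathbf{h}_{\rm s}\rVert^2$ from Theorem~\ref{the_ins_sc_sr}.
\item For $\tau$ in the middle interval, the image is $\frac{\epsilon_1}{\varsigma}h_{\rm s}^*(\mathbf{t})+\frac{\epsilon_2\mathrm{e}^{-\mathrm{j}\angle\rho}}{\varsigma}h_{\rm c}^*(\mathbf{t})$, with the same scalars $\epsilon_1,\epsilon_2,\varsigma$.
\end{itemize}

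Third, the normalization in the middle case should be checked directly. Expanding
\[
\int\Bigl\lvert \epsilon_1 h_{\rm s}^*+\epsilon_2\mathrm{e}^{-\mathrm{j}\angle\rho}h_{\rm c}^*\Bigr\rvert^2\mathrm{d}\mathbf{t}
=\epsilon_1^2G_{\rm t}+\epsilon_2^2g_{\rm c}+2\epsilon_1\epsilon_2\,\Re\!\left(\mathrm{e}^{-\mathrm{j}\angle\rho}\rho\right)
\]
and using $\Re(\mathrm{e}^{-\mathrm{j}\angle\rho}\rho)=\lvert\rho\rvert$ shows that the right-hand side equals $\varsigma^2$. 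The beamformer therefore satisfies the energy constraint.

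The step needing most care is the correspondence between $\mathbf{h}^*$ and the conjugated functions. Because the coefficients in \eqref{Subspace_Approach_Transform_Basic_Variables} are defined for $h^*$ while the inner products use $\mathbf{h}^{\mathsf T}\mathbf{w}$, the conjugations and the phase factor $\mathrm{e}^{-\mathrm{j}\angle\rho}$ must line up. I would verify this by checking that $\int h_{\rm c}(\mathbf{t})\,w_\tau(\mathbf{t})\,\mathrm{d}\mathbf{t}=\mathbf{h}_{\rm c}^{\mathsf T}\mathbf{w}_\tau$ and $\int h_{\rm s}(\mathbf{t})\,w_\tau(\mathbf{t})\,\mathrm{d}\mathbf{t}=\mathbf{h}_{\rm s}^{\mathsf T}\mathbf{w}_\tau$. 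Both follow from orthonormality, so the rates attained by $w_\tau(\mathbf{t})$ match those of $\mathbf{w}_\tau$ and $w_\tau$ is Pareto-optimal.
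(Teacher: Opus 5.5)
Your route (restrict via Lemma~\ref{Lemma_Subspace}, solve via Lemma~\ref{lem_w_star}, lift through the isometry) is the paper's route, and the two outer cases go through. The gap is in the middle case, at the conjugation bookkeeping you flagged but did not carry out.

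In your normalization check the cross term is $2\epsilon_1\epsilon_2\Re\bigl(\mathrm{e}^{-\mathrm{j}\angle\rho}\int_{\mathcal{A}_{\rm t}}h_{\rm s}h_{\rm c}^*\,\mathrm{d}\mathbf{t}\bigr)$, and you replaced that integral by $\rho$. With $\rho\triangleq\int_{\mathcal{A}_{\rm t}}h_{\rm c}h_{\rm s}^*\,\mathrm{d}\mathbf{t}$, the integral is $\rho^*$. Done correctly,
\[
\int_{\mathcal{A}_{\rm t}}\bigl|\epsilon_1h_{\rm s}^*+\epsilon_2\mathrm{e}^{-\mathrm{j}\angle\rho}h_{\rm c}^*\bigr|^2\mathrm{d}\mathbf{t}=\epsilon_1^2G_{\rm t}+\epsilon_2^2g_{\rm c}+2\epsilon_1\epsilon_2|\rho|\cos(2\angle\rho),
\]
which is not $\varsigma^2$ unless $\rho\in\mathbb{R}$. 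For $\angle\rho=\pi/2$ the two components even combine destructively.

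The rates show the same thing. For $w=\epsilon_1h_{\rm s}^*+\epsilon_2\mathrm{e}^{\mathrm{j}\theta}h_{\rm c}^*$ one has $\int h_{\rm s}w\,\mathrm{d}\mathbf{t}=\epsilon_1G_{\rm t}+\epsilon_2\mathrm{e}^{\mathrm{j}\theta}\rho^*$ and $\int h_{\rm c}w\,\mathrm{d}\mathbf{t}=\epsilon_1\rho+\epsilon_2\mathrm{e}^{\mathrm{j}\theta}g_{\rm c}$. Coherent combining in both therefore requires $\theta=+\angle\rho$. With that choice, $w/\varsigma$ has unit energy and both constraints of \eqref{pareto} are tight with $\Upsilon=\varsigma^2$. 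With $\theta=-\angle\rho$, as printed, the beamformer is neither unit-energy nor Pareto-optimal whenever $\rho\notin\mathbb{R}$.

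This is not only a slip in your write-up. It is where the statement itself needs correcting, and treating Lemma~\ref{lem_w_star} as a black box hides it.

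Your parenthetical makes $\mathbf{h}_{\rm s}^*$ the coefficient vector of $h_{\rm s}^*(\mathbf{t})$, i.e.\ $h_{\rm s}^*=h_{{\rm s},1}^*\phi_1+h_{{\rm s},2}^*\phi_2$. That contradicts \eqref{Subspace_Approach_Transform_Basic_Variables} as printed, under which $\int h_{\rm s}w\,\mathrm{d}\mathbf{t}=\mathbf{h}_{\rm s}^{\mathsf{H}}\mathbf{w}$ and $\mathbf{h}_{\rm s}^{\mathsf{H}}\mathbf{h}_{\rm c}=\rho^*$. Your convention is nevertheless the one under which $\int hw\,\mathrm{d}\mathbf{t}=\mathbf{h}^{\mathsf{T}}\mathbf{w}$ and $\rho=\mathbf{h}_{\rm s}^{\mathsf{H}}\mathbf{h}_{\rm c}$ hold. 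So state it as a correction of the display, not as something orthonormality delivers.

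Under that convention, $\mathbf{h}_{\rm s}^{\mathsf{T}}\mathbf{h}_{\rm c}^*=\rho^*$ and $\mathbf{h}_{\rm c}^{\mathsf{T}}\mathbf{h}_{\rm s}^*=\rho$. Stationarity then gives $\mu_1(G_{\rm t}+\rho^*\beta/\alpha)=\mu_2(g_{\rm c}+\rho\,\alpha/\beta)=\nu$. Thus \eqref{mu_eta} has $\rho$ and $\rho^*$ interchanged, and $\beta/\alpha$ must have phase $+\angle\rho$, not the $-\angle\rho$ implied by \eqref{alphabeta}. Faithfully lifting Lemma~\ref{lem_w_star} therefore just transports a sign error.

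A correct proof must redo the phase alignment and obtain the middle case with $\mathrm{e}^{+\mathrm{j}\angle\rho}$. Equivalently, keep $\mathrm{e}^{-\mathrm{j}\angle\rho}$ but define $\rho$ as $\int h_{\rm s}h_{\rm c}^*\,\mathrm{d}\mathbf{t}$. The quantities $\epsilon_1$, $\epsilon_2$, $\varsigma$, the outer cases, and the interval endpoints involve only $|\rho|$ and are unaffected.
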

\vspace{-5pt}
\begin{remark}
The results of Theorem \ref{the_ins_pareto} suggest that the Pareto-optimal beamformer is a weighted combination of $\{h_{\rm{s}}^{*}({\mathbf{t}}),h_{\rm{c}}^{*}(\mathbf{t})\}$. In particular, it reduces to the S-C beamformer when $\tau \in \left[ \frac{G_{\mathrm{t}}^{2}}{G_{\mathrm{t}}^{2}+\left| \rho  \right|^2},1 \right]$, and to the C-C beamformer when $\tau \in \left[ 0,\frac{\left| \rho  \right|^2}{\left| \rho  \right|^2+g_{\mathrm{c}}^{2}} \right]$.
\end{remark}
\vspace{-5pt}
For a given $\tau$, let $(\mathcal{R}_{\rm{s}}^\tau,\overline{\mathcal{R}}_{\rm{c}}^\tau)$ represent the SR-CR pair located on the Pareto boundary. Therefore, we have $\mathcal{R}_{\rm{s}}^\tau\in[\mathcal{R}_{\rm{s}}^{\rm{c}},\mathcal{R}_{\rm{s}}^{\rm{s}}]$ and $\overline{\mathcal{R}}_{\rm{c}}^\tau\in[\overline{\mathcal{R}}_{\rm{c}}^{\rm{s}},\overline{\mathcal{R}}_{\rm{c}}^{\rm{c}}]$, where $\mathcal{R}_{\rm{s}}^0=\mathcal{R}_{\rm{s}}^{\rm{c}}$, $\mathcal{R}_{\rm{s}}^1=\mathcal{R}_{\rm{s}}^{\rm{s}}$, $\overline{\mathcal{R}}_{\rm{c}}^0=\overline{\mathcal{R}}_{\rm{c}}^{\rm{c}}$, and $\overline{\mathcal{R}}_{\rm{c}}^1=\overline{\mathcal{R}}_{\rm{c}}^{\rm{s}}$. Based on the Sandwich theorem, we can obtain the high-SNR slopes and diversity orders for any SR-CR pair on the Pareto boundary, as outlined in Table \ref{table1}. Consequently, the achievable SR-CR region of the CAPA-based ISAC system can be characterized as follows:
\begin{equation}\label{do_capa_region}
\mathcal{C}=\left\{\left({\mathcal{R}}_{\mathrm{s}},{\mathcal{R}}_{\mathrm{c}}\right)\left|{\mathcal{R}}_{\mathrm{s}}\in[0,\mathcal{R}_{\mathrm{s}}^{\tau}],
{\mathcal{R}}_{\mathrm{c}}\in[0,\overline{\mathcal{R}}_{\mathrm{c}}^{\tau}],\tau\in\left[0,1\right]\right.\right\}.
\end{equation}

\begin{table}[!t]
	\centering
	\caption{High-SNR Slopes ($\mathcal{S}$) and Diversity Orders ($\mathcal{D}$)}
	\begin{tabular}{|c|c|c|c|}\hline
		\multicolumn{1}{|c|}{\multirow{2}{*}{System design}} & SR & \multicolumn{2}{c|}{ECR}  \\ \cline{2-4}
		& $\mathcal{S}$  & $\mathcal{S}$ & $\mathcal{D}$ \\ \hline
		CAPA-ISAC S-C & $1/L$ & $1$  & $1$  \\ \hline
		CAPA-ISAC C-C & $1/L$ & $1$  & $\mathsf{DoF}$  \\ \hline
		CAPA-ISAC Pareto-optimal & $1/L$ & $1$  & $[1,\mathsf{DoF}]$  \\ \hline
		SPDA ($d_\mathrm{s}\leq\lambda/2$) S-C & $1/L$ & $1$  & $1$  \\ \hline
		SPDA ($d_\mathrm{s}\leq\lambda/2$) C-C & $1/L$ & $1$  & $\mathsf{DoF}$  \\ \hline
		SPDA ($d_\mathrm{s}>\lambda/2$) S-C & $1/L$ & $1$  & $1$  \\ \hline
		SPDA ($d_\mathrm{s}>\lambda/2$) C-C & $1/L$ & $1$  & $L_\mathrm{t}/d_\mathrm{s}$ \\ \hline
		FDSAC & $\kappa/L$ & $1-\kappa$  & $\mathsf{DoF}$  \\ \hline
	\end{tabular}	
	\vspace{2pt}
	\label{table1}
\end{table}

\begin{figure} [!t]
	\centering
	\includegraphics[height=2.1in]{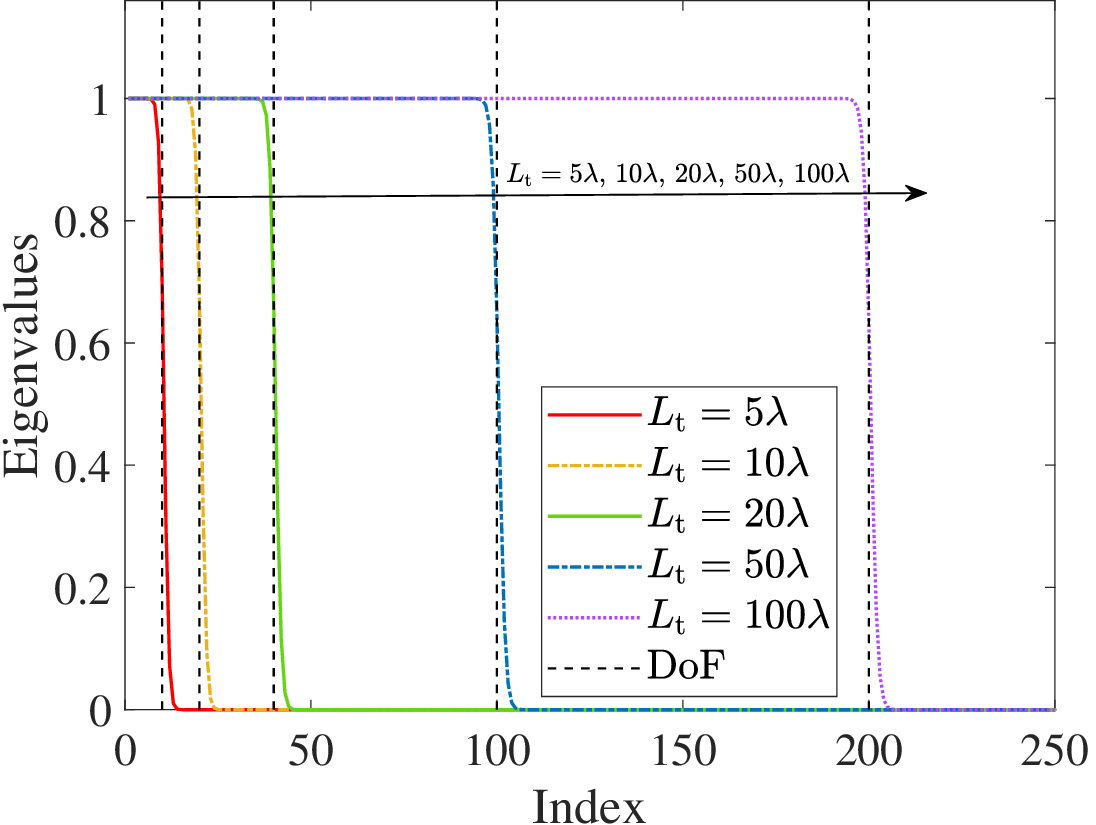}
	\caption{Illustration of the eigenvalues of $K(z,z')$.}
	\vspace{-5pt}
	\label{fig_eigen}
\end{figure}

\vspace{-3pt}
\section{Numerical Results}\label{section_numerical}
In this section, we present numerical results to evaluate the performance of CAPA-based ISAC systems and to validate the accuracy of the derived analytical expressions. Unless otherwise specified, the simulation parameters are set as follows: the wavelength is given by $\lambda=0.125$ m, $L_\mathrm{t}=L_\mathrm{r}=10 \lambda$, $d=2 \lambda$, $\overline{\gamma}_{\mathrm{s}}=\overline{\gamma}_{\mathrm{c}}=50$ dB, $L=4$, $\alpha_{\rm{s}}=1$, and $R_0=5$ bps/Hz. The target and CU are located at $\mathbf{p}_\mathrm{s}=[2 \, \text{m},1\,\text{m},1\,\text{m}]$ and $\mathbf{p}_\mathrm{c}=[4 \,\text{m},0\,\text{m},0\,\text{m}]$, respectively.

\begin{figure} [!t]
	\centering
	\includegraphics[height=2.1in]{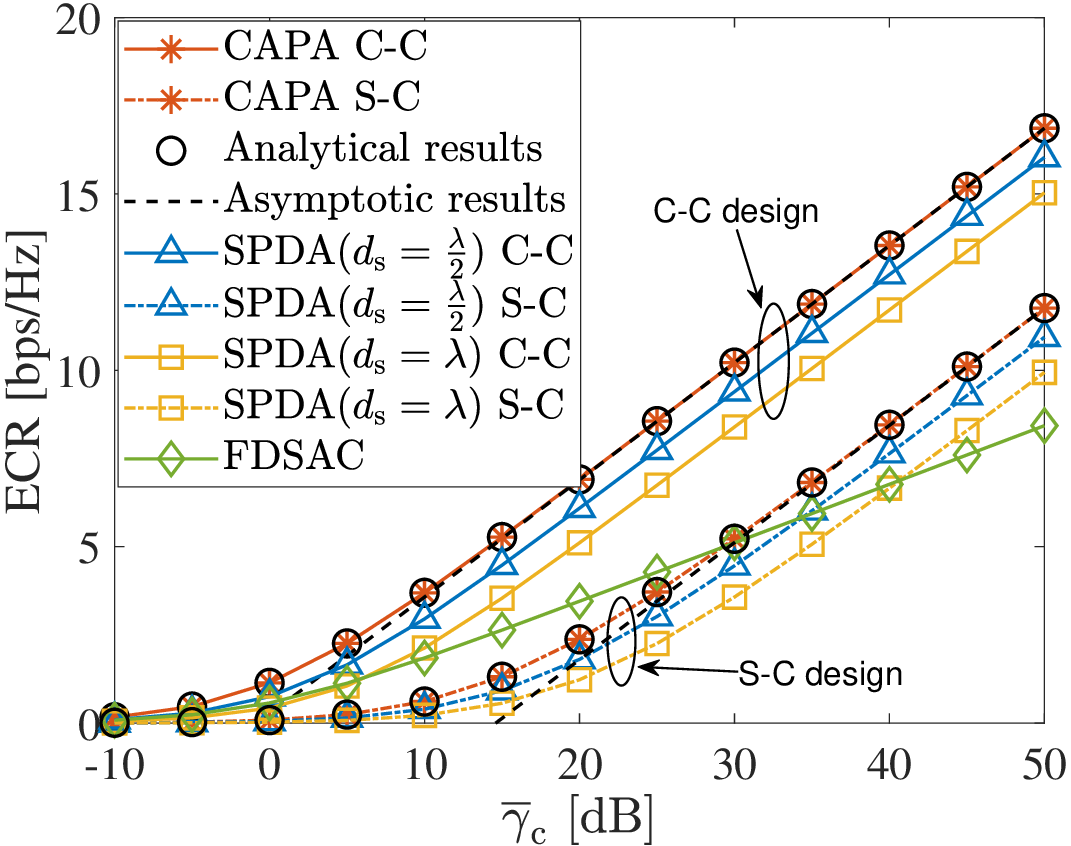}
	\caption{ECR versus transmit SNR.}
	\vspace{-5pt}
	\label{fig_ecr}
\end{figure}

\begin{figure} [!t]
	\centering
	\includegraphics[height=2in]{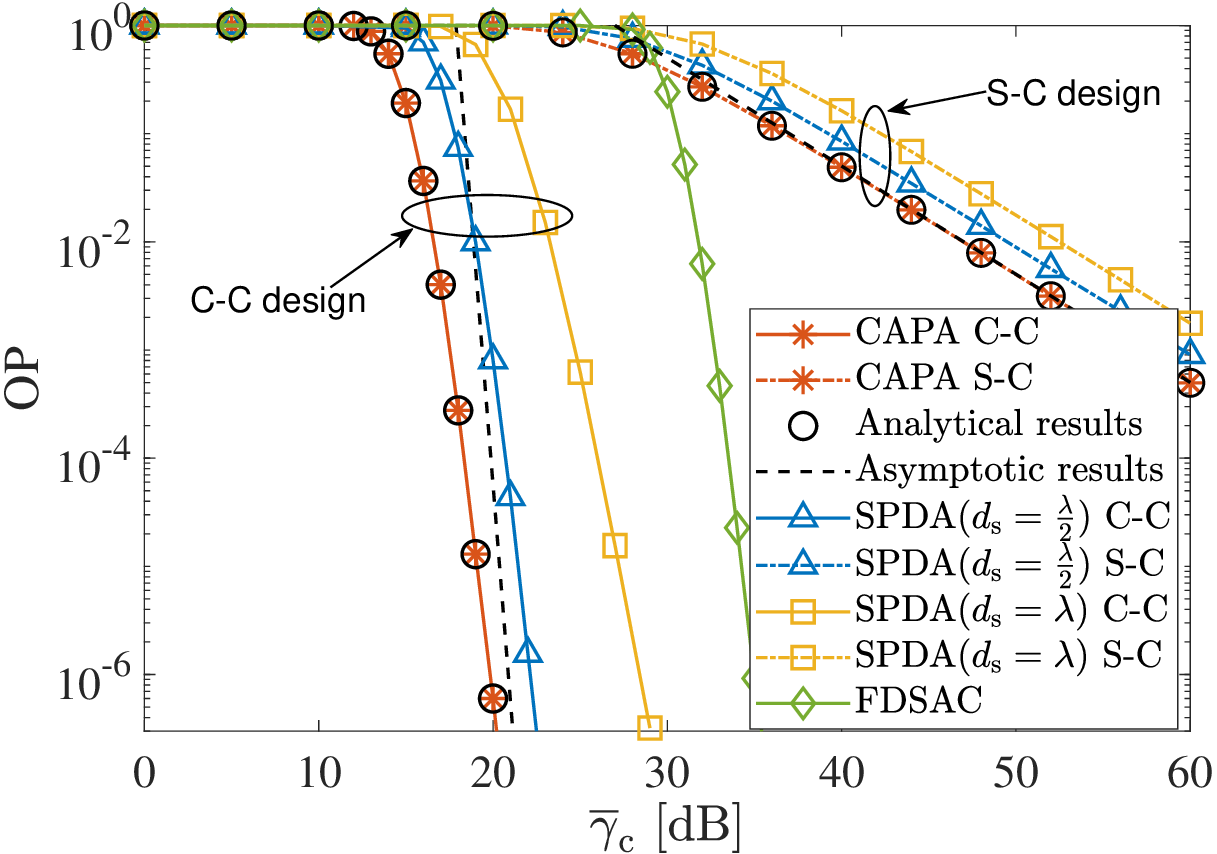}
	\caption{OP versus transmit SNR.}
	\vspace{-7pt}
	\label{fig_op}
\end{figure}

In Fig. \ref{fig_eigen}, we illustrate the ordered eigenvalues of $K(z,z')$ for different array lengths $L_\mathrm{t}$, computed using the method in Appendix \ref{eigenvalue}. The complexity-vs-accuracy trade-off parameter is set to $T=1000$. It can be observed from the figure that the eigenvalues of $K(z,z')$ exhibit a step-like behavior: for indexes below the effective DoF, i.e., $\frac{2L_\mathrm{t}}{\lambda}$, the eigenvalues decay slowly and are nearly one; in contrast, when the index is larger than the DoF,
the eigenvalues drop rapidly towards zero. This behavior is consistent with the statements in Remark~\ref{rem_landau}. Notably, this step-like behavior is already pronounced for moderate array sizes, e.g., $L_\mathrm{t}=10\lambda$. This observation underpins our earlier analysis of the communication channel gain, which models it as a finite weighted sum of exponentially distributed random variables. 

In the following results, we consider two baseline schemes for comparison:
\begin{itemize}
	\item \textbf{SPDA-based ISAC:} Both the transmit and receive arrays are SPDAs, where each element has length $\sqrt{\frac{\lambda^2}{4\pi}}$. The inter-element spacing is denoted by $d_{\mathrm{s}}$.
    \item \textbf{CAPA-based FDSAC:} For FDSAC, a fraction $\kappa=0.5$ of the total bandwidth and a fraction $\iota =0.5$ of the transmit power are allocated exclusively to sensing, while the remaining fractions are used for communications. In this case, the SR and ECR are given by $\mathcal{R} _{\mathrm{s}}^{\mathrm{f}}=\frac{\kappa}{L}\log _2\left( 1+\frac{\iota}{\kappa}\overline{\gamma }_{\mathrm{s}}L\alpha _{\mathrm{s}}G_{\mathrm{r}}G_{\mathrm{t}} \right) $ and $\overline{\mathcal{R} }_{\mathrm{c}}^{\mathrm{f}}=\mathbb{E} \left\{ \left( 1-\kappa \right) \log _2\left( 1+\frac{1-\iota}{1-\kappa}\overline{\gamma }_{\mathrm{c}}g_{\mathrm{c}} \right) \right\} $, respectively. Moreover, the achievable SR-CR region of FDSAC is characterized as 
    \begin{align}\label{do_fd_region}
    \mathcal{C} _{\mathrm{f}}=\left\{ ( \mathcal{R} _{\mathrm{s}},\mathcal{R} _{\mathrm{c}} ) \left| \begin{matrix}
	\mathcal{R} _{\mathrm{s}}\in [ 0,\mathcal{R} _{\mathrm{s}}^{\mathrm{f}} ] ,\mathcal{R} _{\mathrm{c}}\in [ 0,\overline{\mathcal{R}} _{\mathrm{c}}^{\mathrm{f}} ] ,\\
	\kappa \in [ 0,1 ] ,\iota \in [ 0,1 ]
\end{matrix} \right.\right\}.
    \end{align}
\end{itemize}

Fig. \ref{fig_ecr} and \ref{fig_op} plot the ECR and OP versus
the communication transmit SNR $\overline{\gamma}_\mathrm{c}$, respectively. It can be seen that the derived analytical results match well with the Monte Carlo simulation results, and the asymptotic results accurately track the simulation results in the high-SNR regime. As expected, the C-C design for CAPA-ISAC achieves the best communication performance. 
For the ECR, the C-C and S-C designs exhibit the
same high-SNR slope, which is higher than that of FDSAC. On the other hand, for outage performance, the diversity orders achieved by the C-C design is significantly higher than that of the S-C design. These observations corroborate the conclusion in Remark \ref{compare}.

Additionally, we can observe that CAPA consistently outperforms the conventional SPDA. In particular, CAPA attains a diversity order comparable to that of SPDA with $d_\mathrm{s}=\frac{\lambda}{2}$, while outperforming SPDA with $d_\mathrm{s}=\lambda$. This can be explained by the effective spatial DoFs. When the SPDA elements are spaced at sub-half-wavelength intervals, the number of the elements is no smaller than $\frac{2L_\mathrm{t}}{\lambda}=\mathsf{DoF}$, and the rank of the channel correlation matrix reaches $\mathsf{DoF}$. Consequently, SPDA is able to maintain the same DoFs and diversity gain as CAPA. However, when the spacing exceeds half a wavelength, the number of SPDA elements (and hence the rank of the correlation matrix) falls below $\mathsf{DoF}$, resulting in reduced DoF and diversity gain compared with CAPA. For ease of comparison, the high-SNR slopes and diversity orders achieved by each design and baseline scheme are summarised in Table~\ref{table1}.
 
\begin{figure} [!t]
	\centering
	\includegraphics[height=2.1in]{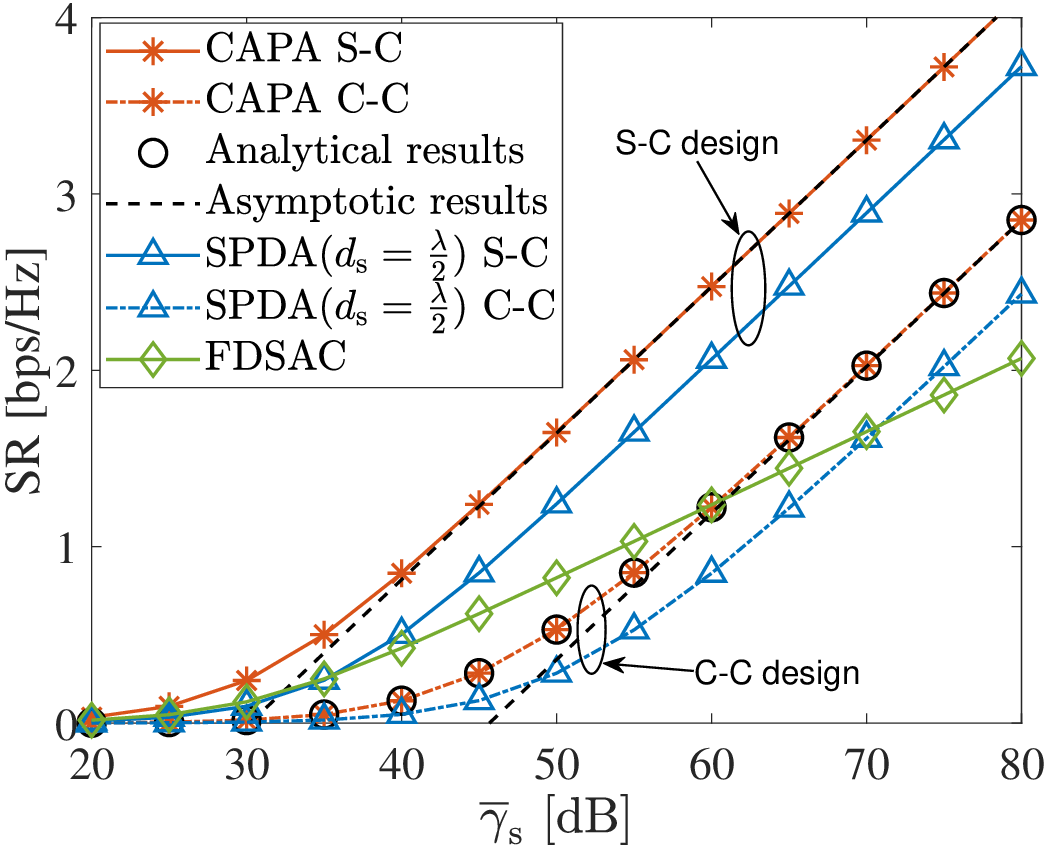}
	\caption{SR versus transmit SNR.}
	\vspace{-2pt}
	\label{fig_sr}
\end{figure}

\begin{figure} [!t]
	\centering
	\includegraphics[height=2.1in]{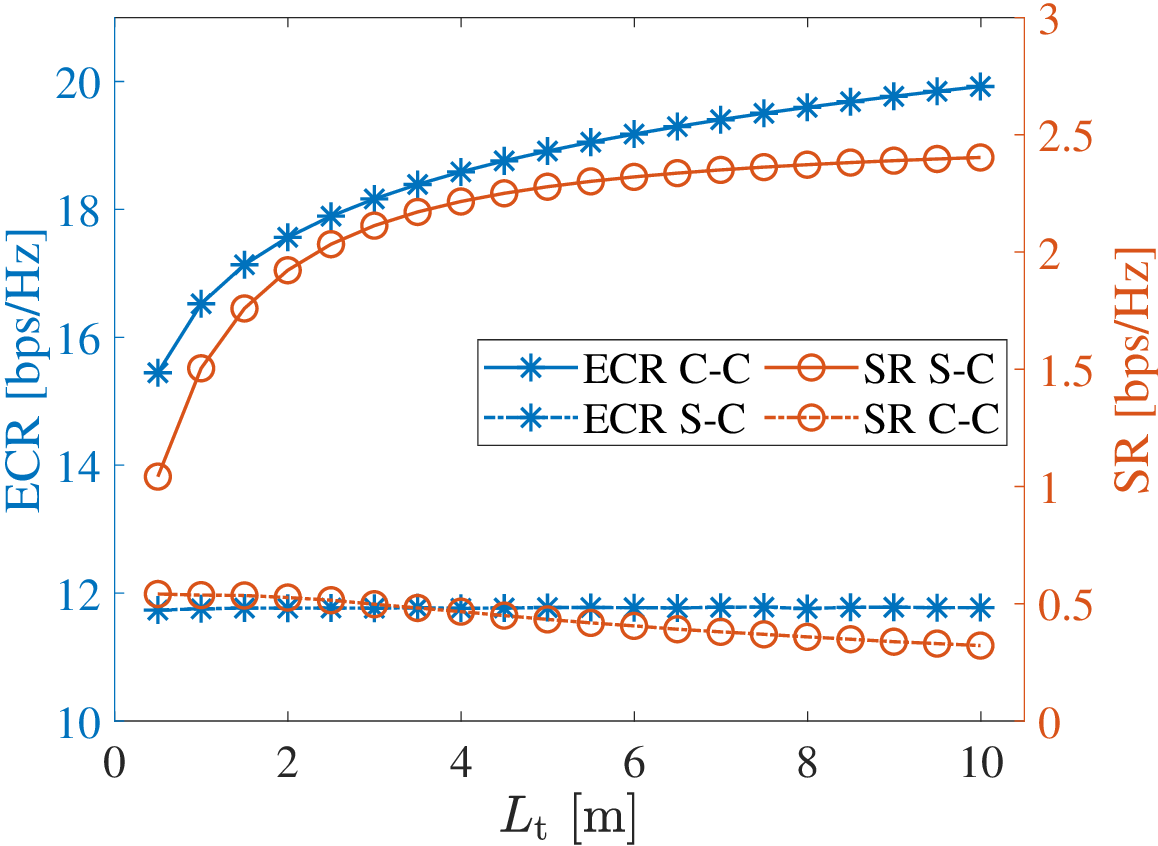}
	\caption{Rates versus transmit aperture length.}
	\vspace{-5pt}
	\label{fig_L}
\end{figure}
 
Fig. \ref{fig_sr} plots the SR versus the sensing SNR $\overline{\gamma}_\mathrm{s}$,
validating the accuracy of both the analytical and asymptotic results. As can be seen, CAPA-based S-C ISAC achieves the best sensing performance. Importantly, CAPA attains a higher SR than SPDA, while both schemes exhibit the same high-SNR slope. Additionally, in the high-SNR regime, the SR curves of ISAC increase more steeply than that of FDSAC, indicating that ISAC provides a larger sensing DoF than conventional FDSAC.
 
 Fig. \ref{fig_L} plots the ECR and SR achieved by CAPA as functions of $L_\mathrm{t}$. It is worth noting that the ECR under the S-C design exhibits weak sensitivity to the aperture length. This is because the closed-form results in \eqref{ins_sc_cr} shows that ECR depends on $L_\mathrm{t}$ primarily through the ratio $\frac{G_\mathrm{t}}{\Xi}$, which is the variance of $\rho\triangleq\int_{\mathcal{A} _{\mathrm{t}}}{h_{\mathrm{c}}}(\mathbf{t})h_{\mathrm{s}}^{*}(\mathbf{t})\mathrm{d}\mathbf{t}$. Enlarging $L_\mathrm{t}$ scales both $G_\mathrm{t}$ and $\Xi$ in a similar order, so $\frac{G_\mathrm{t}}{\Xi}$ remains approximately constant, leading to only marginal changes in ECR with $L_\mathrm{t}$. This is also reflected in the high-SNR approximation in \eqref{asy_sc_cr}, where the ECR is dominated by $-\log _2\frac{G_{\mathrm{t}}}{\Xi}$. Moreover, increasing the transmit aperture length can even decrease the SR under the C-C design. From \eqref{def_ins_cc_sr}, the SR depends on the ratio $\frac{\lvert \rho\rvert^2}{g_\mathrm{c}}$. When $L_\mathrm{t}$ grows, the C–C beamformer becomes increasingly concentrated on maximizing the communication link, which typically boosts $g_\mathrm{c}$ more rapidly than $\lvert \rho\rvert^2$, thereby decreasing the SR. Overall, these results indicate that aperture enlargement yields clear benefits only when the beamformer remains aligned with the target metric, which motivates the proposed Pareto trade-off design that balances the S-C and C-C beamformers.
 
 \begin{figure} [!t]
 	\centering
 	\includegraphics[height=2.1in]{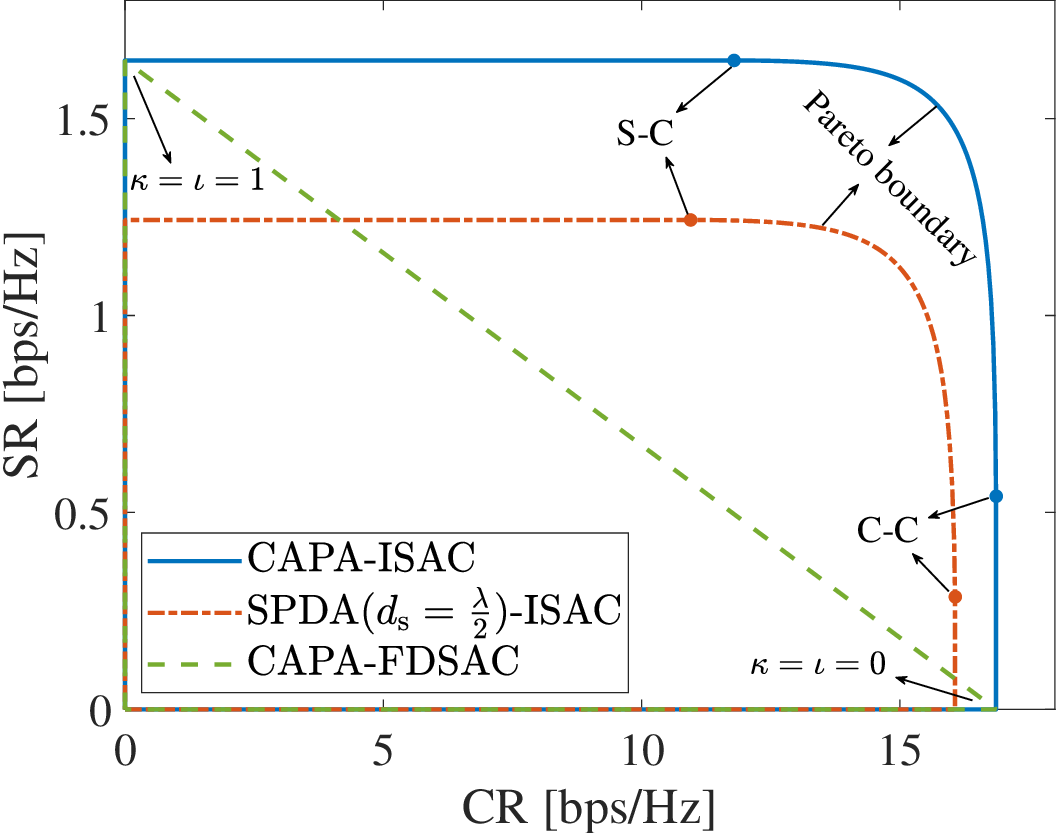}
 	\caption{Achievable SR-CR regions.}
 	\vspace{-5pt}
 	\label{fig_region}
 \end{figure}
 
{\figurename} {\ref{fig_region}} shows the achievable SR-CR regions achieved by three schemes: CAPA-based ISAC (defined in \eqref{do_capa_region}), SPDA-based ISAC, and CAPA-based FDSAC (defined in \eqref{do_fd_region}). For the ISAC scheme, two corner points are highlighted, corresponding to the S-C and C-C designs. By continuously adjusting the trade-off parameter $\tau$ from 1 to 0, we obtain a curve connecting these points, which forms the Pareto boundary of the ISAC rate region. A key observation is that the regions attainable by both SPDA-ISAC and CAPA-FDSAC lie strictly inside the CAPA-ISAC region, demonstrating the advantage of the proposed CAPA-based ISAC framework in jointly improving and balancing sensing and communication performance.



\section{Conclusion}\label{section_conclusion}
This article developed an operator-based framework for CAPA-enabled ISAC under a fading communication channel. By modeling the communication link as a continuous random field and leveraging the spectral structure of its correlation operator (via Landau’s eigenvalue theorem), we obtained a tractable characterization of the channel-gain statistics and, in turn, derived closed-form expressions for key metrics including SRs, ECRs, and OPs under the S–C and C–C designs, together with interpretable high-SNR asymptotics that reveal multiplexing and diversity behaviors. We further characterized the Pareto boundary of the achievable SR–CR region. Numerical results validate the analysis and demonstrate that the achievable SR–CR region of CAPA-based ISAC strictly contains those of both SPDA-based ISAC and CAPA-based FDSAC, confirming CAPA’s advantage in jointly improving and balancing sensing and communication performance. 

\begin{appendix}
	\setcounter{equation}{0}
	\renewcommand\theequation{A\arabic{equation}}

\subsection{Proof of Lemma \ref{lem_Rz}}\label{proof_lem_Rz}
We first calculate the autocorrelation of $\hat{H}\left( \boldsymbol{\kappa } \right) $. We define $\frac{\mathrm{e}^{-\mathrm{j}\mathbf{k}^{\mathsf{T}}\mathbf{p}_{\mathrm{c}}}}{\sqrt{\gamma \left( k_x,k_z \right)}}\triangleq \zeta\left( \mathbf{k} \right) $, which yields
\begin{align}
	&~\mathbb{E}\left\{ \hat{H}( \boldsymbol{\kappa } ) \hat{H}^*( \boldsymbol{\kappa }^{\prime} ) \right\} \notag\\
	&=\mathbb{E}\left\{ \iint_{\mathcal{D} ( \mathbf{k} )}\!{\zeta( \mathbf{k} ) W( \mathbf{k},\boldsymbol{\kappa } ) \mathrm{d}\mathbf{k}}\iint_{\mathcal{D} ( \mathbf{k}' )}\!{\zeta^*( \mathbf{k}^{\prime} ) W^*( \mathbf{k}^{\prime},\boldsymbol{\kappa }^{\prime} ) \mathrm{d}\mathbf{k}^{\prime}} \!\right\} \notag\\
	&=\iint_{\mathcal{D} ( \mathbf{k} )}\!{\iint_{\mathcal{D} ( \mathbf{k}' )}\!{\zeta( \mathbf{k} ) \zeta^*( \mathbf{k}^{\prime} )}}\mathbb{E}\left\{ W( \mathbf{k},\boldsymbol{\kappa } ) W^*( \mathbf{k}^{\prime},\boldsymbol{\kappa }^{\prime} ) \right\} \mathrm{d}\mathbf{k}^{\prime}\mathrm{d}\mathbf{k}\notag\\
	&=\iint_{\mathcal{D} ( \mathbf{k} )}{\iint_{\mathcal{D} ( \mathbf{k}' )}{\zeta( \mathbf{k} ) \zeta^*( \mathbf{k}^{\prime} )}}\delta ( \mathbf{k}-\mathbf{k}^{\prime} ) \delta ( \boldsymbol{\kappa }-\boldsymbol{\kappa }^{\prime} ) \mathrm{d}\mathbf{k}^{\prime}\mathrm{d}\mathbf{k}\notag\\
	&=\delta ( \boldsymbol{\kappa }-\boldsymbol{\kappa }^{\prime} ) \iint_{\mathcal{D} ( \mathbf{k} )}{\left| \zeta( \mathbf{k} ) \right|^2}\mathrm{d}\mathbf{k}=2\pi k_0\delta ( \boldsymbol{\kappa }-\boldsymbol{\kappa }^{\prime} ) .
\end{align}
On this basis, we can calculate the autocorrelation of $\hat{H}_z( \kappa _z )$ as follows:
\begin{align}
	&~\mathbb{E}\left\{ \hat{H}_z( \kappa _z ) \hat{H}_{z}^{*}( \kappa _{z}^{\prime} ) \right\}\notag\\ &=\int_{-\sqrt{k_{0}^{2}-\kappa _{z}^{2}}}^{\sqrt{k_{0}^{2}-\kappa _{z}^{2}}}{\int_{-\sqrt{k_{0}^{2}-{\kappa _{z}^{\prime}}^2}}^{\sqrt{k_{0}^{2}-{\kappa _{z}^{\prime}}^2}}{\frac{\mathbb{E}\left\{ \hat{H}\left( \boldsymbol{\kappa } \right) \hat{H}^*\left( \boldsymbol{\kappa }^{\prime} \right) \right\}}{\sqrt{\gamma \left( \kappa _x,\kappa _z \right) \gamma \left( \kappa _{x}^{\prime},\kappa _{z}^{\prime} \right)}}}}\mathrm{d}\kappa _{x}^{\prime}\mathrm{d}\kappa _x\notag\\
	&=2\pi k_0\!\int_{-\sqrt{k_{0}^{2}-\kappa _{z}^{2}}}^{\sqrt{k_{0}^{2}-\kappa _{z}^{2}}}{\int_{-\sqrt{k_{0}^{2}-{\kappa _{z}^{\prime}}^2}}^{\sqrt{k_{0}^{2}-{\kappa _{z}^{\prime}}^2}}{\frac{\delta ( \kappa _x\!-\!\kappa _{x}^{\prime} ) \delta ( \kappa _z\!-\!\kappa _{z}^{\prime} )}{\sqrt{\gamma( \kappa _x,\kappa _z ) \gamma ( \kappa _{x}^{\prime},\kappa _{z}^{\prime} )}}}}\mathrm{d}\kappa _{x}^{\prime}\mathrm{d}\kappa _x \notag\\
	&=2\pi k_0\delta \left( \kappa _z-\kappa _{z}^{\prime} \right) \int_{-\sqrt{k_{0}^{2}-\kappa _{z}^{2}}}^{\sqrt{k_{0}^{2}-\kappa _{z}^{2}}}{\frac{1}{\sqrt{\gamma \left( \kappa _x,\kappa _z \right) \gamma \left( \kappa _x,\kappa _{z}^{\prime} \right)}}\mathrm{d}\kappa _x}\notag\\
	&=2\pi ^2k_0\delta \left( \kappa _z-\kappa _{z}^{\prime} \right) .
\end{align}
Finally, the autocorrelation function of $\hat{h}_{\rm{c}}( z )$ can be calculated as follows:
\begin{align}
	&~R_{\rm{c}}\left( z,z^{\prime} \right) =\mathbb{E}\left\{ \hat{h}_{\rm{c}}\left( z \right) \hat{h}_{\rm{c}}^*\left( z^{\prime} \right) \right\}\notag\\ &=\frac{A_{\mathrm{s}}^{2}( k_0 )}{16\pi ^4}\!\int_{-k_0}^{k_0}\!{\int_{-k_0}^{k_0}\!{\mathbb{E}\left\{ \hat{H}_z( \kappa _z ) \hat{H}_{z}^{*}( \kappa _{z}^{\prime} ) \right\}}\mathrm{e}^{\mathrm{j}\kappa _zz}}\mathrm{e}^{-\mathrm{j}\kappa _{z}^{\prime}z^{\prime}}\mathrm{d}\kappa _z\mathrm{d}\kappa _{z}^{\prime}\notag\\
	&=\frac{1}{2k_0}\int_{-k_0}^{k_0}{\int_{-k_0}^{k_0}{\delta \left( \kappa _z-\kappa _{z}^{\prime} \right)}\mathrm{e}^{\mathrm{j}\kappa _zz}}\mathrm{e}^{-\mathrm{j}\kappa _{z}^{\prime}z^{\prime}}\mathrm{d}\kappa _z\mathrm{d}\kappa _{z}^{\prime}\notag\\
	&=\frac{1}{2k_0}\int_{-k_0}^{k_0}{\mathrm{e}^{\mathrm{j}\kappa _z\left( z-z^{\prime} \right)}\mathrm{d}\kappa _z}=\frac{\sin \left( k_0(z-z^\prime) \right)}{k_0(z-z^\prime)}.
\end{align}

\subsection{Numerical Evaluation for Eigenvalues of $R_{\rm{c}}( z,z^{\prime} )$}\label{eigenvalue}
The characteristic equation for  $R_{\rm{c}}( z,z^{\prime} )$ is given by
\begin{equation}
	\int_{\mathcal{Z}}{R_{\mathrm{c}}(z,z^{\prime})\phi \left( z^{\prime} \right)}\mathrm{d}z^{\prime}=\lambda \phi \left( z \right) .
\end{equation}
In order to discretize the integral, we use the Gauss-Legendre quadrature given as follows \cite{math}:
\begin{equation}
	\int_a^b{f\left( x \right)}\mathrm{d}x\approx \frac{b-a}{2}\sum_{t=1}^T{\omega _tf\left( \frac{b-a}{2}\eta _t+\frac{a+b}{2} \right)},
\end{equation}
where $T$ is a complexity-vs-accuracy trade-off parameter, $\omega _t$ are the quadrature weights, and $\eta _m$ are the roots of the $T$-th Legendre polynomial \cite{math}. Accordingly, we have 
\begin{equation}
	\int_{\mathcal{Z}}{R_{\mathrm{c}}(z,z^{\prime})\phi \left( z^{\prime} \right)}\mathrm{d}z^{\prime}\approx \sum_{t=1}^T{\frac{L_{\mathrm{t}}}{2}\omega _tR_{\mathrm{c}}\left( z,z_t \right) \phi \left( z_t \right)}=\lambda \phi \left( z \right), 
\end{equation}
where $z_t=\frac{L_{\mathrm{t}}\eta _t+d+L_{\mathrm{t}}}{2}$ for $t\in \left\{ 1,....,T \right\} $. We also take $z$ as the same set of nodes $\left\{ z_{t^{\prime}} \right\} _{t^{\prime}=1}^{T}$, which yields
\begin{equation}
\sum\nolimits_{t=1}^T{\frac{L_{\mathrm{t}}}{2}\omega _tR_{\mathrm{c}}\left( z_{t^{\prime}},z_t \right) \phi \left( z_t \right)}=\lambda \phi \left( z_{t^{\prime}} \right) .
\end{equation} 
This can be equivalently written as 
\begin{equation}
	\mathbf{Z}\boldsymbol{\phi }=\lambda \boldsymbol{\phi },
\end{equation} 
where $\left[ \mathbf{Z} \right] _{t^{\prime},t}=\frac{L_{\mathrm{t}}}{2}\omega _tR_{\mathrm{c}}\left( z_{t^{\prime}},z_t \right) $, and $\boldsymbol{\phi }=\left[ \phi \left( z_1 \right) ,...,\phi \left( z_T \right) \right] ^{\mathsf{T}}$. Consequently, the eigenvalues of the matrix $\mathbf{Z}$ are approximations of the eigenvalues of $R_{\mathrm{c}}\left( z_{t^{\prime}},z_t \right) $, and the accuracy improves as $T$ increases.
\subsection{Proof of Lemma \ref{lem_h_sta}}\label{proof_lem_h_sta}
Given $\Psi  _n\sim \mathcal{C} \mathcal{N} \left( 0,1 \right) $ for $\forall n$, it is clear that $\sum\nolimits_{n=1}^{\infty}{\sqrt{\lambda _n}}\phi _n\left( z \right) \Psi  _n \triangleq\bar{h}\left( z \right) $ is a zero-mean Gaussian random field. The autocorrelation function of $\bar{h}\left( z \right) $ is given by
\begin{equation}
\begin{split}
&\mathbb{E}\left\{ \bar{h}\left( z \right) \bar{h}^*( z^{\prime} ) \right\}\\ &=\sum\nolimits_{n=1}^{\infty}{\sum\nolimits_{n^{\prime}=1}^{\infty}{\sqrt{\lambda _n}\lambda _{n^{\prime}}}}\phi _n( z ) \phi _{n^{\prime}}^{*}( z^{\prime} ) \mathbb{E}\left\{ \Psi  _n\Psi  _{n^{\prime}}^{*} \right\}.
\end{split}
\end{equation}
Since $\left\{ \Psi  _n \right\} _{n=1}^{\mathrm{inf}}$ are i.i.d. ZUCG, we have $E\left\{ \Psi  _n\Psi  _{n^{\prime}}^{*} \right\} =\delta _{n,n^{\prime}}$, which yields
\begin{equation}
	E\left\{ \bar{h}( z ) \bar{h}^*( z^{\prime} ) \right\} =\sum\nolimits_{n=1}^{\infty}{\lambda _n\phi _n( z ) \phi _{n}^{*}( z^{\prime} )}=R_{\rm{c}}( z,z^{\prime} ). 
\end{equation} 
This means that $\bar{h}\left( z \right) $ also has the same autocorrelation with $\hat{h}_{\rm{c}}\left( z \right) $, which completes the proof.
	 
\subsection{Proof of Theorem \ref{the_ins_sc_sr}}\label{proof_the_ins_sc_sr}
The distance between the point $\mathbf{t}=\left[ 0,0,z \right] \in \mathcal{A} _{\mathrm{t}}$ and the target position ${\mathbf{p}}_{\rm{s}}=[p_{\rm{s},x},p_{\rm{s},y},p_{\rm{s},z}]$ is given by
\begin{align}\label{distance}
\left\| \mathbf{t}-\mathbf{p}_{\mathrm{s}} \right\| =\sqrt{z^2-2p_{\mathrm{s},z}z+p_{\mathrm{s},x}^{2}+p_{\mathrm{s},y}^{2}+p_{\mathrm{s},z}^{2}}.	
\end{align}
Therefore, the sensing link $h_{\mathrm{s}}(\mathbf{t})$ can be expressed as follows:
\begin{equation}\label{hshat}
h_{\mathrm{s}}(\mathbf{t})=\frac{\mathrm{e}^{-\mathrm{j}k_0\sqrt{z^2-2p_{\mathrm{s},z}z+p_{\mathrm{s},x}^{2}+p_{\mathrm{s},y}^{2}+p_{\mathrm{s},z}^{2}}}}{\sqrt{4\pi \left( z^2-2p_{\mathrm{s},z}z+p_{\mathrm{s},x}^{2}+p_{\mathrm{s},y}^{2}+p_{\mathrm{s},z}^{2} \right)}}\triangleq\hat{h}_{\mathrm{s}}\left( z \right) .
\end{equation}
As a result, we have
\begin{equation}
\int_{\mathcal{A} _{\mathrm{t}}}\!\!{\left| h_{\mathrm{s}}(\mathbf{t}) \right|^2}\mathrm{d}\mathbf{t}\!=\!\frac{1}{4\pi}\!\int_{d/2}^{d/2+L_{\mathrm{t}}}\!\!{\frac{1}{z^2\!-\!2p_{\mathrm{s},z}z\!+\!p_{\mathrm{s},x}^{2}\!+p_{\mathrm{s},y}^{2}\!+p_{\mathrm{s},z}^{2}}\mathrm{d}z}.
\end{equation}
The above integral can be calculated with the aid of \cite[Eq. (2.103.4)]{integral}, which yields $\int_{\mathcal{A} _{\mathrm{t}}}{\left| h_{\mathrm{s}}(\mathbf{t}) \right|^2}\mathrm{d}\mathbf{t}=G_{\mathrm{t}}$. Similarly, we can obtain $\int_{\mathcal{A} _{\mathrm{r}}}{\left| h_{\mathrm{s}}(\mathbf{r}) \right|^2}\mathrm{d}\mathbf{r}=G_{\mathrm{r}}$.
Inserting them into \eqref{ins_sc_sr_def} gives \eqref{ins_sc_sr}.

\subsection{Proof of Theorem \ref{the_ins_sc_cr} and \ref{the_inc_sc_op}}\label{proof_the_ins_sc_cr}
It is straightforward to show that $\rho\sim \mathcal{C} \mathcal{N} \left( 0,\Xi \right) $. Hence, $\left| \rho \right|^2$ follows an exponential distribution, i.e., $\left| \rho \right|^2\sim \mathrm{Exp}\left( \frac{1}{\Xi} \right) $. Consequently, the ECR is calculated as
\begin{equation}\label{a16}
		\overline{\mathcal{R} }_{\mathrm{c}}^{\mathrm{s}}=\int_0^{\infty}{\log _2\left( 1+\overline{\gamma }_{\mathrm{c}}\frac{x}{\int_{\mathcal{A} _{\mathrm{t}}}{\left| h_{\mathrm{s}}(\mathbf{t}) \right|^2}\mathrm{d}\mathbf{t}} \right)}\frac{1}{\Xi}\mathrm{e}^{-\frac{x}{\Xi}}\mathrm{d}x.	
\end{equation}
Inserting $\int_{\mathcal{A} _{\mathrm{t}}}{\left| h_{\mathrm{s}}(\mathbf{t}) \right|^2}\mathrm{d}\mathbf{t}=G_{\mathrm{t}}$ into \eqref{a16} and using \cite[(4.337.2)]{integral}, the results of \eqref{ins_sc_cr} can be derived.

The OP for the S-C design is calculated as 
\begin{align}
	\mathcal{P} _{\mathrm{s}}&=\mathrm{Pr}\left( \mathcal{R} _{\mathrm{c}}^{\mathrm{s}}<\mathcal{R} _0 \right)\notag\\
	& =\mathrm{Pr}\left( \left| \rho \right|^2<\frac{2^{\mathcal{R} _0}-1}{\overline{\gamma }_{\mathrm{c}}}\int_{\mathcal{A} _{\mathrm{t}}}{\left| h_{\mathrm{s}}(\mathbf{t}) \right|^2}\mathrm{d}\mathbf{t} \right)\notag\\ &=F_{\rho}\left( \frac{2^{\mathcal{R} _0}-1}{\overline{\gamma }_{\mathrm{c}}}\int_{\mathcal{A} _{\mathrm{t}}}{\left| h_{\mathrm{s}}(\mathbf{t}) \right|^2}\mathrm{d}\mathbf{t} \right).  
\end{align}
Here, $F_{\rho}\left( x \right) =1-\mathrm{e}^{-\frac{x}{\Xi}}$ is the cumulative distribution function (CDF) of $\left| \rho \right|^2$. The results of \eqref{ins_sc_op} then follows immediately.

\subsection{Proof of Lemma \ref{lem_ins_cc_sr}}\label{proof_lem_ins_cc_sr}
We can rewrite \eqref{def_ins_cc_sr} as follows:
\begin{equation}\label{a18}
\mathcal{R} _{\mathrm{s}}^{\mathrm{c}}=\frac{1}{L}\log _2\Delta -\frac{1}{L}\log _2g_{\mathrm{c}} ,
\end{equation}
where 
\begin{align}
	\Delta &=\!\int_{\mathcal{Z}}{\lvert \hat{h}_{\mathrm{c}}(z) \rvert^2}\mathrm{d}z+L\alpha _{\mathrm{s}}\overline{\gamma }_{\mathrm{s}}\int_{\mathcal{A} _{\mathrm{r}}}\!\!{\lvert h_{\mathrm{s}}(\mathbf{r}) \rvert^2}\mathrm{d}\mathbf{r}\left| \int_{\mathcal{Z}}{\hat{h}_{\mathrm{s}}(z)}\hat{h}_{\mathrm{c}}^{*}(z)\mathrm{d}z \right|^2\notag\\
	&=\int_{\mathcal{Z}}\int_{\mathcal{Z}}\hat{h}_{\mathrm{c}}^{*}(z)\big( \delta \left( z-z^{\prime} \right) +\notag\\
	&~~~L\alpha _{\mathrm{s}}\overline{\gamma }_{\mathrm{s}}\int_{\mathcal{A} _{\mathrm{r}}}\!{\left| h_{\mathrm{s}}(\mathbf{r}) \right|^2}\mathrm{d}\mathbf{r}\hat{h}_{\mathrm{s}}\left( z \right) \hat{h}_{\mathrm{s}}^{*}\left( z^{\prime} \right) \Big) \hat{h}_{\mathrm{c}}(z^{\prime})\mathrm{d}z\mathrm{d}z^{\prime}.
\end{align}
Furthermore, it is easily shown that $\hat{h}_{\mathrm{c}}(z)$ can be written as $\hat{h}_{\mathrm{c}}(z)=\int_{\mathcal{Z}}{\tilde{R}_{\mathrm{c}}\left( z,z_1 \right) \overline{h}(z_1)\mathrm{d}z_1}$, where $\overline{h}\left( z \right) \sim \mathcal{C} \mathcal{N} \left( 0,1 \right) $ satisfies $\mathbb{E} \{\overline{h}\left( z \right) \overline{h}^*(z^{\prime})\}=\delta (z-z^\prime)$. On this basis, $\Delta$ can be expressed as 
\begin{equation}\label{a20}
	\Delta=\int_{\mathcal{Z}}{\int_{\mathcal{Z}}{\overline{h}^*(z_1)Q\left( z_1,z_2 \right) \overline{h}(z_2)\mathrm{d}z_1}\mathrm{d}z_2}.
\end{equation}
Substituting the eigendecomposition $Q\left( z_1,z_2 \right) =\sum\nolimits_{n=1}^{\infty}{\nu _n}\psi _n\left( z_1 \right) \psi _{n}^{*}\left( z_2 \right) $ into \eqref{a20} gives
\begin{align}
\Delta&=\sum\nolimits_{n=1}^{\infty}{\nu _n}\int_{\mathcal{Z}}{\int_{\mathcal{Z}}{\overline{h}^*(z_1)\psi _n( z_1 ) \psi _{n}^{*}( z_2 ) \overline{h}(z_2)\mathrm{d}z_1}\mathrm{d}z_2}\notag\\
&=\sum\nolimits_{n=1}^{\infty}{\nu _n}\left| \int_{\mathcal{Z}}{\overline{h}^*(z)\psi _n\left( z \right) \mathrm{d}z} \right|^2.
\end{align}
Since it is clear that $\int_{\mathcal{Z}}{\overline{h}^*(z)\psi _n\left( z \right) \mathrm{d}z}$ is a ZUCG random variable, based on the Landau’s eigenvalue theorem, we can write $\Delta=\sum_{n=1}^{\mathsf{DoF}_Q}{\nu _n\left| \Phi _n \right|^2}$, which, together with \eqref{a18}, leads to the results in Lemma~\ref{lem_ins_cc_sr}.

\subsection{Proof of Lemma \ref{Lemma_Subspace}}\label{proof_Lemma_Subspace}
We prove Lemma \ref{Lemma_Subspace} by contradiction. Assume that $w({\mathbf{t}})$ does not lie in the subspace spanned by $\{h_{\rm{s}}^{*}({\mathbf{t}}),h_{\rm{c}}^{*}(\mathbf{t})\}$. Under this assumption, $w({\mathbf{t}})$ must contain a component that is orthogonal to both $h_{\rm{s}}^{*}({\mathbf{t}})$ and $h_{\rm{c}}^{*}(\mathbf{t})$, which can be represented as follows:
\begin{align}
w({\mathbf{t}})=\hat{w}(\mathbf{t})+\hat{w}_{\bot}(\mathbf{t}),
\end{align}
where  $\hat{w}(\mathbf{t})=ah_{\rm{s}}^{*}({\mathbf{t}})+bh_{\rm{c}}^{*}(\mathbf{t})$ with $a,b\in{\mathbb{C}}$ is an arbitrary function that lies in the subspace spanned by $\{h_{\rm{s}}^{*}({\mathbf{t}}),h_{\rm{c}}^{*}(\mathbf{t})\}$, while the term $\hat{w}_{\bot}(\mathbf{t})$ represents the orthogonal component that lies outside this subspace. Consequently, we can obtain
\begin{subequations}\label{Subspace_Approach_Step2}
\begin{align}
\Upsilon_{\rm{s}}({{w}}({\mathbf{t}}))&=\left\lvert\int_{{\mathcal{A}}_{\rm{t}}}h_{\rm{s}}({\mathbf{t}})\hat{w}(\mathbf{t}){\rm{d}}{\mathbf{t}}\right\rvert^2
=\Upsilon_{\rm{s}}(\hat{w}(\mathbf{t})),\\
\Upsilon_{\rm{c}}({{w}}({\mathbf{t}}))&=\left\lvert\int_{{\mathcal{A}}_{\rm{t}}}h_{\rm{c}}({\mathbf{t}})\hat{w}(\mathbf{t}){\rm{d}}{\mathbf{t}}\right\rvert^2
=\Upsilon_{\rm{c}}(\hat{w}(\mathbf{t})).
\end{align}
\end{subequations} 
Additionally, the power constraint yields
\begin{align}\label{Subspace_Approach_Step3}
\int_{{\mathcal{A}}_{\rm{t}}}\lvert w({\mathbf{t}})\rvert^2{\rm{d}}{\mathbf{t}}=
\int_{{\mathcal{A}}_{\rm{t}}}\lvert \hat{w}(\mathbf{t})\rvert^2{\rm{d}}{\mathbf{t}}+
\int_{{\mathcal{A}}_{\rm{t}}}\lvert \hat{w}_{\bot}({\mathbf{t}})\rvert^2{\rm{d}}{\mathbf{t}}.
\end{align}
From \eqref{Subspace_Approach_Step2} and \eqref{Subspace_Approach_Step3}, it can be observed that the orthogonal component $\check{w}_{\bot}(\mathbf{t})$ neither enhances SR nor CR, while it still consumes extra power. As $\check{w}_{\bot}(\mathbf{t})$ provides no benefit to system performance but increases power usage, it should be eliminated. Consequently, the optimal beamformer can be expressed as $w({\mathbf{t}})=\hat{w}(\mathbf{t})=ah_{\rm{s}}^{*}({\mathbf{t}})+bh_{\rm{c}}^{*}(\mathbf{t})$, which lies within the subspace spanned by $\{h_{\rm{s}}^{*}({\mathbf{t}}),h_{\rm{c}}^{*}(\mathbf{t})\}$.
\subsection{Proof of Lemma \ref{lem_w_star}}\label{proof_lem_w_star}
The KKT conditions for problem \eqref{pareto_problme_2} is given as follows:
\begin{align}
	&\nabla(-\Upsilon)+\nu \nabla(\lVert{\mathbf{w}}\rVert^2-1)+\mu_1\nabla (\tau\Upsilon-\lvert{{\mathbf{h}}}_{\mathrm{s}}^{\mathsf{T}}\mathbf{w}\rvert^2)\notag\\
	&+\mu_2\nabla ((1-\tau)\Upsilon-\lvert{\mathbf{h}}_{\mathrm{c}}^{\mathsf{T}}\mathbf{w}\rvert^2)={\mathbf{0}}, \label{KKT_1} \\ 
	&\mu_1(\tau\Upsilon-\lvert{{\mathbf{h}}}_{\mathrm{s}}^{\mathsf{T}}\mathbf{w}\rvert^2)=0,\label{KKT_2} \\ &\mu_2((1-\tau)\Upsilon-\lvert{\mathbf{h}}_{\mathrm{c}}^{\mathsf{T}}\mathbf{w}\rvert^2)=0,\label{KKT_3}\\
	&\mu_1\geq0,\ \mu_2\geq0,\ \nu\in{\mathbb{R}},
\end{align}
where $\nu$, $\mu_1$, and $\mu_2$ are real-valued Lagrangian multipliers. We can obtain from \eqref{KKT_1} that
\begin{align}
		&\tau\mu_1+(1-\tau)\mu_2=1,\label{KKT_1_Dev2}\\		&(\mu_1{{\mathbf{h}}}_{\mathrm{s}}^*{{\mathbf{h}}}_{\mathrm{s}}^{\mathsf{T}}+\mu_2{\mathbf{h}}_{\mathrm{c}}^*{\mathbf{h}}_{\mathrm{c}}^{\mathsf{T}}){\mathbf{w}}=\nu{\mathbf{w}}.\label{KKT_1_Dev1}   
\end{align}
From \eqref{KKT_1_Dev2}, it can be observed that $\mu_1$ and $\mu_2$ cannot both be zero at the same time. Hence, we analyze the problem under the following three possible cases.
\subsubsection{$\mu_1>0$ and $\mu_2=0$} In this case, we can obtain from \eqref{KKT_2} that $\Upsilon=\frac{\lvert{\mathbf{h}}_{\mathrm{s}}^{\mathsf{T}}\mathbf{w}\rvert^2}{\tau}$. To maximize $\Upsilon$, the optimal solution is ${\mathbf{w}_\tau}=\frac{{\mathbf{h}}_{\mathrm{s}}^*}{\lVert{\mathbf{h}}_{\mathrm{s}}\rVert}=\frac{\mathbf{h}_{\mathrm{s}}^*}{\sqrt{G_{\mathrm{t}}}}$, leading to $\Upsilon_\star=\frac{G_{\mathrm{t}}}{\tau}$. By substituting this into the constraint $\lvert{\mathbf{h}}_{\mathrm{c}}^{\mathsf{T}}{\mathbf{w}}\rvert^2\geq(1-\tau)\Upsilon$ gives the range $\tau\in \left[ \frac{G_{\mathrm{t}}^{2}}{G_{\mathrm{t}}^{2}+\left| \rho  \right|^2},1 \right] $.
\subsubsection{$\mu_1=0$ and $\mu_2>0$} By following the same procedure as in above case, we derive ${\mathbf{w}_\tau}=\frac{{\mathbf{h}}_{\mathrm{c}}^*}{\sqrt{g_{\rm{c}}}}$ with $\tau\in \left[ 0,\frac{\left| \rho  \right|^2}{\left| \rho  \right|^2+g_{\mathrm{c}}^{2}} \right] $.
\subsubsection{$\mu_1>0$ and $\mu_2>0$}
From \eqref{KKT_2} and \eqref{KKT_3}, we have
\begin{equation}\label{KKT_1_Dev3}	\Upsilon=\frac{1}{\tau}\lvert{\mathbf{h}}_{\mathrm{s}}^{\mathsf{T}}\mathbf{w}\rvert^2=\frac{1}{1-\tau}\lvert{\mathbf{h}}_{\mathrm{c}}^{\mathsf{T}}{\mathbf{w}}\rvert^2.
\end{equation}
Based on \eqref{KKT_1_Dev1}, $\mathbf{w}$ can be expressed as a linear combination of $\mathbf{h}_{\mathrm{s}}$ and $\mathbf{h}_{\mathrm{c}}$:
\begin{align}\label{wab}
	\mathbf{w}=\frac{\mu _1\mathbf{h}_{\mathrm{s}}^{\mathsf{T}}\mathbf{w}}{\nu}\mathbf{h}_{\mathrm{s}}^*+\frac{\mu _2\mathbf{h}_{\mathrm{c}}^{\mathsf{T}}\mathbf{w}}{\nu}\mathbf{h}_{\mathrm{c}}^*\triangleq \alpha\mathbf{h}_{\mathrm{s}}^*+\beta\mathbf{h}_{\mathrm{c}}^*.
\end{align}
Subsequently, from \eqref{KKT_1_Dev3}, we can obtain
\begin{equation}\label{alphabeta}
	\frac{\alpha}{\beta}=\frac{\mu _1\sqrt{\tau}}{\mu _2\sqrt{1-\tau}\mathrm{e}^{-\mathrm{j}\angle \rho }}.
\end{equation}
Substituting \eqref{wab} into \eqref{KKT_1_Dev1} gives
\begin{align}\label{mu_eta}
	{\mu_1}(G_{\mathrm{t}}+\rho{\beta}/{\alpha})={\mu_2}(g_{\mathrm{c}}+\rho ^*{\alpha}/{\beta})=\nu,   
\end{align}
which, together with \eqref{KKT_1_Dev2} and \eqref{alphabeta}, yields 
\begin{align}
&\mu _1=\frac{g_{\mathrm{c}}-\sqrt{( 1-\tau ) /\tau}\left| \rho  \right|}{( 1-\tau ) G_{\mathrm{t}}+\tau g_{\mathrm{c}}-2\sqrt{\tau ( 1-\tau )}\left| \rho  \right|},\\
&\mu _2=\frac{G_{\mathrm{t}}-\sqrt{\tau /( 1-\tau )}\left| \rho  \right|}{( 1-\tau ) G_{\mathrm{t}}+\tau g_{\mathrm{c}}-2\sqrt{\tau ( 1-\tau )}\left| \rho  \right|}.
\end{align}
Under the conditions $\mu_1>0$ and $\mu_2>0$, $\tau$ should satisfies $\tau \in \left( \frac{\left| \rho  \right|^2}{\left| \rho  \right|^2+G_{\mathrm{t}}^{2}},\frac{g_{\mathrm{d}}^{2}}{g_{\mathrm{d}}^{2}+\left| \rho  \right|^2} \right) $. According to \eqref{alphabeta}, the optimal $\mathbf{w}$ in this scenario can be formulated as follows:
\begin{equation}
	\mathbf{w}_\tau=\frac{\mu _1\sqrt{\tau}}{\varsigma}\mathbf{h}_{\mathrm{s}}^*+\frac{\mu _2\sqrt{1-\tau}\mathrm{e}^{-\mathrm{j}\angle \rho}}{\varsigma}\mathbf{h}_{\mathrm{c}}^*,
\end{equation}
where $\varsigma$ is for normalization. Taken together, the proof of Lemma \ref{lem_w_star} is completed.
\end{appendix}

\bibliographystyle{IEEEtran}
\bibliography{IEEEabrv}

@STRING{IEEE_WM_COM        = "{IEEE} Wireless Commun."}

@STRING{IEEE_ANTEN        ="{IEEE} Trans. Antennas Propag."}

@book{integral,
	author = {I. S. Gradshteyn and I. M. Ryzhik}, 
	title = {Table of Integrals, Series and Products},
	publisher = {Academic Press},
	year = 2007,
	edition = 7,
	address = {New York, NY, USA}
}

@book{math,
	title={NIST Handbook of Mathematical Functions},
	author={Olver, Frank W and Lozier, Daniel W and Boisvert, Ronald F and Clark, Charles W},
	publisher={Cambridge univ. press},
	year=2010,
	address = {Cambridge, U.K.}
}

@article{poon2005degrees,
  title={Degrees of freedom in multiple-antenna channels: A signal space approach},
  author={Poon, Ada SY and Brodersen, Robert W and Tse, David NC},
  journal={IEEE Trans. Inf. Theory},
  volume={51},
  number={2},
  pages={523--536},
  year={2005},
  month=feb,
  publisher={IEEE}
}

@article{zwick2002stochastic,
  title={A stochastic multipath channel model including path directions for indoor environments},
  author={Zwick, Thomas and Fischer, Christian and Wiesbeck, Werner},
  journal={IEEE J. Sel. Areas Commun.},
  volume={20},
  number={6},
  pages={1178--1192},
  year={2002},
  month=aug,
  publisher={IEEE}
}

@book{richards2005fundamentals,
  title={Fundamentals of Radar Signal Processing},
  author={Richards, Mark A},
  year={2014},
  publisher={New York, NY, USA: McGraw-Hill Educ.}
}

@article{ouyang2024impact,
  title={On the impact of reactive region on the near-field channel gain},
  author={Ouyang, Chongjun and Wang, Zhaolin and Zhao, Boqun and Zhang, Xingqi and Liu, Yuanwei},
  journal={IEEE Commun. Lett.},
  volume={28},
  number={10},
  pages={2417--2421},
  year={2024},
  month=oct,
  publisher={IEEE}
}

@article{zhao2024continuous,
  title={Continuous aperture array ({CAPA})-based wireless communications: Capacity characterization},
  author={Zhao, Boqun and Ouyang, Chongjun and Zhang, Xingqi and Liu, Yuanwei},
  journal={IEEE Trans. Wireless Commun.},
   year={2025},
  volume={24},
  number={12},
  pages={10456-10473}
}

@article{ouyang2023integrated,
  title={Integrated sensing and communications: A mutual information-based framework},
  author={Ouyang, Chongjun and Liu, Yuanwei and Yang, Hongwen and Al-Dhahir, Naofal},
  journal={IEEE Commun. Mag.},
  volume={61},
  number={5},
  pages={26--32},
  year={2023},
  month=may,
  publisher={IEEE}
}

@article{pizzo2022spatial,
  title={Spatial characterization of electromagnetic random channels},
  author={Pizzo, Andrea and Sanguinetti, Luca and Marzetta, Thomas L},
  journal={IEEE Open J. Commun. Soc.},
  volume={3},
  pages={847--866},
  year={2022},
  publisher={IEEE}
}

@ARTICLE{boqun_jstsp,
	title={Modeling and Analysis of Near-Field {ISAC}}, 
	author={Zhao, Boqun and Ouyang, Chongjun and Liu, Yuanwei and Zhang, Xingqi and Poor, H. Vincent},
	journal={IEEE J. Sel. Top. Signal Process.},
	year={2024},
	volume={18},
	number={4},
	pages={678-693}
}

@article{liu2024capa,
  title={{CAPA}: Continuous-aperture arrays for revolutionizing {6G} wireless communications},
  author={Liu, Yuanwei and Ouyang, Chongjun and Wang, Zhaolin and Xu, Jiaqi and Mu, Xidong and Ding, Zhiguo},
  journal={IEEE Wireless Commun.},
    year={2025},
    month={Aug.},
  volume={32},
  number={4},
  pages={38-45},
}

@article{opt2,
				title={Pattern-division multiplexing for multi-user continuous-aperture {MIMO}},
				author={Zhang, Zijian and Dai, Linglong},
				journal={IEEE J. Sel. Areas Commun.},
				volume={41},
				number={8},
				pages={2350--2366},
				year={2023},
				month={Aug.},
				publisher={IEEE}
			}

@ARTICLE{opt3,
			author={Qian, Mengyu and You, Li and Xia, Xiang-Gen and Gao, Xiqi},
			journal={{IEEE} Trans. Wireless Commun.}, 
			title={On the Spectral Efficiency of Multi-User Holographic {MIMO} Uplink Transmission}, 
			year={2024},
			month={Oct.},
			volume={23},
			number={10},
			pages={15421-15434},
			doi={10.1109/TWC.2024.3429495}}

@article{opt4,
  title={Optimal Beamforming for Multi-User Continuous Aperture Array ({CAPA}) Systems},
  author={Wang, Zhaolin and Ouyang, Chongjun and Liu, Yuanwei},
  journal={IEEE Trans. Commun.},
  year={2025},
  month={Oct.},
  volume={73},
  number={10},
  pages={9207-9221}
}

@article{opt5,
					title={Beamforming Optimization for Continuous Aperture Array ({CAPA})-based Communications},
					author={Wang, Zhaolin and Ouyang, Chongjun and Liu, Yuanwei},
					journal={IEEE Trans. Wireless Commun.},
					  year={2025},
					volume={24},
					number={6},
					pages={5099-5113}
				}

@ARTICLE{per3,
				author={Wan, Zhongzhichao and Zhu, Jieao and Dai, Linglong},
				journal={IEEE Commun. Lett.}, 
				title={Can Continuous Aperture {MIMO} Obtain More Mutual Information Than Discrete {MIMO}?}, 
				year={2023},
				month={Dec.},
				volume={27},
				number={12},
				pages={3185-3189},
				doi={10.1109/LCOMM.2023.3329134}}

@article{ouyang_diversity,
		title={Diversity and Multiplexing for Continuous Aperture Array ({CAPA})-Based Communications},
		author={Ouyang, Chongjun and Wang, Zhaolin and Zhang, Xingqi and Liu, Yuanwei},
		journal={IEEE Trans. Wireless Commun.},
		year={2025},
		volume={24},
		number={10},
		pages={8662-8680}
	}

@ARTICLE{LiuAn,
  author={Liu, An and others},
  journal={IEEE Commun. Surveys Tuts.}, 
  title={A Survey on Fundamental Limits of Integrated Sensing and Communication}, 
  year={2022},
  month={Feb.},
  volume={24},
  number={2},
  pages={994-1034},
  doi={10.1109/COMST.2022.3149272}}

@ARTICLE{holoISAC_2,
  author={Zhang, Haobo and others},
  journal={{IEEE} J. Sel. Areas Commun.}, 
  title={Holographic Integrated Sensing and Communication}, 
  year={2022},
month={Jul.},
  volume={40},
  number={7},
  pages={2114-2130},
  doi={10.1109/JSAC.2022.3155548}}

@article{liu2022integrated,
  title={Integrated sensing and communications: Toward dual-functional wireless networks for {6G} and beyond},
  author={Liu, Fan and Cui, Yuanhao and Masouros, Christos and Xu, Jie and Han, Tony Xiao and Eldar, Yonina C and Buzzi, Stefano},
  journal={IEEE J. Sel. Areas Commun.},
  volume={40},
  number={6},
  pages={1728--1767},
  year={2022},
  month=jun,
  publisher={IEEE}
}

@article{pizzo2022nyquist,
  title={Nyquist sampling and degrees of freedom of electromagnetic fields},
  author={Pizzo, Andrea and de Jesus Torres, Andrea and Sanguinetti, Luca and Marzetta, Thomas L},
  journal={IEEE Trans. Signal Process.},
  volume={70},
  pages={3935--3947},
  year={2022},
  publisher={IEEE}
}

@article{chen2023cramer,
  title={{Cram{\'e}r-Rao} bounds of near-field positioning based on electromagnetic propagation model},
  author={Chen, Ang and Chen, Li and Chen, Yunfei and You, Changsheng and Wei, Guo and Yu, F Richard},
  journal={IEEE Trans. Veh. Technol.},
  volume={72},
  number={11},
  pages={13808--13825},
  year={2023},
  month=nov,
  publisher={IEEE}
}

@article{chen2024near,
  title={Near-field positioning and attitude sensing based on electromagnetic propagation modeling},
  author={Chen, Ang and Chen, Li and Chen, Yunfei and Zhao, Nan and You, Changsheng},
  journal={IEEE J. Sel. Areas Commun.},
  volume={42},
  number={9},
  pages={2179--2195},
  year={2024},
  month=sep,
  publisher={IEEE}
}

@article{jiang2024cram,
  title={{Cram{\'e}r-Rao} bounds Optimization for Near-Field Sensing with Continuous-Aperture Arrays},
  author={Jiang, Hao and Wang, Zhaolin and Liu, Yuanwei and Nallanathan, Arumugam},
  journal={IEEE Trans. Wireless Commun.},
  year={2026},
  volume={25},
  number={},
  pages={7032-7047}
}

@article{landau1980eigenvalue,
	title={Eigenvalue distribution of time and frequency limiting},
	author={Landau, Henry J and Widom, Harold},
	journal={J. Math. Anal. and Appl.},
	volume={77},
	number={2},
	pages={469--481},
	year={1980},
	publisher={Academic Press}
}

@ARTICLE{landau_tit,
	author={Franceschetti, Massimo},
	journal={IEEE Trans. Inf. Theory}, 
	title={On Landau’s Eigenvalue Theorem and Information Cut-Sets}, 
	year={2015},
	month={Sep.},
	volume={61},
	number={9},
	pages={5042-5051},
	doi={10.1109/TIT.2015.2456874}}

@article{sum_gamma,
		title={The distribution of the sum of independent gamma random variables},
		author={Moschopoulos, Peter G},
		journal={Ann. Inst. Statist. Math.},
		volume={37},
		number={1},
		pages={541--544},
		year={1985}
	}

@article{zhao2025downlink,
    	title={Downlink and uplink {ISAC} in continuous-aperture array ({CAPA}) systems},
    	author={Zhao, Boqun and Ouyang, Chongjun and Zhang, Xingqi and Shin, Hyundong and Liu, Yuanwei},
    	journal={IEEE Trans. Wireless Commun.},
    	year={2026},
    	volume={25},
    	number={},
    	pages={3592-3609}
    }

@ARTICLE{bcrb,
    	author={Xie, Lei and Liu, Fan and Luo, Jiajin and Song, Shenghui},
    	journal={IEEE Trans. Commun.}, 
    	title={Sensing Mutual Information with Random Signals in {Gaussian} Channels}, 
    	 year={2025},
    	 month={Oct.},
    	volume={73},
    	number={10},
    	pages={9437-9452}}

@ARTICLE{dardari,
    	author={Dardari, Davide},
    	journal={IEEE J. Sel. Areas Commun.}, 
    	title={Communicating With Large Intelligent Surfaces: Fundamental Limits and Models}, 
    	year={2020},
    	month={Nov.},
    	volume={38},
    	number={11},
    	pages={2526-2537},
    	doi={10.1109/JSAC.2020.3007036}}

@ARTICLE{zhangyue,
    	author={Zhang, Yue and others},
    	journal={IEEE Trans. Wireless Commun.}, 
    	title={Exploiting Continuous-Aperture Arrays in Integrated Sensing and Communication Systems}, 
    	year={2025},
    	month={Nov.},
    	volume={24},
    	number={11},
    	pages={9539-9555},
    	doi={10.1109/TWC.2025.3573872}}

@ARTICLE{qian,
    	author={Qian, Mengyu and Mu, Xidong and You, Li and Matthaiou, Michail},
    	journal={IEEE Trans. Commun.}, 
    	title={Continuous Aperture Array ({CAPA})-Based Multi-Group Multicast Communications}, 
    	year={2026},
    	month={Jan.},
    	volume={74},
    	number={},
    	pages={3787-3801},
    	doi={10.1109/TCOMM.2026.3655762}}

@article{chen2025implicit,
    	title={Implicit Neural Representation of Beamforming for Continuous Aperture Array ({CAPA}) System},
    	author={Chen, Shiyong and Guo, Jia and Han, Shengqian},
    	journal={arXiv preprint arXiv:2507.03609},
    	year={2025}
    }

@ARTICLE{gan2024coverage,
    	author={Gan, Xu and Huang, Chongwen and Yang, Zhaohui and Chen, Xiaoming and He, Jiguang and Zhang, Zhaoyang and Yuen, Chau and Guan, Yong Liang and Debbah, M{\'e}rouane},
    	journal={{IEEE} J. Sel. Areas Commun.},
    	title={Coverage and Rate Analysis for Integrated Sensing and Communication Networks},
    	year={2024},
    	month={Sep.},
    	volume={42},
    	number={9},
    	pages={2213-2227},
    	doi={10.1109/JSAC.2024.3413989}}

@ARTICLE{wei2025isac,
    	author={Wei, Zhiqing and Jia, Jinzhu and Niu, Yangyang and Wang, Lin and Wu, Huici and Yang, Heng and Feng, Zhiyong},
    	journal={{IEEE} Internet Things J.},
    	title={Integrated Sensing and Communication Channel Modeling: A Survey},
    	year={2025},
    	volume={12},
    	number={12},
    	pages={18850-18864},
    	doi={10.1109/JIOT.2024.3449377}}

@ARTICLE{bazzi2023outage,
    	author={Bazzi, Ahmad and Chafii, Marwa},
    	journal={{IEEE} Trans. Wireless Commun.},
    	title={On Outage-Based Beamforming Design for Dual-Functional Radar-Communication {6G} Systems},
    	year={2023},
    	month={Aug.},
    	volume={22},
    	number={8},
    	pages={5598-5612},
    	doi={10.1109/TWC.2023.3235617}}

@ARTICLE{meng2025cooperative,
    	author={Meng, Kaitao and Masouros, Christos and Petropulu, Athina P. and Hanzo, Lajos},
    	journal={{IEEE} Trans. Wireless Commun.},
    	title={Cooperative {ISAC} Networks: Performance Analysis, Scaling Laws, and Optimization},
    	year={2025},
    	month={Feb.},
    	volume={24},
    	number={2},
    	pages={877-892},
    	doi={10.1109/TWC.2024.3491356}}

@ARTICLE{liu2022cramer,
    	author={Liu, Fan and Liu, Ya-Feng and Li, Ang and Masouros, Christos and Eldar, Yonina C.},
    	journal={{IEEE} Trans. Signal Process.},
    	title={{Cram{\'e}r-Rao} Bound Optimization for Joint Radar-Communication Beamforming},
    	year={2022},
    	volume={70},
    	pages={240-253}}

@ARTICLE{dong2023joint,
    	author={Dong, Yong and Liu, Fan and Xiong, Yifeng},
    	journal={{IEEE} Commun. Lett.},
    	title={Joint Receiver Design for Integrated Sensing and Communications},
    	year={2023},
    	month={Jul.},
    	volume={27},
    	number={7},
    	pages={1854-1858}}

@ARTICLE{tang2019spectrally,
    	author={Tang, Bo and Li, Jian},
    	journal={{IEEE} Trans. Signal Process.},
    	title={Spectrally Constrained {MIMO} Radar Waveform Design Based on Mutual Information},
    	year={2019},
    	month={Feb.},
    	volume={67},
    	number={3},
    	pages={821-834}}

@ARTICLE{jiang2024electromagnetic,
    	author={Jiang, Yu and Gao, Feifei and Jin, Shi},
    	journal={{IEEE} Trans. Wireless Commun.},
    	title={Electromagnetic Property Sensing: A New Paradigm of Integrated Sensing and Communication},
    	year={2024},
    	month={Oct.},
    	volume={23},
    	number={10},
    	pages={13471-13483}}

@ARTICLE{wheeler1965,
    	author={Wheeler, Harold},
    	journal=IEEE_ANTEN,
    	title={Simple relations derived from a phased-array antenna made of an infinite current sheet},
    	year={1965},
    	month={Jul.},
    	volume={13},
    	number={4},
    	pages={506-514}}

@ARTICLE{huang2020holographic,
    	author={Huang, Chongwen and Hu, Sha and Alexandropoulos, George C. and Zappone, Alessio and Yuen, Chau and Zhang, Rui and Di Renzo, Marco and Debbah, M{\'e}rouane},
    	journal=IEEE_WM_COM,
    	title={Holographic {MIMO} surfaces for {6G} wireless networks: Opportunities, challenges, and trends},
    	year={2020},
    	month={Oct.},
    	volume={27},
    	number={5},
    	pages={118-125}}

@ARTICLE{shlezinger2021dynamic,
    	author={Shlezinger, Nir and Alexandropoulos, George C. and Imani, Mohammadreza F. and Eldar, Yonina C. and Smith, David R.},
    	journal=IEEE_WM_COM,
    	title={Dynamic metasurface antennas for {6G} extreme massive {MIMO} communications},
    	year={2021},
    	month={Apr.},
    	volume={28},
    	number={2},
    	pages={106-113}}

@ARTICLE{araghi2021holographic,
    	author={Araghi, Ali and Khalily, Mohsen and Xiao, Pei and Tafazolli, Rahim},
    	journal={{IEEE} Microw. Mag.},
    	title={Holographic-based leaky-wave structures: Transformation of guided waves to leaky waves},
    	year={2021},
    	month={Jun.},
    	volume={22},
    	number={6},
    	pages={49-63}}

@ARTICLE{prather2017optically,
    	author={Prather, Dennis W. and others},
    	journal=IEEE_ANTEN,
    	title={Optically upconverted, spatially coherent phased-array-antenna feed networks for beam-space {MIMO} in {5G} cellular communications},
    	year={2017},
    	month={Dec.},
    	volume={65},
    	number={12},
    	pages={6432-6443}}

@INPROCEEDINGS{sazegar2022full,
    	author={Sazegar, Mohammadreza and others},
    	booktitle={Proc. {IEEE} Int. Symp. Antennas Propag. USNC-URSI Radio Sci. Meeting (AP-S/URSI)},
    	title={Full duplex {SATCOM} {ESA} with switchable polarization and wide tunable bandwidth using a tripleband metasurface aperture},
    	year={2022},
    	pages={639-640}}

@MISC{pivotal2019holographic,
    	author={Staff, Pivotal},
    	title={Holographic beam forming and phased arrays},
    	howpublished={Pivotal Commware, Inc., Kirkland, WA, USA},
    	year={2019}}

@ARTICLE{smith2017analysis,
    	author={Smith, David R. and others},
    	journal={Phys. Rev. Appl.},
    	title={Analysis of a waveguide-fed metasurface antenna},
    	year={2017},
    	month={Nov.},
    	volume={8},
    	number={5},
    	pages={054048}}

@ARTICLE{yuan2024interdigital,
    	author={Yuan, Y. and Jiang, C. and Zhu, J.},
    	journal={Appl. Opt.},
    	title={Interdigital waveguide grating antenna array for an optical phased array},
    	year={2024},
    	month={Oct.},
    	volume={63},
    	number={28},
    	pages={7370-7377}}

@MISC{heath2025trihybrid,
    	author={Heath, R. W. and Carlson, J. and Deshpande, N. V. and Castellanos, M. R. and Akrout, M. and Chae, C.-B.},
    	title={The tri-hybrid {MIMO} architecture},
    	howpublished={arXiv:2505.21971},
    	year={2025}}

@ARTICLE{shao2025sixd,
    	author={Shao, X. and Jiang, Q. and Zhang, R.},
    	journal={{IEEE} Trans. Wireless Commun.},
    	title={{6D} movable antenna based on user distribution: Modeling and optimization},
    	year={2025},
    	month={Jan.},
    	volume={24},
    	number={1},
    	pages={355-370}}

@ARTICLE{li2026reconfig,
    	author={Li, Y. and Qi, C. and Mao, S. and Dobre, O. A.},
    	journal={{IEEE} Trans. Wireless Commun.},
    	title={Tri-hybrid beamforming for radiation-center reconfigurable antenna array: Spectral efficiency and energy efficiency},
    	year={2026},
    	volume={25},
    	pages={12263-12278}}
\end{document}